\newcommand{\rp}{\mathrm{p}}
\newcommand{\rw}{\mathrm{w}}
\newcommand{\ruw}{\mathrm{uw}}
\newcommand{\onen}{\frac{1}{n}}
\newcommand{\oner}{\frac{1}{r}}
\newcommand{\sumn}{\sum_{i=1}^{n}}
\newcommand{\sumjn}{\sum_{j=1}^{n}}
\newcommand{\sumr}{\sum_{i=1}^{r}}
\renewcommand{\Pr}{\mathbb{P}}
\newcommand{\Exp}{\mathbb{E}}
\newcommand{\Var}{\mathbb{V}}
\newcommand{\mmse}{\mathrm{A-OS}}
\newcommand{\mvc}{\mathrm{L-OS}}
\newcommand{\os}{\mathrm{OS}}
\newcommand{\hbeta}{\hat{\beta}}
\newtheorem{assumption}{Assumption}
\newtheorem{lemma}{Lemma}
\newtheorem{theorem}{Theorem}
\newtheorem{remark}{Remark}
\begin{document}
\title{Unweighted estimation based on optimal sample under measurement constraints}
\author{Jing Wang $^{1,2,3}$, HaiYing Wang $^{3}$\footnote{Author to whom
    correspondence may be addressed. Email:haiying.wang@uconn.edu}, Shifeng Xiong $^1$}
\maketitle
\centerline{\it
  NCMIS, KLSC, Academy of Mathematics and Systems Science, CAS, Beijing 100190, China $^1$}
\centerline{\it School of Mathematical Sciences, University of Chinese Academy
  of Sciences, Beijing 100049, China $^2$}
\centerline{\it Department of Statistics, University of Connecticut, Storrs,
  CT 06269, U.S.A. $^3$}
\vspace{.55cm}
\begin{abstract}
  To tackle massive data, subsampling is a practical approach to select the
  more informative data points. However, when responses are expensive to
  measure, developing efficient subsampling schemes is challenging, and an
  optimal sampling approach under measurement constraints was developed to meet
  this challenge. This method uses the inverses of optimal sampling
  probabilities to reweight the objective function, which assigns smaller
  weights to the more important data points. Thus the estimation efficiency of
  the resulting estimator can be improved. In this paper, we propose an
  unweighted estimating procedure based on optimal subsamples to obtain a more
  efficient estimator. We obtain the unconditional asymptotic distribution of
  the estimator via martingale techniques without conditioning on the pilot
  estimate, which has been less investigated in the existing subsampling
  literature. Both asymptotic results and numerical results show that the
  unweighted estimator is more efficient in parameter estimation.

  {\it keywords}: Generalized Linear Models; Massive Data; Martingale Central Limit Theorem

  {\it MSC2020}: Primary 62D05; secondary 62J12
\end{abstract}

\newpage
\section{INTRODUCTION}\label{sec:intro}

Data acquisition is becoming easier nowadays, and massive data bring new challenges
to data storage and processing. Conventional statistical models may not be
applicable due to limited computational resources. Facing such problems,
subsampling has become a popular approach to reduce computational burdens. The
key idea of subsampling is to collect more informative data points from the full
data and perform calculations on a smaller data set, see
\cite{drineas2006sampling}; \cite{Drineas:11}; \cite{mahoney2011randomized}. In
some circumstances, covariates $\{X_{i}\}$ are available for all the data
points, but responses $\{Y_{i}\}$ can be obtained for only a small portion
because they are expensive to measure. For example, the extremely large size of
modern galaxy datasets has made visual classification of galaxies
impractical. Most subsampling probabilities developed recently for generalized
linear models (GLMs) rely on complete responses in the full data set, see
\cite{WangZhuMa2017}; \cite{wang2019more}, \cite{ai2019optimal}. In order to
handle the difficulty when responses are hard to measure,
\cite{Zhang2020optimal} proposed a response-free optimal sampling scheme under
measurement constraints (OSUMC) for GLMs. However, their method uses the
reweighted estimator which is not the most efficient one, since it assigns
smaller weights to the more informative data points in the objective
function. The robust sampling probabilities proposed in \cite{nie2018minimax} do
not depend on the responses either, but their investigation focused on linear
regression models.

In this paper, we focus on a subsampling method under measurement constraints
and propose a more efficient estimator based on the same subsamples taken
according to OSUMC for GLMs. We use martingale techniques to derive the
unconditional asymptotic distribution of the unweighted estimator and show that
its asymptotic covariance matrix is smaller, in the Loewner ordering, than that
of the weighted estimator. Before showing the structure of the paper, we first
give a short overview of the emerging field of subsampling methods.

Various subsampling methods have been studied in recent years. For linear
regression, \cite{drineas2006sampling} developed a subsampling method based on
statistical leveraging scores. \cite{Drineas:11} developed an algorithm using
randomized Hardamard transform. \cite{ma2015statistical} investigated the
statistical perspective of leverage sampling. \cite{WangYangStufken2018}
developed an information-based procedure to select optimal subdata for linear
regression deteministically. \cite{zhang2020distributed} proposed a distributed
sampling-based approach for linear models. \cite{ma2020asymptotic} studied the
statistical properties of sampling estimators and proposed several estimators
based on asymptotic results which are related to leveraging scores. Beyond
linear models, \cite{fithian2014local} proposed a local case-control subsampling
method to handle imbalanced data sets for logistic
regression. \cite{WangZhuMa2017} developed an optimal sampling method under
A-optimality criterion (OSMAC) for logistic regression. Their estimator can be
improved because inverse probability reweighting is applied on the objective
function, and \cite{wang2019more} developed a more efficient estimator for
logistic regression based on optimal subsample. They proposed an unweighted
estimator with bias correction using an idea similar to
\cite{fithian2014local}. They also introduced a Poisson sampling algorithm to
reduce RAM usage when calculating optimal sampling probabilities.
\cite{ai2019optimal} generalized OSMAC to GLMs and obtained optimal subsampling
probabilities under A- and L-optimality criteria for GLMs. These optimal
sampling methods require all the responses in order to construct optimal
probabilities, which is not possible under measurement
constraints. \cite{Zhang2020optimal} developed an optimal sampling method under
measurement constraints. Their estimator is also based on the weighted objective
function and thus the performance can be improved. Recently,
\cite{cheng2020information} extended an information-based data selection
approach for linear models to logistic regression. \cite{yu2020optimal} derived
optimal Poisson subsampling probabilies under the A- and L-optimality criteria
for quasi-likelihood estimation, and developed a distributed subsampling
framwork to deal with data stored in different machines. \cite{wang2020optimal}
developed an optimal sampling method for quantile
regression. \cite{pronzato2020sequential} proposed a sequential online
subsampling procedure based on optimal bounded design measures.

We focus on GLMs in this paper, which include commonly used models such as
linear, logistic and Poisson regression. The rest of the paper is organized as
follows. Section \ref{sec:background} presents the model setup and briefly
reviews the OSUMC method. The more efficient estimator and its asymptotic
properties are presented in Section \ref{sec:theory}. Section
\ref{sec:numerical} provides numerical simulations. We summarize our paper in
Section \ref{sec:conclusion}. Proofs and technical details are presented in the
Supplementary Material.

\section{BACKGROUND AND MODEL SETUP}\label{sec:background}

We start by reviewing GLMs. Consider %
independent and identically distributed
(i.i.d) %
data $(X_{1},Y_{1})$, $(X_{2},Y_{2})$,..., $(X_{n},Y_{n})$ from the distribution
of $(X,Y)$,
where $X\in\mathbb{R}^{p}$ is the covariate vector and
$Y$ is the response variable. %
Assume that the conditional density of $Y$ given $X$ satisfies that
\begin{equation*}
	f(y\vert x,\beta_{0},\sigma)\propto \exp\left\{\frac{yx^{T}\beta_{0}-b(x^{T}\beta_{0})}{c(\sigma)}\right\},%
\end{equation*}
where $\beta_{0}$ is the unknown parameter we need to estimate from data,
$b(\cdot)$ and $c(\cdot)$ are known functions, and $\sigma$ is the dispersion
parameter.  In this paper, we are only interested in estimating
$\beta_{0}$. Thus, we take $c(\sigma)=1$ without loss of generality. We also
include an intercept in the model, as is almost always the case in
practice. We obtain the maximum likelihood estimator (MLE) of $\beta_{0}$
through maximizing the loglikelihood function, namely,
\begin{equation} \label{eq:3}
	\hbeta_{\text{MLE}}:=\mathop{\arg\max}_{\beta}
  \onen\sumn\left\{Y_{i}X_{i}^{T}\beta-b(X_{i}^{T}\beta)\right\},
\end{equation}
which is the same as solving the following score equation:
\begin{equation*}
	\Psi_{n}(\beta):=\onen\sumn\{b^{'}(X_{i}^{T}\beta)-Y_{i}\}X_{i}=0,
\end{equation*}
where $b^{'}(\cdot)$ is the derivative of $b(\cdot)$. There is no general
closed-form solution to $\hbeta_{\text{MLE}}$, and iterative algorithms
such as Newton's method are often used. Therefore, when the data are massive,
the computational burden of estimating $\beta_{0}$ is very heavy. To handle this
problem, \cite{ai2019optimal} proposed a subsampling-based approach, which
constructs sampling probabilities $\{\pi_{i}\}_{i=1}^{n}$ that depend on both
the covariates $\{X_i\}$ and the responses $\{Y_i\}$.  However, it is infeasible to
obtain all the responses under measurement constraints. For example, it costs
considerable money and time to synthesize superconductors. When we use
data-driven methods to predict the critical temperature with the chemical
composition of superconductors, it may be more pratical to measure a small
number of materials to build a data-driven model. To tackle this type of ``many
$X$, few $Y$'' scenario, \cite{Zhang2020optimal} developed OSUMC subsampling
probabilities.

Assume we obtain a subsample of size $r$ by sampling with replacement according
to the probabilities $\pi=\{\pi_{i}\}_{i=1}^{n}$. A reweighted estimator is
often used in subsample literature, defined as the minimizer of the reweighted
target function, namely
\begin{equation}\label{eq:def_beta_w}
	\hbeta_{\rw}:=\mathop{\arg\max}_{\beta}
  \oner\sumr\frac{Y_{i}^{*}X_{i}^{*{T}}\beta-b(X_{i}^{*{T}}\beta)}{n\pi_{i}^{*}},
\end{equation}
where $(X_{i}^{*},Y_{i}^{*})$ is the data sampled in the $i$th step, and
$\pi_{i}^{*}$ denotes the corresponding sampling probability. Equivalently, we
can solve the reweighted score function
\begin{equation*}
	\Psi^{*}_{\rw}(\beta):=
  \oner\sumr\frac{b^{'}(X_{i}^{*{T}}\beta)-Y_{i}^{*}}{n\pi_{i}^{*}}X_{i}^{*}=0,
\end{equation*}
to obtain the reweighted estimator. \cite{Zhang2020optimal} proposed a scheme to
derive the optimal subsampling probabilities for GLMs under measurement
constraints. They first proved that $\hbeta_{\rw}$ is asymptotically
normal:
\begin{equation*}
 	\Var\{\Psi^{*}_{\rw}(\beta_{0})\}^{-\frac{1}{2}}\Phi(\hbeta_{\rw}-\beta_{0})\xrightarrow{d}N(0,I),
\end{equation*}
where the notation ``$\xrightarrow{d}$'' denotes convergence in distribution,
\begin{equation*}
 	\Var\{\Psi^{*}_{\rw}(\beta_{0})\}:=\Exp\left[\Var\{\Psi^{*}_{\rw}(\beta_{0})\vert
          X_{1}^{n}\}\right]=\Exp\left\{\frac{1}{n^{2}}\sumn b^{''}(X_{i}^{T}\beta_{0})X_{i}X_{i}^{T}\left(\frac{1}{r\pi_{i}}-\oner+1\right)\right\},
\end{equation*}
$X_{1}^{n}:=(X_{1},X_{2},...,X_{n})$, $b^{''}(\cdot)$ is the second derivative
of $b(\cdot)$, and
\begin{equation}\label{eq:phi}
	\Phi:=\Exp\left\{\onen\sumn b^{''}(X_{i}^{T}\beta_{0})X_{i}X_{i}^{T}\right\}.
\end{equation}
Since the matrix
$\Phi^{-1}\Var\{\Psi^{*}_{\rw}(\beta_{0})\vert
X_{1}^{n}\}\Phi^{-1}$ converges to the asymptotic variance of
$\hbeta_{\rw}$, \cite{Zhang2020optimal} minimized its trace,
$\text{tr}(\Phi^{-1}\Var\{\Psi^{*}_{\rw}(\beta_{0})\vert
X_{1}^{n}\}\Phi^{-1})$, to obtain the optimal sampling probabilities which
depend only on covariate vectors $X_{1}$ ,..., $X_{n}$:
\begin{equation}\label{eq:Api}
  \pi_{i}^{\mmse}(\beta_{0},\Phi)
  =\frac{\sqrt{b^{''}(X_{i}^{T}\beta_{0})}\|\Phi^{-1}X_{i}\|}{\sumjn\sqrt{b^{''}(X_{j}^{T}\beta_{0})}\|\Phi^{-1}X_{j}\|}.
 \end{equation}
To avoid the matrix multiplication in $\|\Phi^{-1}X_{i}\|$ in \eqref{eq:Api}, we
can consider a variant of \eqref{eq:Api} which omits the inverse matrix
$\Phi^{-1}$:
\begin{equation}
\label{eq:Lpi}
\pi_{i}^{\mvc}(\beta_{0})
   =\frac{\sqrt{b^{''}(X_{i}^{T}\beta_{0})}\| X_{i}\|}{\sumjn\sqrt{b^{''}(X_{j}^{T}\beta_{0})}\| X_{j}\|}.
\end{equation}
Here, $\{\pi_i^{\mvc}\}_{i=1}^n$ are other widely used optimal
probabilities, derived by minimizing the quantity
$\text{tr}(L\Phi^{-1}\Var\{\Psi^{*}_{\rw}(\beta_{0})\vert
X_{1}^{n}\}\Phi^{-1}L^{T})$ with $L=\Phi$. This is a special case of using the
L-optimality criterion to obtain optimal subsampling probabilities
\citep[see][]{WangZhuMa2017,ai2019optimal}.
The probabilities in \eqref{eq:Api} and \eqref{eq:Lpi} are useful when
the responses are not available, as we discussed before. However, as pointed out
in \cite{wang2019more}, under the logistic model framework, the weighting
scheme adopted in \eqref{eq:def_beta_w} does not bring us the most efficient
estimator. Intuitively, if a data point $(X_{i}, Y_{i})$ has a larger sampling
probability, it contains more information about $\beta_{0}$. However, data
points with higher sampling probabilities have smaller weights in
\eqref{eq:def_beta_w}. This will reduce the efficiency of the estimator.
We propose a more efficient estimator based on the unweighted target function.

\section{UNWEIGHTED ESTIMATION AND ASYMPTOTIC THEORY} \label{sec:theory}

In this section, we present an algorithm with an unweighted estimator and derive
its asymptotic
property. %
As we discussed before, the reweighted estimator reduces the importance of more
informative data points. To overcome this problem, \cite{wang2019more} developed
a method to correct the bias of the unweighted estimator in logistic
regression. In this section, we show that, using the optimal probabilities
under measurement constraints, the unweighted estimator is
asymptotically unbiased
and therefore it is a better estimator since it has a smaller asymptotic variance
matrix in the Loewner ordering.
To make our investigation more general and put the probabilities in
  \eqref{eq:Api} and \eqref{eq:Lpi} in an unified class, we consider the
  following general class of subsampling probabilities in the rest of the paper:
  \begin{equation}
    \label{eq:pi}
    \pi_{i}^{\os}(\beta_{0},\Phi)
    =\frac{\sqrt{b^{''}(X_{i}^{T}\beta_{0})}\| L\Phi^{-1}X_{i}\|}
    {\sumjn\sqrt{b^{''}(X_{j}^{T}\beta_{0})}\| L\Phi^{-1}X_{j}\|},
  \end{equation}
  where $L$ is a fixed matrix. Here the probabilities
  $\{\pi_{i}^{\os}(\beta_{0},\Phi)\}_{i=1}^n$ are optimal in that they
  minimize the asymptotic variance of $L\hbeta_{\rw}$. Specifically, when $L=I$,
  the probabilities in \eqref{eq:pi} reduce to those in \eqref{eq:Api} and when
  $L=\Phi$, they reduce to those in
  \eqref{eq:Lpi}.

We define our unweighted estimator as:
\begin{equation}\label{eq:def_beta_uw}
	\hbeta_{\ruw}:=\mathop{\arg\max}_{\beta}
  \oner\sumr\left\{Y_{i}^{*}X_{i}^{*{T}}\beta-b(X_{i}^{*{T}}\beta)\right\},
\end{equation}
where $(X_{i}^{*},Y_{i}^{*})$'s are sampled according to the probabilities in
\eqref{eq:pi}.

\subsection{Notation and main algorithm}

We first introduce some notations and the main algorithm. Recall that
$X_{1}^{n}:=(X_{1},X_{2},...,X_{n})$ and denote
$Y_{1}^{n}:=(Y_{1},Y_{2},...,Y_{n})$. For a vector $X\in\mathbb{R}^{p}$, we use
$\| X\|$ to denote its Euclidean norm. For a matrix
$A\in\mathbb{R}^{p\times p}$, we use $\lambda_{\min}(A)$ and $\lambda_{\max}(A)$
to denote its minimum and maxmum eigenvalues, respectively,
$\| A\|_{F}$ to denote its Frobenius norm, and $\text{tr}(A)$ to
denote its trace. For two positive semi-definite matrices $A$ and $B$, $A\geq B$
if and only if $A-B$ is positive semi-definite; this is known as the Loewner
ordering. For parameter $\beta$, we assume that $\beta$ takes values in a
compact set $\beta\in\mathbb{B}$. Now, we present the main algorithm in
Algorithm \ref{alg:algo}. Since the probabilities in \eqref{eq:pi} involve 
unknown quantities, $\beta_{0}$ and $\Phi$, we use pilot estimates to replace
them in Algorithm \ref{alg:algo}.
\begin{algorithm}[H]%
  \caption{Unweighted estimation for GLM under measurement constraints}
  \label{alg:algo}
  \begin{algorithmic}[1]
    \STATE Take a pilot subsample of size $r_{\rp}$:
    $\{(X_{i}^{*_{\rp}},Y_{i}^{*_{\rp}})\}_{i=1}^{r_{\rp}}$
    with simple random sampling from the full data set
    $\{(X_{i},Y_{i})\}_{i=1}^{n}$. Calculate the pilot estimate of $\beta_{0}$:
    \begin{equation*}
      \hbeta_{\rp}:=\mathop{\arg\max}_{\beta}\frac{1}{r_{\rp}}\sum_{i=1}^{r_{\rp}}\left\{Y_{i}^{*_{\rp}}X_{i}^{*_{\rp}{T}}\beta-b(X_{i}^{*_{\rp}{T}}\beta)\right\},
    \end{equation*}
    and the pilot estimate of $\Phi$:
    \begin{equation*}
      \hat{\Phi}_{\rp}:=\frac{1}{r_{\rp}}\sum_{i=1}^{r_{\rp}}b^{''}(X_{i}^{*_{\rp}
        {T}}\hbeta_{\rp})X_{i}^{*_{\rp}}X_{i}^{*_{\rp}{T}}.
    \end{equation*}
    \STATE Use $\hbeta_{\rp}$ and $\hat{\Phi}_{\rp}$ to
    replace $\beta_{0}$ and $\Phi$ in \eqref{eq:pi}, respectively, and calculate
    the sampling probabilities
    $\{\pi_{i}^{\os}(\hbeta_{\rp},\hat{\Phi}_{\rp})\}_{i=1}^{n}$.
    \STATE Obtain a subsample $\{(X_{i}^{*},Y_{i}^{*})\}_{i=1}^{r}$ of size $r$
    according to the sampling probabilities
    $\{\pi_{i}^{\os}(\hbeta_{\rp},\hat{\Phi}_{\rp})\}_{i=1}^{n}$
    using sampling with replacement, and solve the estimation equation:
    \begin{equation*}
      \Psi_{\ruw}^{*}(\beta):=\oner\sumr\{b^{'}(X_{i}^{*{T}}\beta)-Y_{i}^{*}\}X_{i}^{*}=0,
    \end{equation*}
    to obtain the unweighted estimator defined in \eqref{eq:def_beta_uw}.
  \end{algorithmic}
\end{algorithm}
\begin{remark}\label{rm:twoweights}
Our Algorithm~\ref{alg:algo} is different from the subsampling algorithm in
\cite{Zhang2020optimal} at step 3 of obtaining the subsampling estimators.
There are two types of weights in the subsampling algorithms: one is the sampling
weights which we call subsampling probabilities in this paper, and the other is
the estimation weights used to construct the target function.
Algorithm~\ref{alg:algo} and \cite{Zhang2020optimal}'s algorithm share the same
sampling probabilities (sampling weights) but they use
different estimation weights.
\cite{Zhang2020optimal} use the estimation weights $1/\pi_{i}^{\os}(\hbeta_{\rp},\hat{\Phi}_{\rp})$
while we set the estimation weights to be uniformly one, i.e., the target
function is unweighted.
We will show in Section~\ref{sec:efficiency} that our estimator improves the
estimation efficiency. This does not contradict the fact that
$\{\pi_{i}^{\os}\}_{i=1}^n$ are optimal for the algorithm in
\cite{Zhang2020optimal}, because they force the estimation weights to be the
inverses of the sampling weights while we do not enforce this requirement. 
\end{remark}

\begin{remark}\label{rm:computime}
The computational complexity of our two-step Algorithm~\ref{alg:algo} is the
same as the OSUMC estimator in \cite{Zhang2020optimal}, because we use the same
sampling probabilities and the two methods differ only in the weights of the
target function.
With Newton's method, it requires $O(\zeta_{\rp}r_{\rp}p^2)$ time to compute the
pilot estimates, where $\zeta_{\rp}$ is the number of iterations for the
algorithm to convergence based on the pilot sample.
The time complexities of calculating sampling probabilities
$\{\pi^{\mmse}\}_{i=1}^n$ and $\{\pi^{\mvc}\}_{i=1}^n$ are $O(np^2)$ and
$O(np)$, respectively.
After obtaining the second stage subsample with the optimal sampling
probabilities, it takes $O(\zeta rp^2)$ time to solve the unweighted target
function where $\zeta$ is the number of iterations of Newton's algorithm.  Thus,
the total computational time is $O(np^2+\zeta_{\rp}r_{\rp}p^2+\zeta rp^2)$ for
A-optimality and $O(np+\zeta_{\rp}r_{\rp}p^2+\zeta rp^2)$ for L-optimality.  The
computational complexity of our algorithm based on the A-optimality criterion is
the same as the OSUMC algorithm in \cite{Zhang2020optimal}.  Therefore, our
method increase the estimation efficiency without increasing the computational
burden.
\end{remark}

\subsection{Asymptotic normality of
  $\hbeta_{\ruw}$}\label{sec:asymptotic}

We focus on unconditional asymptotic results for the unweighted algorithm, and
use martingale techniques to prove
theorems. %
To present the asymptotic results, we summarize some regularity conditions
first.

\begin{assumption}\label{A1}
  The second derivative $b^{''}(\cdot)$ is bounded and continuous.
\end{assumption}
\begin{assumption}\label{A2}
  The fourth moment of the covariate is finite, i.e.,
  $\Exp\left(\| X\|^{4}\right)<\infty$.
\end{assumption}
\begin{assumption}\label{A3}
  Let $g(x):=\inf_{\beta\in\mathbb{B}} b^{''}(x^{T}\beta)$. Assume that
  $\lambda_{\min}[\Exp\{g(X)XX^{T}\}]>0$. Assume that there exists a
  function $h(x)$ such that $|b^{'''}(x^{T}\beta)|\leq h(x)$ and
  $\Exp\{h(X)\| X\|^{4}\}<\infty$, where $b^{'''}(\cdot)$ denotes
  the third derivative of $b(\cdot)$.
\end{assumption}

Assumption \ref{A1} is  commonly used in GLM literature, e.g.,
\cite{Zhang2020optimal}. Assumption \ref{A2} is a moment condition on
$X$. %
The second part of Assumption \ref{A3} is similar to the third-derivative
condition used in the classical theory of MLE. However, here we need a stronger
moment condition, $\Exp[h(X)\| X\|^{4}]<\infty$, since we use an
unequal probability sampling
method. %
Before we prove the asymptotic normality of $\hbeta_{\ruw}$, we need
to prove some lemmas. First, we present the convergence of
$\Dot{\Psi}_{\ruw}^{*}(\beta)$.
\begin{lemma}\label{lm:Convergence_of_Gamma}%
  Under assumptions A\ref{A1}-A\ref{A3}, for every sequence
  $\beta_{n}\xrightarrow{p}\beta_{0}$, %
  \begin{equation*}
    \hat{m}\Dot{\Psi}_{\ruw}^{*}(\beta_{n})
    \xrightarrow{p}\Gamma
    :=\Exp\left[\left\{b^{''}(X^{T}\beta_{0})\right\}^{\frac{3}{2}}
      \|L\Phi^{-1}X\| XX^{T}\right],
  \end{equation*}
  where
  $\hat{m}=(1/n)\sumn\sqrt{b^{''}(X_{i}^{T}\hbeta_{\rp})}\|L\hat{\Phi}_{\rp}^{-1}X_{i}\|$ and the notation
  ``$\xrightarrow{p}$'' denotes convergence in probability.
\end{lemma}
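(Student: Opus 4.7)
The plan is a two-step reduction: first transfer the problem from the random sequence $\beta_{n}$ to the fixed truth $\beta_{0}$ by a mean-value expansion, and then prove the convergence at $\beta_{0}$ via the standard decomposition into conditional mean and fluctuation. Throughout, let $\mathcal{F}_{n}$ denote the $\sigma$-field generated by the full data $\{(X_{i},Y_{i})\}_{i=1}^{n}$ together with the pilot sample, so that conditional on $\mathcal{F}_{n}$ the second-stage draws $(X_{i}^{*},Y_{i}^{*})$ are i.i.d.\ with probabilities $\pi_{j}^{\os}(\hbeta_{\rp},\hat{\Phi}_{\rp})$.

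For the convergence at $\beta_{0}$, set $W_{i}=\hat m\, b''(X_{i}^{*T}\beta_{0})X_{i}^{*}X_{i}^{*T}$ and split
\begin{equation*}
\hat m\,\dot\Psi^{*}_{\ruw}(\beta_{0})-\Gamma
=\frac{1}{r}\sum_{i=1}^{r}\bigl(W_{i}-\Exp[W_{i}\mid\mathcal{F}_{n}]\bigr)
+\bigl(\Exp[W_{1}\mid\mathcal{F}_{n}]-\Gamma\bigr).
\end{equation*}
A direct calculation using the definition of $\pi_{j}^{\os}$ cancels one factor of $\hat m$ and gives
\begin{equation*}
\Exp[W_{1}\mid\mathcal{F}_{n}]
=\frac{1}{n}\sum_{j=1}^{n}\sqrt{b''(X_{j}^{T}\hbeta_{\rp})}\,\|L\hat{\Phi}_{\rp}^{-1}X_{j}\|\,b''(X_{j}^{T}\beta_{0})\,X_{j}X_{j}^{T},
\end{equation*}
which converges to $\Gamma$ in probability by the law of large numbers at $(\beta_{0},\Phi)$ combined with the pilot consistency $\hbeta_{\rp}\xrightarrow{p}\beta_{0}$, $\hat{\Phi}_{\rp}\xrightarrow{p}\Phi$ and a continuity argument based on assumptions A\ref{A1} and A\ref{A2}. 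The fluctuation piece has conditional variance at most $r^{-1}\Exp[\|W_{1}\|_{F}^{2}\mid\mathcal{F}_{n}]$; the same $\hat m$-cancellation shows the conditional second moment is $O_{P}(1)$ under A\ref{A1}--A\ref{A2}, so conditional Chebyshev yields $o_{P}(1)$.

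To transfer the conclusion from $\beta_{0}$ to the random $\beta_{n}$, a Taylor expansion gives
\begin{equation*}
\hat m\,\dot\Psi^{*}_{\ruw}(\beta_{n})-\hat m\,\dot\Psi^{*}_{\ruw}(\beta_{0})
=\hat m\cdot\frac{1}{r}\sum_{i=1}^{r} b'''(X_{i}^{*T}\tilde\beta_{i})\{X_{i}^{*T}(\beta_{n}-\beta_{0})\}X_{i}^{*}X_{i}^{*T},
\end{equation*}
with each $\tilde\beta_{i}$ lying on the segment between $\beta_{n}$ and $\beta_{0}$. Using $|b'''|\leq h$ from A\ref{A3}, the Frobenius norm of the remainder is bounded by $\|\beta_{n}-\beta_{0}\|\cdot\hat m\cdot r^{-1}\sum_{i} h(X_{i}^{*})\|X_{i}^{*}\|^{3}$, whose conditional expectation simplifies, via the same cancellation, to a weighted average of $h(X_{j})\|X_{j}\|^{4}$ terms. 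The moment bound $\Exp[h(X)\|X\|^{4}]<\infty$ in A\ref{A3} then forces this factor to be $O_{P}(1)$, and combining with $\|\beta_{n}-\beta_{0}\|=o_{P}(1)$ yields a remainder of $o_{P}(1)$.

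The main obstacle is the bookkeeping around the random, data-dependent sampling probabilities. Because $\pi_{j}^{\os}$ depends on $(\hbeta_{\rp},\hat{\Phi}_{\rp})$, every appearance of $\hat m^{2}$ coming from inverse-probability-type conditional second moments must be traced carefully so that one factor of $\hat m$ cancels against the normalization $1/(n\hat m)$ inside $\pi_{j}^{\os}$, leaving a weighted average whose limit can be identified via pilot consistency and the LLN; assumption A\ref{A3} is precisely what makes this same bookkeeping close once the Taylor remainder introduces the extra power of $\|X_{i}^{*}\|$ needed for the reduction from $\beta_{n}$ to $\beta_{0}$.
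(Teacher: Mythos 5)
Your overall architecture (split $\hat m\dot\Psi^{*}_{\ruw}(\beta_{0})$ into a conditional mean, which is identified as a weighted full-data average converging to $\Gamma$, plus a mean-zero fluctuation; then transfer from $\beta_{0}$ to $\beta_{n}$ via a mean-value bound with stochastic Lipschitz constant controlled by $\Exp\{h(X)\|X\|^{4}\}<\infty$) parallels the paper's proof, and the conditional-mean and transfer steps are essentially sound. However, there is a genuine gap in how you dispose of the fluctuation term.

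You claim that $\Exp[\|W_{1}\|_{F}^{2}\mid\mathcal{F}_{n}]=O_{P}(1)$ under A1--A2, so that conditional Chebyshev kills $\frac{1}{r}\sum_{i}(W_{i}-\Exp[W_{i}\mid\mathcal{F}_{n}])$. But the $\hat m$-cancellation you invoke gives
\begin{equation*}
\Exp\bigl[\|W_{1}\|_{F}^{2}\mid\mathcal{F}_{n}\bigr]
=\hat m\cdot\frac{1}{n}\sum_{j=1}^{n}\hat w_{j}\,\{b''(X_{j}^{T}\beta_{0})\}^{2}\,\|X_{j}\|^{4},
\qquad
\hat w_{j}=\sqrt{b''(X_{j}^{T}\hbeta_{\rp})}\,\|L\hat{\Phi}_{\rp}^{-1}X_{j}\|\asymp\|X_{j}\|,
\end{equation*}
so the conditional second moment is of order $\frac{1}{n}\sum_{j}\|X_{j}\|^{5}$. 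Assumption A2 only provides $\Exp\|X\|^{4}<\infty$; when $\Exp\|X\|^{5}=\infty$ this average diverges, and the Chebyshev bound does not close. This is exactly the extra power of $\|X\|$ that the weights $\hat w_{j}$ inject, and it is why the paper does not use an $L^{2}$ argument here: it instead verifies $\sup_{n}\Exp|\tilde M_{n,1}^{(lk)}|^{1+\delta}<\infty$ with $\delta=1/6$ (chosen so that the required moment is exactly $\Exp\|X\|^{4}$ and $\Exp\|X\|^{3.5}$) and applies a weak law of large numbers for uniformly integrable martingale-difference arrays (Andrews, 1988). To repair your proof you would either need to strengthen A2 to a fifth moment, or replace conditional Chebyshev by a truncation/uniform-integrability argument of the kind the paper uses. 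Your transfer step from $\beta_{0}$ to $\beta_{n}$ is fine and is in fact slightly more direct than the paper's route through stochastic equicontinuity and uniform convergence over the compact set $\mathbb{B}$, though you should note that the mean-value theorem must be applied entrywise (with a different intermediate point per entry), as the paper implicitly does.
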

Furthermore, to establish the asymptotic normality of
$\hbeta_{\ruw}$, we present the asymptotic normality of
$\Psi_{\ruw}^{*}(\beta_{0})$.
\begin{lemma}\label{lm:Normality_of_Psi}%
  Under assumptions A\ref{A1}-A\ref{A3}, if $r/n\to\rho\in[0,1)$,
  $r_{\rp}/\sqrt{n}\to0$ and $\exists \delta>0$ such that
  \begin{equation}\label{eq:1}
    \Exp\left\{\left| b^{'}(X^{T}\beta_{0})-Y\right|^{4+2\delta}
      \left\| X\right\|^{8+4\delta}\right\}<\infty,
  \end{equation}
  then
  \begin{equation*}
    \sqrt{r}\hat{m}\Psi_{\ruw}^{*}(\beta_{0})
    \xrightarrow{d}N(0,m\Gamma+\rho \Omega),		
  \end{equation*}
  where
  $m:=\Exp\left\{\sqrt{b^{''}(X^{T}\beta_{0})}\|L\Phi^{-1}X\|\right\}$ and
  $\Omega=\Exp\left[\left\{b^{''}(X^{T}\beta_{0})\right\}^{2}\|L\Phi^{-1}X\|^{2}XX^{T}\right]$.  If specifically $\rho=0$, then the
  required condition in~\eqref{eq:1} can be weakened to
  \begin{equation*}
    \Exp\left\{\left| b^{'}(X^{T}\beta_{0})-Y\right|^{2+\delta}
      \left\| X\right\|^{4+2\delta}\right\}<\infty.
  \end{equation*}
\end{lemma}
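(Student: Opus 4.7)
The plan is to decompose $\sqrt{r}\hat{m}\Psi_{\ruw}^{*}(\beta_{0})$ into a piece driven by the full-data randomness and a piece that, conditional on the full data and pilot, is driven by the with-replacement draws, and then to recombine the two inside a single martingale-difference array so that one application of the martingale CLT yields the joint Gaussian limit. Let $\mathcal{F}$ be the $\sigma$-field generated by the pilot sample and the full data, and note that $\pi_{j}^{\os}(\hbeta_{\rp},\hat{\Phi}_{\rp})=\sqrt{b^{''}(X_{j}^{T}\hbeta_{\rp})}\|L\hat{\Phi}_{\rp}^{-1}X_{j}\|/(n\hat{m})$. Computing the conditional mean of a single draw,
\begin{equation*}
A_{n}:=\Exp\bigl[\hat{m}\{b^{'}(X_{i}^{*T}\beta_{0})-Y_{i}^{*}\}X_{i}^{*}\bigm|\mathcal{F}\bigr]=\frac{1}{n}\sum_{j=1}^{n}\sqrt{b^{''}(X_{j}^{T}\hbeta_{\rp})}\|L\hat{\Phi}_{\rp}^{-1}X_{j}\|\{b^{'}(X_{j}^{T}\beta_{0})-Y_{j}\}X_{j},
\end{equation*}
and setting $B_{i}:=\hat{m}\{b^{'}(X_{i}^{*T}\beta_{0})-Y_{i}^{*}\}X_{i}^{*}-A_{n}$, I obtain the decomposition $\sqrt{r}\hat{m}\Psi_{\ruw}^{*}(\beta_{0})=\sqrt{r}A_{n}+r^{-1/2}\sum_{i=1}^{r}B_{i}$, in which $\sqrt{r}A_{n}$ is $\mathcal{F}$-measurable and the $B_{i}$'s are conditionally i.i.d.\ mean zero given $\mathcal{F}$.

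I then stack the two blocks into a triangular martingale-difference array of length $n+r$. For $k=1,\dots,n$, the $k$-th increment is $(\sqrt{r}/n)$ times the pilot-free summand $\sqrt{b^{''}(X_{k}^{T}\beta_{0})}\|L\Phi^{-1}X_{k}\|\{b^{'}(X_{k}^{T}\beta_{0})-Y_{k}\}X_{k}$; the substitution of $(\beta_{0},\Phi)$ for $(\hbeta_{\rp},\hat{\Phi}_{\rp})$ here contributes an $o_{P}(1)$ remainder, controlled by pilot consistency together with $r_{\rp}/\sqrt{n}\to0$. For $k=n+1,\dots,n+r$, the increment is $r^{-1/2}B_{k-n}$. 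Under the filtration that first reveals $(X_{1},Y_{1}),\dots,(X_{n},Y_{n})$ and then the draws $\xi_{1},\dots,\xi_{r}$, these are genuine martingale differences: the first block has mean zero by $\Exp[Y|X]=b^{'}(X^{T}\beta_{0})$, and the second is centered by construction.

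The sum of conditional variances then splits cleanly. The first block contributes $(r/n)$ times a sample mean that, by the LLN and $\operatorname{Var}(Y|X)=b^{''}(X^{T}\beta_{0})$, converges in probability to $\Omega$, giving $\rho\,\Omega$. The second block contributes $\Exp[B_{1}B_{1}^{T}|\mathcal{F}]=\hat{m}^{2}\sum_{j=1}^{n}\pi_{j}^{\os}\{b^{'}(X_{j}^{T}\beta_{0})-Y_{j}\}^{2}X_{j}X_{j}^{T}-A_{n}A_{n}^{T}$, which after substituting the explicit form of $\pi_{j}^{\os}$ collapses to $\hat{m}$ times a sample mean converging to $\Gamma$, producing $m\Gamma$ (the term $A_{n}A_{n}^{T}$ is $o_{P}(1)$ since $A_{n}$ is a centered sample mean). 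Adding the two pieces yields the target covariance $m\Gamma+\rho\,\Omega$.

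The main obstacle is verifying a Lyapunov-type condition on the combined array so that a single moment bound handles both blocks uniformly. For the full-data increments, bounding $\|(\sqrt{r}/n)W_{k}\|^{2+\delta}$ and using $\|L\Phi^{-1}X\|\le C\|X\|$ and $r\asymp n$ requires $\Exp\{|b^{'}(X^{T}\beta_{0})-Y|^{2+\delta}\|X\|^{4+2\delta}\}<\infty$; for the sampling-side increments, expanding $\Exp_{\mathcal{F}}\|r^{-1/2}B_{i}\|^{2+\delta}$ against $\pi_{j}^{\os}$ and propagating $\hat{m}\to m$ leads to an analogous bound of the same order. When $\rho>0$ the two blocks genuinely couple inside the combined martingale and matching Lyapunov simultaneously at the rate dictated by the stronger of the two controls forces the moment condition stated in~\eqref{eq:1}. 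When $\rho=0$, $\sqrt{r}A_{n}=\sqrt{r/n}\cdot\sqrt{n}A_{n}\xrightarrow{p}0$ by the ordinary CLT on $\sqrt{n}A_{n}$, so only the sampling-side increments contribute to the limit and the weaker $\Exp\{|b^{'}(X^{T}\beta_{0})-Y|^{2+\delta}\|X\|^{4+2\delta}\}<\infty$ suffices, which gives the second assertion.
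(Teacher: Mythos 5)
Your decomposition $\sqrt{r}\hat m\Psi^*_{\ruw}(\beta_0)=\sqrt{r}A_n+r^{-1/2}\sum_{i=1}^r B_i$ is exactly the paper's splitting into $\sqrt{r}Q$ plus the normalized sum of the martingale differences $M_{n,i}$, and your identification of the two variance components $\rho\Omega$ and $m\Gamma$ matches theirs. Where you genuinely diverge is in how the two sources of randomness are recombined. The paper keeps them separate: it proves $\sqrt{r}Q\xrightarrow{d}N(0,\rho\Omega)$ by a CLT for exchangeable sequences conditional on the pilot together with the Cram\'er--Wold device, and then glues the two limits together with a convolution-type martingale CLT (Lemma 5 of Zhang et al.), whose hypotheses --- $\sum_i\Exp\|\xi_{ni}\|^4\to0$ and $L^2$ (not merely in-probability) convergence of the conditional covariance --- are precisely what generate the stronger moment condition \eqref{eq:1}. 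You instead stack the $n$ full-data increments and the $r$ sampling increments into one $(n+r)$-term martingale difference array and invoke the standard martingale CLT once; since the limiting conditional covariance $m\Gamma+\rho\Omega$ is constant, that theorem needs only a conditional Lindeberg condition and convergence in probability of the summed conditional covariances. This is a legitimate and arguably cleaner route, and it buys something concrete: your Lyapunov bounds for both blocks close under the weaker $(2+\delta)$-moment condition even when $\rho>0$, so a fully executed version of your argument would establish the lemma under weaker hypotheses than the paper's proof requires.

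That observation exposes the one place where your write-up is not self-consistent: you assert that for $\rho>0$ the coupling of the two blocks ``forces'' condition \eqref{eq:1}, but nothing in your combined array demands more than $(2+\delta)$ moments --- the $(4+2\delta)$ requirement is an artifact of the particular convolution lemma the paper invokes, not of your construction. This is harmless for a sufficiency statement, but it signals that the Lyapunov step was asserted rather than computed. The other step you should not leave as a remark is the $o_P(1)$ replacement of $(\hbeta_{\rp},\hat\Phi_{\rp})$ by $(\beta_0,\Phi)$ in the first block. The error is $\sqrt{r/n}\cdot n^{-1/2}\sum_j(\hat w_j-w_j)\{b^{'}(X_j^{T}\beta_0)-Y_j\}X_j$ with $\hat w_j=\sqrt{b^{''}(X_j^{T}\hbeta_{\rp})}\|L\hat\Phi_{\rp}^{-1}X_j\|$, and because the pilot is drawn from the same full data you must first peel off the $r_{\rp}$ pilot points (an $O_p(r_{\rp}/\sqrt n)=o_p(1)$ contribution, which is exactly where the hypothesis $r_{\rp}/\sqrt n\to0$ enters) and then argue, conditionally on the pilot, that the remaining centered sum has conditional covariance $\Exp\left[(\hat w-w)^2 b^{''}(X^{T}\beta_0)XX^{T}\right]\to0$ by dominated convergence, using $|\hat w-w|\lesssim\|X\|$ and $\Exp(\|X\|^4)<\infty$. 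This is the same two-step argument the paper uses to control $\sqrt{r}Q$, so it is available to you, but it is a genuine step rather than a consequence of pilot consistency alone.
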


We will use the central limit theorem for martingales described in
\cite{jakubowski1980limit} and \cite{Zhang2020optimal} to prove this Lemma in
the supplementary material. In Algorithm~\ref{alg:algo}, the pilot subsample and the
optimal subsample are from the same full data so the unconditional distributions
of the two subsamples are not independent and it is possible to have
overlaps. The assumption $r_{\rp}/\sqrt{n}\to0$ is to ensure that the data
points used in the pilot subsample are asymptotically negligible when deriving
the unconditional asymptotic distribution of $\Psi_{\ruw}^{*}(\beta_{0})$ which
depends on both subsamples. This assumption can be replaced by other
alternatives such as assuming that the pilot estimator is independent of the full data
\citep[e.g.,][]{fithian2014local} and this is appropriate if we modify step 3 of
Algorithm~\ref{alg:algo} to sample from the rest of the data with the pilot
subsample data points removed.

Now, we are ready to show the asymptotic normality of the unweighted estimator.
\begin{theorem}\label{th:Normality}
  Under assumptions A\ref{A1}-A\ref{A3}, assuming that $\Gamma$ is positive-definite, we have
  \begin{equation*}
    \hbeta_{\ruw}-\beta_{0}
    =-\Gamma^{-1}\hat{m}\Psi_{\ruw}^{*}(\beta_{0})
    +o_{p}\left(1/\sqrt{r}\right).
  \end{equation*}
  In addition, under the conditions of Lemma \ref{lm:Normality_of_Psi}
  \begin{equation*}
    \sqrt{r}(\hbeta_{\ruw}-\beta_{0})
    \xrightarrow{d}N(0, \Sigma_{\ruw}^{\rho}),
  \end{equation*}
  where %
  \begin{equation}\label{eq:sigma_uw_rho}
    \Sigma_{\ruw}^{\rho}
    :=m\Gamma^{-1}+\rho\Gamma^{-1}\Omega\Gamma^{-1}.
  \end{equation}

\end{theorem}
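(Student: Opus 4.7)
The plan is to follow the standard Z-estimator roadmap: first establish consistency $\hbeta_{\ruw}\xrightarrow{p}\beta_{0}$; then Taylor expand the estimating equation $\Psi_{\ruw}^{*}(\hbeta_{\ruw})=0$ about $\beta_{0}$; and finally combine the resulting linear representation with Lemmas~\ref{lm:Convergence_of_Gamma} and~\ref{lm:Normality_of_Psi} via Slutsky's theorem to identify the limiting Gaussian.

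For consistency I would show that the sample target $r^{-1}\sumr\{Y_{i}^{*}X_{i}^{*T}\beta-b(X_{i}^{*T}\beta)\}$ converges uniformly on the compact set $\mathbb{B}$ to a population objective uniquely maximized at $\beta_{0}$. Conditioning on the pilot and the full data, the conditional mean of the sample target is $\sumn \pi_{i}^{\os}(\hbeta_{\rp},\hat\Phi_{\rp})\{Y_{i}X_{i}^{T}\beta-b(X_{i}^{T}\beta)\}$, whose limit, by consistency of the pilot and a law of large numbers for numerator and denominator of $\pi_{i}^{\os}$, equals
\[
m^{-1}\Exp\!\left[\sqrt{b^{''}(X^{T}\beta_{0})}\,\|L\Phi^{-1}X\|\,\{YX^{T}\beta-b(X^{T}\beta)\}\right].
\]
The gradient of this limit in $\beta$ vanishes at $\beta_{0}$ because $\Exp(Y\mid X)=b^{'}(X^{T}\beta_{0})$, while its Hessian equals $-\Gamma/m$, which is negative definite since $\Gamma$ is positive definite by assumption. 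Hence $\beta_{0}$ is the unique maximizer and the argmax continuous-mapping theorem yields $\hbeta_{\ruw}\xrightarrow{p}\beta_{0}$. Uniform convergence on $\mathbb{B}$ is supplied by compactness together with the smoothness and moment bounds in A\ref{A1}-A\ref{A3}.

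With consistency in hand, I Taylor expand $\Psi_{\ruw}^{*}$ around $\beta_{0}$,
\[
0=\Psi_{\ruw}^{*}(\hbeta_{\ruw})=\Psi_{\ruw}^{*}(\beta_{0})+\dot\Psi_{\ruw}^{*}(\tilde\beta)(\hbeta_{\ruw}-\beta_{0}),
\]
with $\tilde\beta$ lying on the segment between $\beta_{0}$ and $\hbeta_{\ruw}$, so that $\tilde\beta\xrightarrow{p}\beta_{0}$. Lemma~\ref{lm:Convergence_of_Gamma} then gives $\hat m\,\dot\Psi_{\ruw}^{*}(\tilde\beta)\xrightarrow{p}\Gamma$, and invertibility of $\Gamma$ combined with the bound $\hat m\,\Psi_{\ruw}^{*}(\beta_{0})=O_{p}(1/\sqrt r)$ from Lemma~\ref{lm:Normality_of_Psi} produces the linear representation
\[
\hbeta_{\ruw}-\beta_{0}=-\Gamma^{-1}\hat m\,\Psi_{\ruw}^{*}(\beta_{0})+o_{p}(1/\sqrt r),
\]
which is the first assertion of the theorem. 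Applying Lemma~\ref{lm:Normality_of_Psi} to this representation and invoking Slutsky gives $\sqrt r\,(\hbeta_{\ruw}-\beta_{0})\xrightarrow{d}N(0,\Gamma^{-1}(m\Gamma+\rho\Omega)\Gamma^{-1})$, and the sandwich simplifies to $m\Gamma^{-1}+\rho\,\Gamma^{-1}\Omega\Gamma^{-1}=\Sigma_{\ruw}^{\rho}$ as in \eqref{eq:sigma_uw_rho}.

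The main obstacle is the consistency step. For the reweighted estimator of \cite{Zhang2020optimal}, inverse-probability weighting renders the subsampled estimating equation conditionally unbiased irrespective of the sampling distribution; here the unweighted target is biased relative to the full-data likelihood under the sampling law, so one has to verify that the biased population limit still has $\beta_{0}$ as its unique maximizer (which it does, thanks to $\Exp(Y\mid X)=b^{'}(X^{T}\beta_{0})$ and positive-definiteness of $\Gamma$) and that the pilot-dependent random probabilities do not disturb this argument. A secondary technical point is the weak dependence between the pilot subsample and the optimal subsample, but the same $r_{\rp}/\sqrt n\to 0$ device used in Lemma~\ref{lm:Normality_of_Psi} to make the pilot's contribution asymptotically negligible carries over verbatim here.
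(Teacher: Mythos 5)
Your proposal is correct in outline but takes a genuinely different route from the paper. You follow the classical Z-estimator recipe: prove consistency via uniform convergence plus identification, Taylor-expand the score at an intermediate point, and invoke Lemmas~\ref{lm:Convergence_of_Gamma} and~\ref{lm:Normality_of_Psi} with Slutsky. The paper instead works with the objective function directly: it sets $\gamma(s)=\lambda_{\ruw}^{*}(\beta_{0}+s/\sqrt{r})-\lambda_{\ruw}^{*}(\beta_{0})$, shows $\gamma(s)=s^{T}\sqrt{r}\,\Psi_{\ruw}^{*\mathrm{m}}(\beta_{0})+\tfrac12 s^{T}\Gamma s+o_{p}(\|s\|^{2})$ using the same two lemmas, and then applies the Basic Corollary of Hjort and Pollard. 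Because $b(\cdot)$ is convex, $\gamma$ is a convex function of $s$, and the convexity argument delivers consistency and the linear representation in one stroke --- precisely bypassing what you correctly identify as ``the main obstacle.'' This is the real trade-off between the two approaches: your route requires a separate uniform law of large numbers for the subsampled objective over the compact set $\mathbb{B}$ (which would need the same martingale LLN plus stochastic-equicontinuity machinery the paper already deploys for $\dot\Psi_{\ruw}^{*}$ in the proof of Lemma~\ref{lm:Convergence_of_Gamma}), whereas the paper's route needs only pointwise convergence of $\gamma(s)$ for each fixed $s$.

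Two smaller points in your sketch deserve tightening. First, your identification argument concludes that $\beta_{0}$ is the \emph{unique} maximizer of the population limit from the facts that the gradient vanishes there and the Hessian at $\beta_{0}$ equals $-\Gamma/m\prec 0$; that only makes $\beta_{0}$ a local maximizer. You need global concavity of the limit objective, which does hold here since $b^{''}\geq 0$ makes the Hessian $-m^{-1}\Exp[\sqrt{b^{''}(X^{T}\beta_{0})}\|L\Phi^{-1}X\|\,b^{''}(X^{T}\beta)XX^{T}]$ negative semi-definite at every $\beta$ --- the same convexity the paper exploits. Second, the mean-value form $0=\Psi_{\ruw}^{*}(\beta_{0})+\dot\Psi_{\ruw}^{*}(\tilde\beta)(\hbeta_{\ruw}-\beta_{0})$ with a single intermediate point $\tilde\beta$ is not valid for a vector-valued map; you must apply it coordinatewise (with a different intermediate point per coordinate) or use the integral form of the remainder. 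Lemma~\ref{lm:Convergence_of_Gamma} is stated for arbitrary sequences $\beta_{n}\xrightarrow{p}\beta_{0}$, so the coordinatewise version still goes through. With these repairs your argument is a complete, if longer, alternative proof.
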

Theorem \ref{th:Normality} shows that $\hbeta_{\ruw}$ is asymptotically
unbiased, and from \eqref{eq:sigma_uw_rho} %
we see that the asymptotic variance of $\hbeta_{\ruw}$ can be
split into two parts, $m\Gamma^{-1}$ and
$\rho\Gamma^{-1}\Omega\Gamma^{-1}$. Here, $m\Gamma^{-1}$ is the contribution
from the randomness of subsampling and $\rho\Gamma^{-1}\Omega\Gamma^{-1}$ is due
to the randomness of the full data. If the subsample size $r$ is of a smaller
order than the full data sample size $n$, i.e., $\rho=0$, then the randomness of
the full data is negligible. If $r$ is of the same order as $n$, we need a
stronger moment condition (as stated in Lemma~\ref{lm:Normality_of_Psi}) to
establish asymptotic normality.
In the subsampling setting, we usually expect $r\ll n$, and therefore
$m\Gamma^{-1}$ %
is the dominating term of the asymptotic variance of
$\hbeta_{\ruw}$. %

To estimate the asymptotic variance of
$\hbeta_{\ruw}$, %
we propose the following formulas involving only the selected
subsample: %
\begin{equation}\label{eq:estMSE}
  \hat{\Var}(\hbeta_{\ruw})
  =\oner\hat{m}\hat{\Gamma}^{-1}
  +\onen\hat{\Gamma}^{-1}\hat{\Omega}\hat{\Gamma}^{-1},
\end{equation}
where
\begin{equation*}
  \hat{\Gamma}=\frac{\hat{m}}{r}\sumr
  b^{''}(X_{i}^{*{T}}\hbeta_{\ruw})X_{i}^{*}X_{i}^{*{T}},
\end{equation*}
and
\begin{equation*}
  \hat{\Omega}=\frac{n\hat{m}^{2}}{r}\sumr
  \pi_{i}^{*}b^{''}(X_{i}^{*{T}}\hbeta_{\ruw})X_{i}^{*}X_{i}^{*{T}}.
\end{equation*}
Our estimator of the asymptotic variance follows the similar idea that is
described in \cite{WangZhuMa2017} and \cite{wang2019more}.

\subsection{Efficiency of the unweighted estimator}
\label{sec:efficiency}

In this section, we compare the efficiency of the unweighted estimator
$\hbeta_{\ruw}$ with the weighted estimator
$\hbeta_{\rw}$ defined in~\eqref{eq:def_beta_w}. We first restate
the asymptotic result in \cite{Zhang2020optimal}. In their paper, they proved
that under some regularity conditions, $\hbeta_{\rw}$ is
asymptotically normal
\begin{equation*}
	\Var\{\Psi^{*}_{\rw}(\beta_{0})\}^{-\frac{1}{2}}\Phi(\hbeta_{\rw}-\beta_{0})\xrightarrow{d}N(0,I),
\end{equation*}
where
\begin{equation*}
  \Var\{\Psi^{*}_{\rw}(\beta_{0})\}=\Exp
  \left[\frac{1}{n^{2}}\sumn b^{''}(X_{i}^{T}\beta_{0})X_{i}X_{i}^{T}\left\{\frac{1}{r\pi_{i}}-\oner+1\right\}\right].
\end{equation*}
Denote
\begin{equation*}
  \Lambda:=\Exp\left\{\frac{b^{''}(X^{T}\beta_{0})XX^{T}}{\sqrt{b^{''}(X^{T}\beta_{0})}\|L\Phi^{-1}X\|}\right\},
\end{equation*}
and replace $\pi=\{\pi_{i}\}_{i=1}^{n}$ in
$\Var\{\Psi^{*}_{\rw}(\beta_{0})\}$ with the optimal sampling
probabilities defined in \eqref{eq:pi}. We then have that
\begin{equation}\label{eq:weightedvariance}
  \Var\{\Psi^{*}_{\rw}(\beta_{0})\}%
  =\oner{\frac{n-1}{n}}m\Lambda+\onen\Phi,
\end{equation}
where $m$ is defined in Lemma \ref{lm:Normality_of_Psi} and $\Phi$ is defined in
\eqref{eq:phi}. The details of the calculation are presented in the
supplementary material. From \eqref{eq:weightedvariance}, if $r/n\to\rho$, the asymptotic
variance of $\sqrt{r}(\hbeta_{\rw}-\beta_{0})$ is
\begin{equation}\label{eq:sigma_w_rho}
  \Sigma_{\rw}^{\rho}
  :=m\Phi^{-1}\Lambda\Phi^{-1}+\rho\Phi^{-1}.
\end{equation}
The asymptotic variance $\Sigma_{\rw}^{\rho}$ consists of two parts: the
term $m\Phi^{-1}\Lambda\Phi^{-1}$ is due to the randomness of subsampling while
the term $\rho\Phi^{-1}$ is due to the randomness of the full data. Similarly,
in the asymptotic variance $\Sigma_{\ruw}^{\rho}$ defined in
(\ref{eq:sigma_uw_rho}) for the unweighted estimator, $m\Gamma^{-1}$ is due to
the randomness of subsampling and $\rho\Gamma^{-1}\Omega\Gamma^{-1}$ is due to
the randomness of the full data. We have the following results comparing the
aforementioned terms for the weighted and unweighted estimators.
\begin{theorem}\label{th:Efficiency}
  If $\Phi$, $\Gamma$ and $\Lambda$ are finite and positive-definite, then
  \begin{equation*}
    \Gamma^{-1}\leq \Phi^{-1}\Lambda\Phi^{-1},\quad
    and\quad\Gamma^{-1}\Omega\Gamma^{-1}\geq \Phi^{-1},
  \end{equation*}
  where the inequalities are in the Loewner ordering.
\end{theorem}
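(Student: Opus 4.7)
The plan is to reduce both inequalities to a single tool: the matrix Cauchy--Schwarz inequality, which states that for any random vectors $U,V\in\mathbb{R}^{p}$ with $\Exp(VV^{T})$ positive definite,
\begin{equation*}
\Exp(UU^{T}) \;\geq\; \Exp(UV^{T})\{\Exp(VV^{T})\}^{-1}\Exp(VU^{T})
\end{equation*}
in the Loewner ordering. This follows immediately from the positive semidefiniteness of $\Exp\{(U-MV)(U-MV)^{T}\}$ with $M=\Exp(UV^{T})\{\Exp(VV^{T})\}^{-1}$. Combined with the fact that inversion reverses the Loewner order on the positive-definite cone (so $A\geq B\succ 0$ implies $B^{-1}\geq A^{-1}$), each of the two claims will reduce to identifying the correct pair $(U,V)$ built from $b''(X^{T}\beta_{0})$, $\|L\Phi^{-1}X\|$, and $X$.

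For the first inequality $\Gamma^{-1}\leq\Phi^{-1}\Lambda\Phi^{-1}$, I would take
\begin{equation*}
U = \{b''(X^{T}\beta_{0})\}^{3/4}\|L\Phi^{-1}X\|^{1/2}X,\qquad V = \{b''(X^{T}\beta_{0})\}^{1/4}\|L\Phi^{-1}X\|^{-1/2}X;
\end{equation*}
the exponents are chosen so that $\Exp(UU^{T})=\Gamma$, $\Exp(VV^{T})=\Lambda$, and in $\Exp(UV^{T})$ the factors $\|L\Phi^{-1}X\|^{\pm1/2}$ cancel and $b''(X^{T}\beta_{0})^{3/4+1/4}$ collapses to $b''(X^{T}\beta_{0})$, yielding $\Phi$. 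Matrix Cauchy--Schwarz then gives $\Gamma\geq\Phi\Lambda^{-1}\Phi$, and inverting both sides produces the claim. For the second inequality I would instead take $U=\{b''(X^{T}\beta_{0})\}^{1/2}X$ and $V=b''(X^{T}\beta_{0})\|L\Phi^{-1}X\|X$, which yield $\Exp(UU^{T})=\Phi$, $\Exp(VV^{T})=\Omega$, and $\Exp(UV^{T})=\Gamma$; Cauchy--Schwarz then gives $\Phi\geq\Gamma\Omega^{-1}\Gamma$, and inverting produces $\Gamma^{-1}\Omega\Gamma^{-1}\geq\Phi^{-1}$.

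The only real obstacle is spotting the two $(U,V)$ factorizations; once they are in hand the rest is a line of Cauchy--Schwarz plus the monotonicity of matrix inversion, with no delicate estimates required. Assumptions A\ref{A1} and A\ref{A2} ensure all relevant second moments (in particular $\Omega$, via boundedness of $b''$ and $\Exp\|X\|^{4}<\infty$) are finite, and the positive-definiteness hypotheses of Theorem~\ref{th:Efficiency} on $\Phi$, $\Gamma$, and $\Lambda$ guarantee that every inverse appearing in the argument is well defined.
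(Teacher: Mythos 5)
Your proposal is correct and is essentially the paper's argument: the paper also proves both bounds by a matrix Cauchy--Schwarz/completing-the-square step, setting $\mathbf{v}=\sqrt{b''(X^{T}\beta_{0})}\,X$ and $h=\sqrt{b''(X^{T}\beta_{0})}\,\|L\Phi^{-1}X\|$ so that $\Phi=\Exp(\mathbf{v}\mathbf{v}^{T})$, $\Gamma=\Exp(h\mathbf{v}\mathbf{v}^{T})$, $\Lambda=\Exp(h^{-1}\mathbf{v}\mathbf{v}^{T})$, $\Omega=\Exp(h^{2}\mathbf{v}\mathbf{v}^{T})$, which matches your $(U,V)$ pairs up to relabeling. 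The only (cosmetic) difference is that the paper takes $\Exp(\mathbf{f}\mathbf{f}^{T})\geq 0$ for $\mathbf{f}=\sqrt{h}\,\Gamma^{-1}\mathbf{v}-h^{-1/2}\Phi^{-1}\mathbf{v}$ (and analogously $\mathbf{g}=\Phi^{-1}\mathbf{v}-h\Gamma^{-1}\mathbf{v}$), so the difference of the two sides appears directly on the inverted scale and one never needs to invert $\Omega$, whereas your route invokes antitonicity of the inverse and implicitly requires $\Omega\succ 0$ --- which does hold, since $a^{T}\Omega a=0$ would force $h\,a^{T}\mathbf{v}=0$ a.s.\ and hence $a^{T}\Gamma a=0$, contradicting the positive-definiteness of $\Gamma$.
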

From Theorem~\ref{th:Efficiency}, $m\Gamma^{-1}\leq
m\Phi^{-1}\Lambda\Phi^{-1}$. Thus, compared with the weighted estimator, the
unweighted estimator has a smaller asymptotic variance component from the
randomness of subsampling. On the other hand, since
$\rho\Gamma^{-1}\Omega\Gamma^{-1}\geq \rho\Phi^{-1}$, the asymptotic variance
component due to the full data randomness is larger for the unweighted
estimator. A major motivation of subsampling is to reduce the computational or
data measurement cost significantly, so it is typical that $r\ll n$ and 
therefore $\rho$ is typically very small. In this scenario, the asymptotic
variance component due to subsampling is the dominating term, and the unweighted
estimator has a higher estimation efficiency than the weighted estimator. In the
case that $r/n\to0$, the asymptotic variance component due to full data
randomness is negligible.

We can also get some insights on the difference between the weighted and
unweighted estimators by considering them conditionally on the full data. Given
the full data, the subsample weighted estimator $\hbeta_{\rw}$ is asymptotically
unbiased for the full data unweighted MLE $\hbeta_{\text{MLE}}$ in
\eqref{eq:3},
while the subsample unweighted estimator $\hbeta_{\ruw}$ is asymptotically
unbiased for the full data weighted MLE defined as
\begin{equation*}
  \hbeta_{\text{wMLE}}:=\mathop{\arg\max}_{\beta}\onen\sumn{w_i}
  \left\{Y_{i}X_{i}^{T}\beta-b(X_{i}^{T}\beta)\right\},
\end{equation*}
where $w_i=\sqrt{b^{''}(X_i^T\beta_{0})}\|L\Phi^{-1}X_i\|$ does not
depend on the $\{Y_i\}$. Here, $\hbeta_{\text{wMLE}}$ is asymptotically unbiased
for the true parameter because the weights $\{w_i\}$ are only related to
the $\{X_i\}$.  We see that $\hbeta_{\rw}$ and $\hbeta_{\ruw}$ essentially approximate
different full data estimators $\hbeta_{\text{MLE}}$ and $\hbeta_{\text{wMLE}}$,
respectively. It is well known that $\hbeta_{\text{MLE}}$ is more efficient than
$\hbeta_{\text{wMLE}}$ based on the full data, but its variation is much
smaller than that of $\hbeta_{\rw}$ or $\hbeta_{\ruw}$, and it is negligible if
$r/n\rightarrow0$. Thus the variation of $\hbeta_{\rw}$ around
$\hbeta_{\text{MLE}}$ and the variation of $\hbeta_{\ruw}$ around
$\hbeta_{\text{wMLE}}$ are the major components of the asymptotic variances of
$\hbeta_{\rw}$ and $\hbeta_{\ruw}$ in terms of estimating the true parameter.

When the model is correctly specified, then $\hbeta_{\rw}$ and $\hbeta_{\ruw}$
are consistent for the same true parameter. However, if the model is
misspecified, then $\hbeta_{\rw}$ and $\hbeta_{\ruw}$ will typically converge
to different limits. Heuristically, $\hbeta_{\rw}$ will converge to the
solution of $\Exp[X\{Y-b^{'}(X^{T}\beta)\}]$ while $\hbeta_{\ruw}$ will converge
to the solution of $\Exp[wX\{Y-b^{'}(X^{T}\beta)\}]$ with
$w=\sqrt{b^{''}(X^T\beta_0)}\|L\Phi^{-1}X\|$. In this scenario, it is
difficult to compare the efficiency of $\hbeta_{\rw}$ with that of
$\hbeta_{\ruw}$, because it is unknown which solution is closer to the true
data-generating parameter.

\section{NUMERICAL RESULTS} \label{sec:numerical}

We investigate the efficiency of the unweighted estimator in parameter
estimation through numerical experiments in this section. We present simulation
results in Section \ref{sec:simulation} and experiments for real data in Section
\ref{sec:realdata}.

\subsection{Simulation Results}\label{sec:simulation}

In this section, we use simulations to evaluate the performance of the more
efficient estimator we proposed. To compare with the original OSUMC estimator,
we use the same setups as described in Section 5 and in the appendix of
\cite{Zhang2020optimal}, and show numerical results for logistic, Poisson and
linear regressions.

\subsubsection{Logistic Regression and Poisson regression}\label{sec:simulog}

We first present simulations for logistic regression for which the conditional
density of the response has the form
\begin{equation*}
f(y|x,\beta_0)=\exp\left\{yx^T\beta_0-\log(1+e^{x^T\beta_0})\right\},
\text{ for } y= 0, 1.
\end{equation*}
This model implies that the probability of $Y=1$ given $X$ is
\begin{equation*}
P(Y=1|X,\beta_0)=\frac{e^{X^T\beta_0}}{1+e^{X^T\beta_0}}.
\end{equation*}
To generate the full data, we set the true parameter $\beta_0$ as a 20
dimensional vector with all entries equal to 1. The full data sample size
is $n=100,000$ and four distributions of $X$ are
considered, which are exactly the same distributions used in
\cite{Zhang2020optimal}. We present these four covariate distributions below for
completeness:
\begin{enumerate}
\item \textbf{mzNormal}: The covariate $X$ follows a multivariate normal
  distribution $N(0,\Sigma)$, where $\Sigma_{ij}=0.5^{I{(i\neq j)}}$ and
  $I(\cdot)$ represents the indicator function. We have almost equal numbers of
  1's and 0's in this scenario.
\item \textbf{nzNormal}: The covariate $X$ follows a multivariate normal
  distribution $N(0.5,\Sigma)$, where $\Sigma$ is defined in mzNormal. In this
  scenario, roughly 75\% of the responses are 1's.
\item \textbf{unNormal}: The covariate $X$ follows a multivariate normal
  distribution $N(0,\Sigma_{1})$, where $\Sigma_{1}=U_{1}\Sigma U_{1}$,
  $U_{1}=diag(1,1/2,...,1/20)$ and $\Sigma$ is the same covariance matrix as we
  used in mzNormal. For this case, the components of $X$ have different
  variances.
\item \textbf{mixNormal}: The covariate $X$ follows a mixed multivariate normal
  distribution, namely, $X\sim0.5N(0.5,\Sigma)+0.5N(-0.5,\Sigma)$, where
  $\Sigma$ is the same as what we used in mzNormal.
\end{enumerate}
To compare the performance of the new estimator with the weighted one, we use
the empirical MSE of $\hbeta$:
\begin{equation}\label{eq:MSE}
	\text{eMSE}(\hbeta)=\frac{1}{S}\sum_{s=1}^{S}\|\hbeta^{(s)}-\beta_{0}\|.
\end{equation}
Here, $\hbeta^{(s)}$ is the estimated parameter we obtained in the $s$-th
repetition of the simulation. We repeated the simulation for $S=500$ times to
calculate $\text{eMSE}(\hbeta)$. For the pilot estimate, we used
$r_{\rp}=500$ for both weighted and unweighted methods. In every
repetition, we generated the full data, which means we focus on the
unconditional empirical MSE. Figure \ref{fig:unconditionMSE_logistic} shows that
our unweighted estimator performs better than the original OSUMC weighted
estimator under each setting when applied to logistic regression. This is
  true for both A-optimality and L-optimality criteria. For instance, when $X$ has a
mixNormal distribution, the emprical MSE of the weighted estimator is over 1.15
times as large as that of the unweighted one. In most cases,
  $\pi_i^{\mmse}$ and $\pi_i^{\mvc}$ perform similarly. When $X$ has a unNormal
  design, $\pi_i^{\mmse}$ performs significantly better than $\pi_i^{\mvc}$
  because the A-optimality aims to directly minimize asymptotic MSE.

To evaluate the performance of \eqref{eq:estMSE} in estimating the asymptotic
variance, we compare $\text{tr}\{\hat{\Var}(\hbeta_{\ruw})\}$ with the
empirical variance. Figure \ref{fig:unconditionEMSE_logistic} shows that the
estimated variances are very close to the empirical variances under the logistic
regression model.
\begin{figure}[H]
  \centering
  \begin{subfigure}{0.48\textwidth}
    \includegraphics[width=\textwidth]{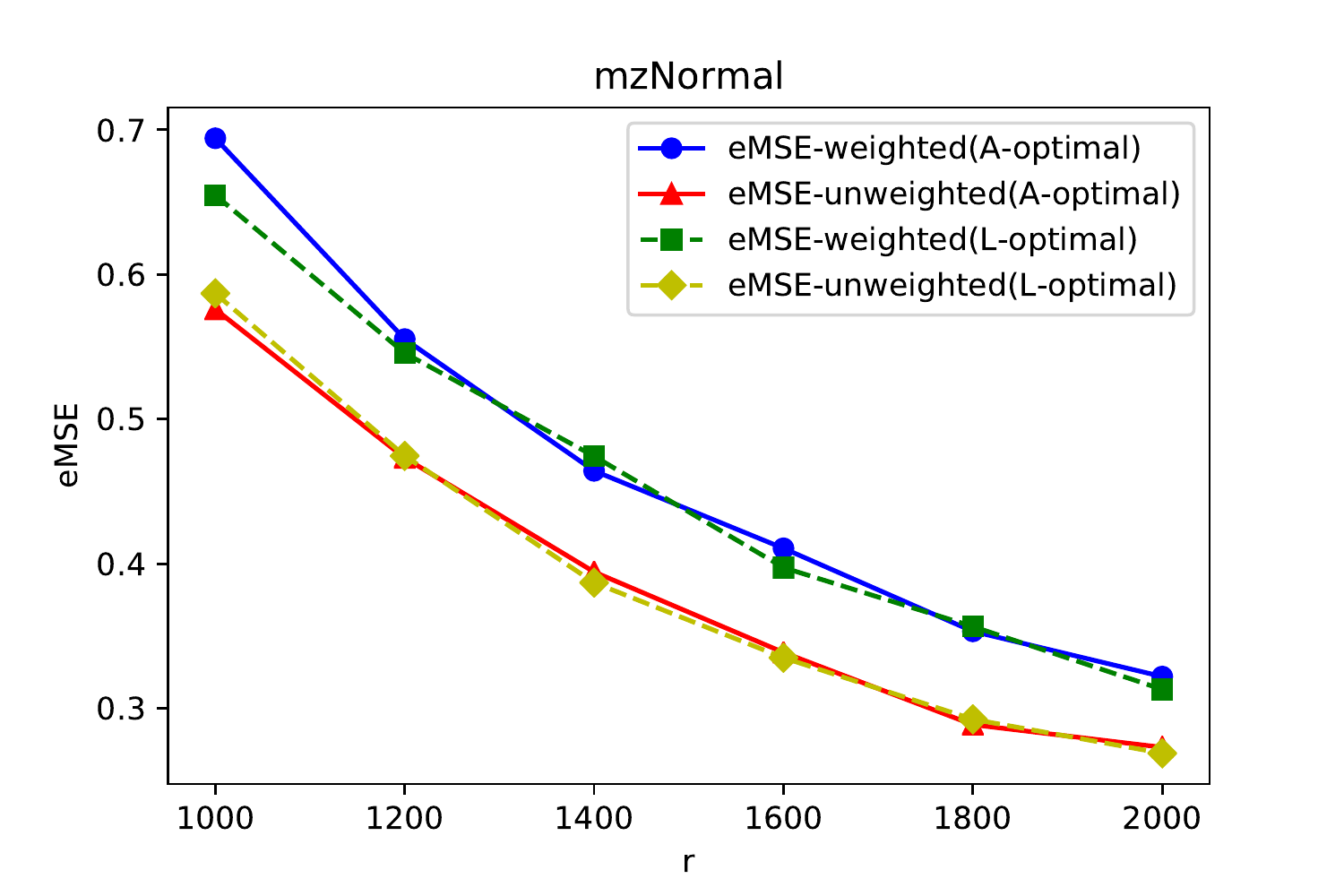}
    \caption{mzNormal}
  \end{subfigure}
  \begin{subfigure}{0.48\textwidth}
    \includegraphics[width=\textwidth]{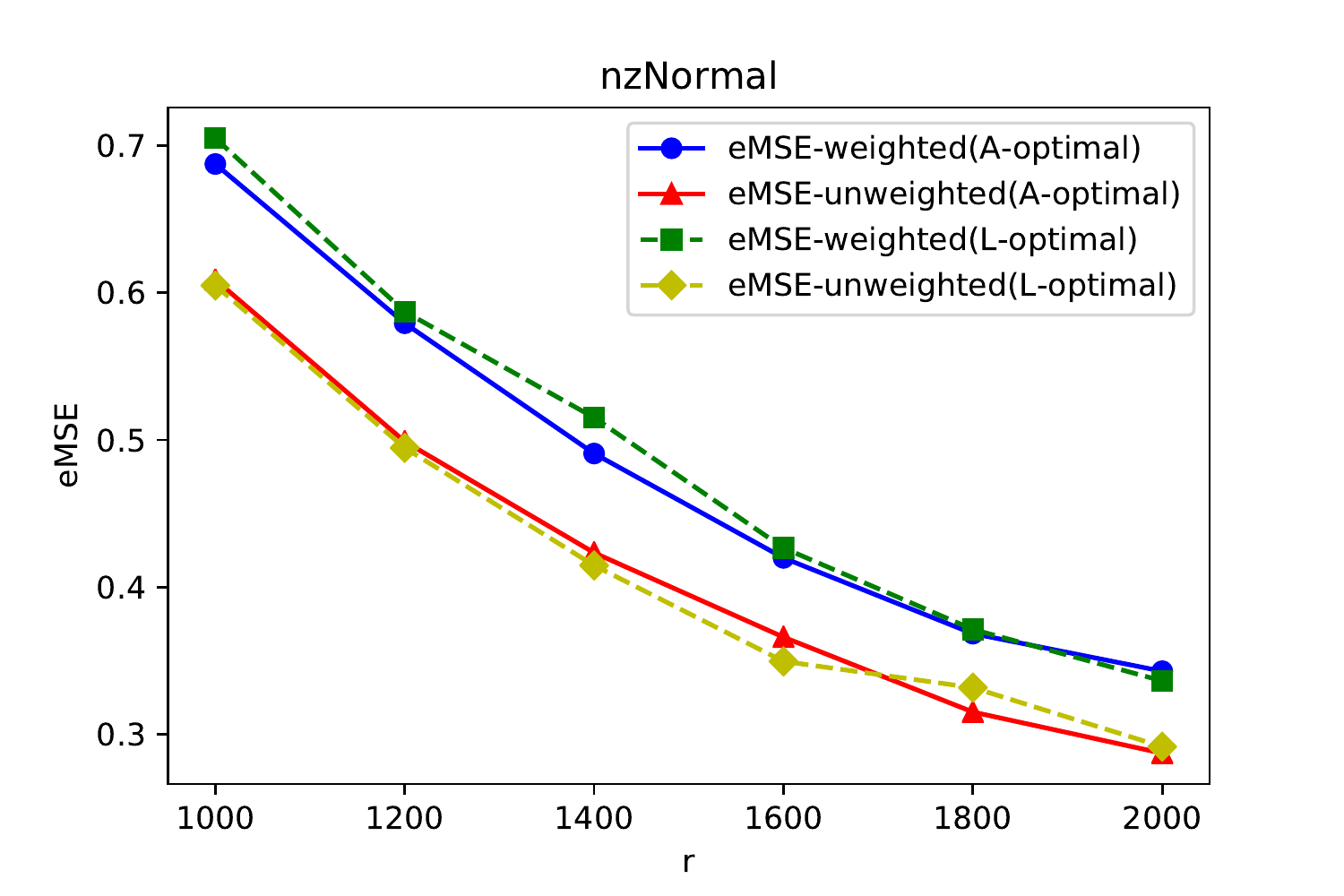}
    \caption{nzNormal}
  \end{subfigure}
  \begin{subfigure}{0.48\textwidth}
    \includegraphics[width=\textwidth]{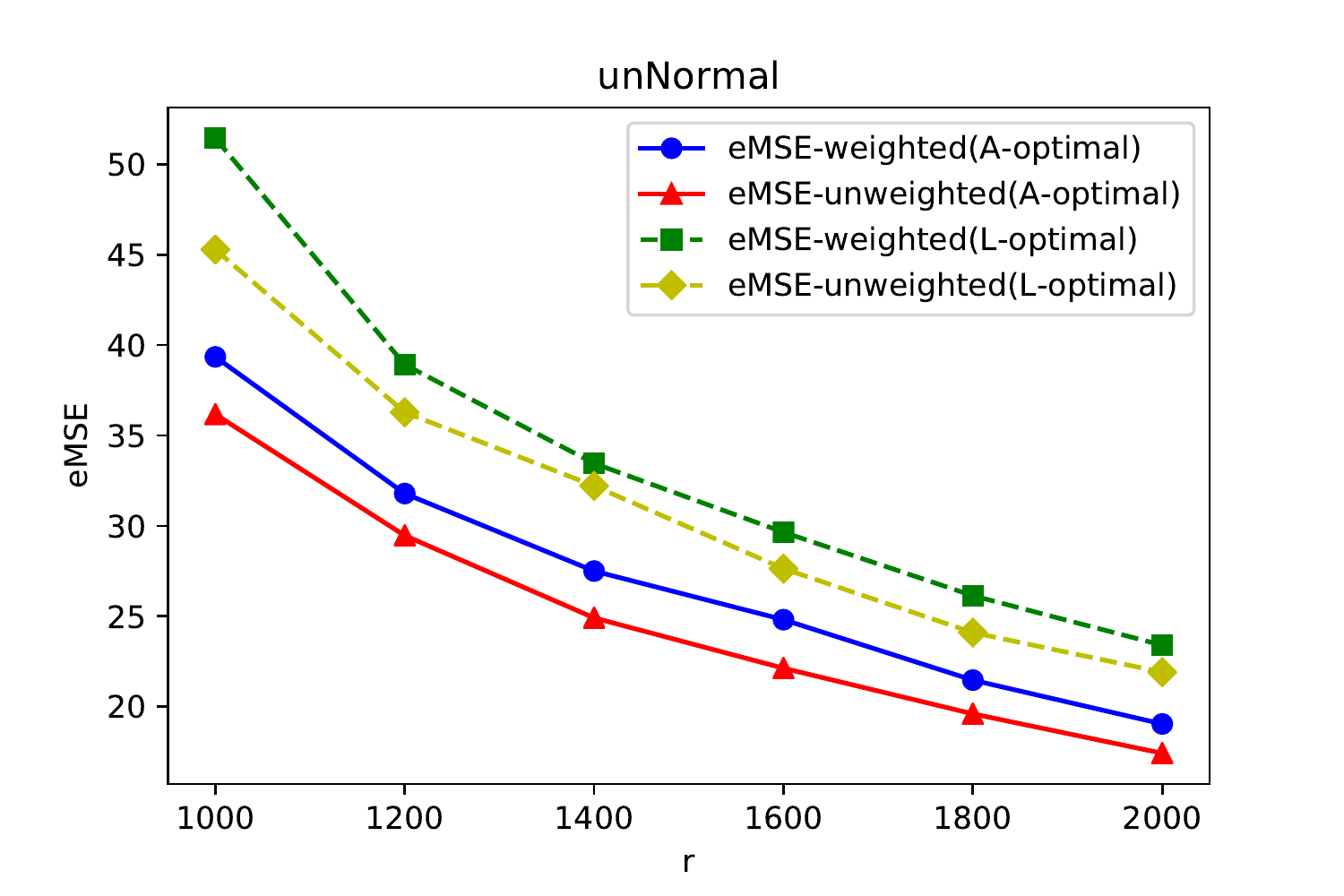}
    \caption{unNormal}
  \end{subfigure}
  \begin{subfigure}{0.48\textwidth}
    \includegraphics[width=\textwidth]{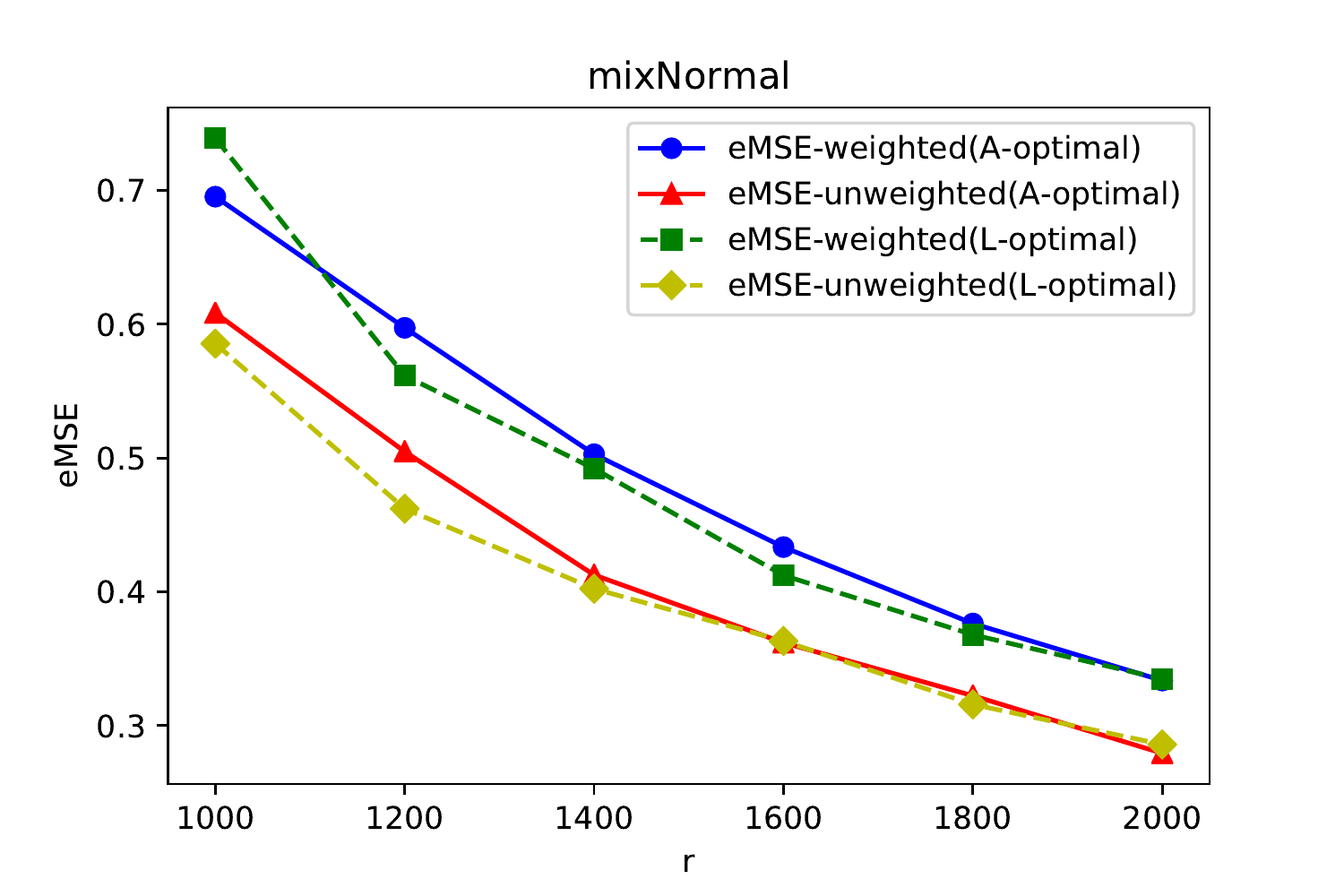}
    \caption{mixNormal}
  \end{subfigure}
  \caption{eMSE for different subsample sizes $r$ with a pilot sample size
    $r_{\rp}=500$ for logistic regression under different settings.}
  \label{fig:unconditionMSE_logistic}
\end{figure}
\begin{figure}[H]
	\centering
	\begin{subfigure}{0.48\textwidth}
		\includegraphics[width=\textwidth]{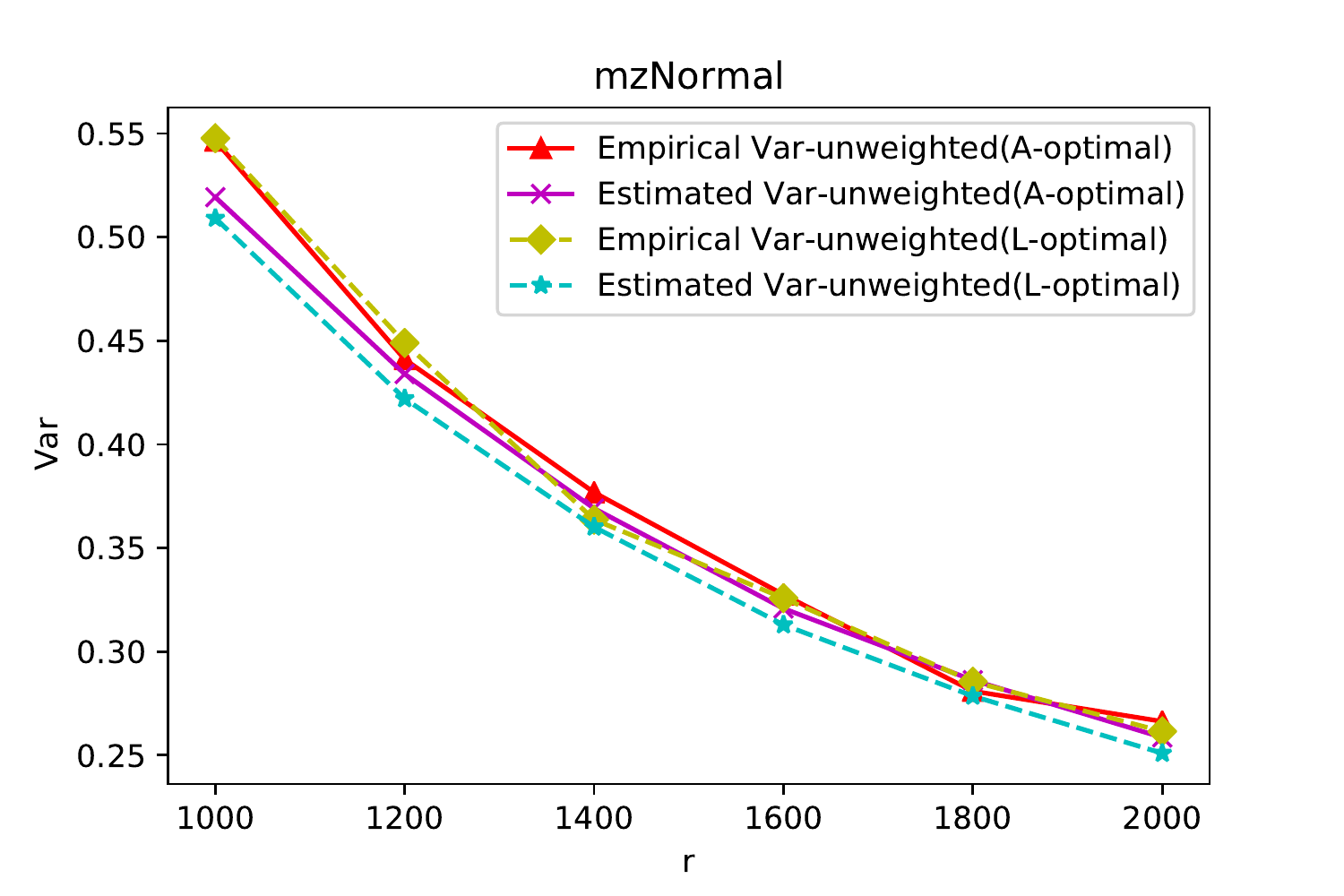}
		\caption{mzNormal}
	\end{subfigure}
	\begin{subfigure}{0.48\textwidth}
		\includegraphics[width=\textwidth]{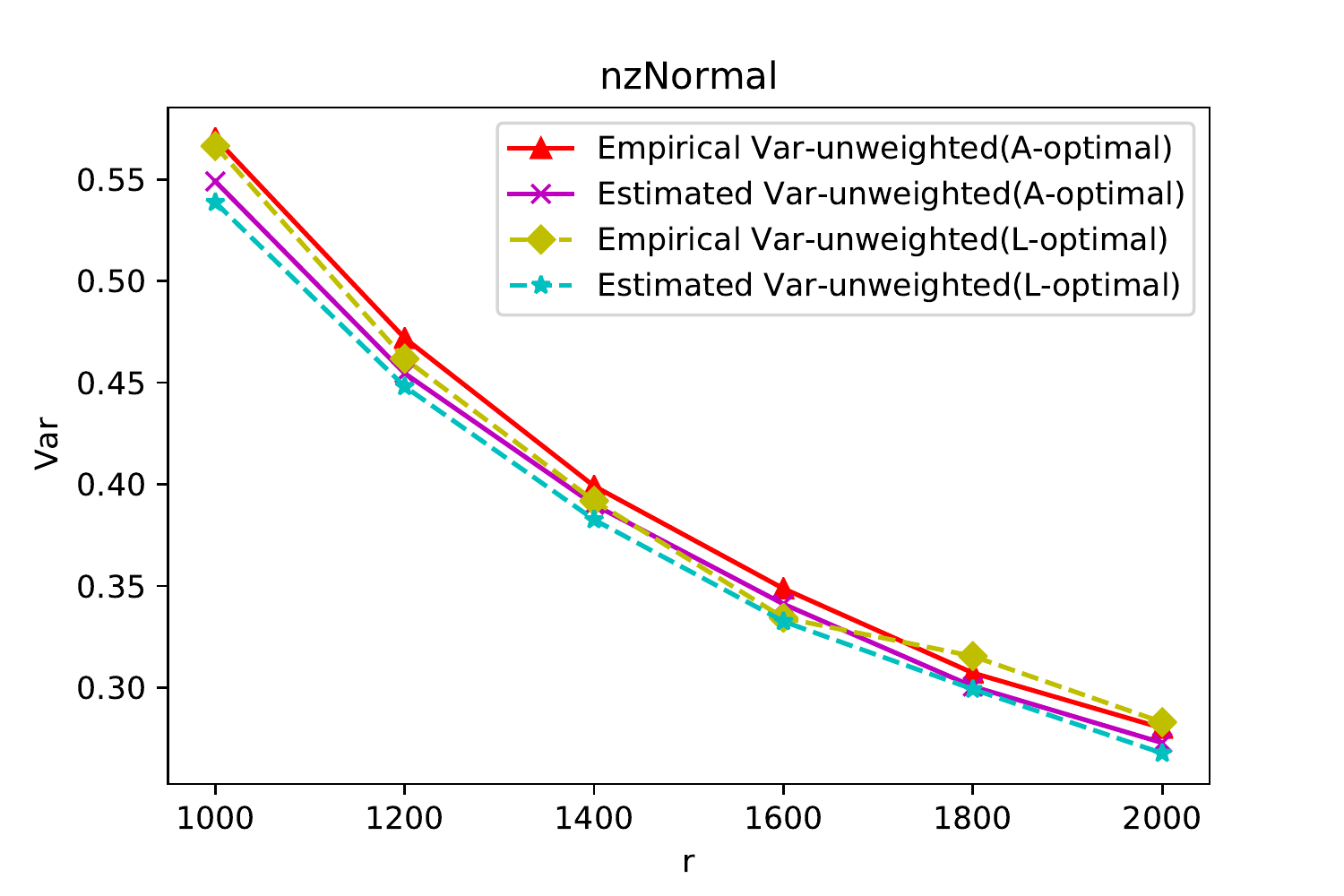}
		\caption{nzNormal}
	\end{subfigure}
	\begin{subfigure}{0.48\textwidth}
		\includegraphics[width=\textwidth]{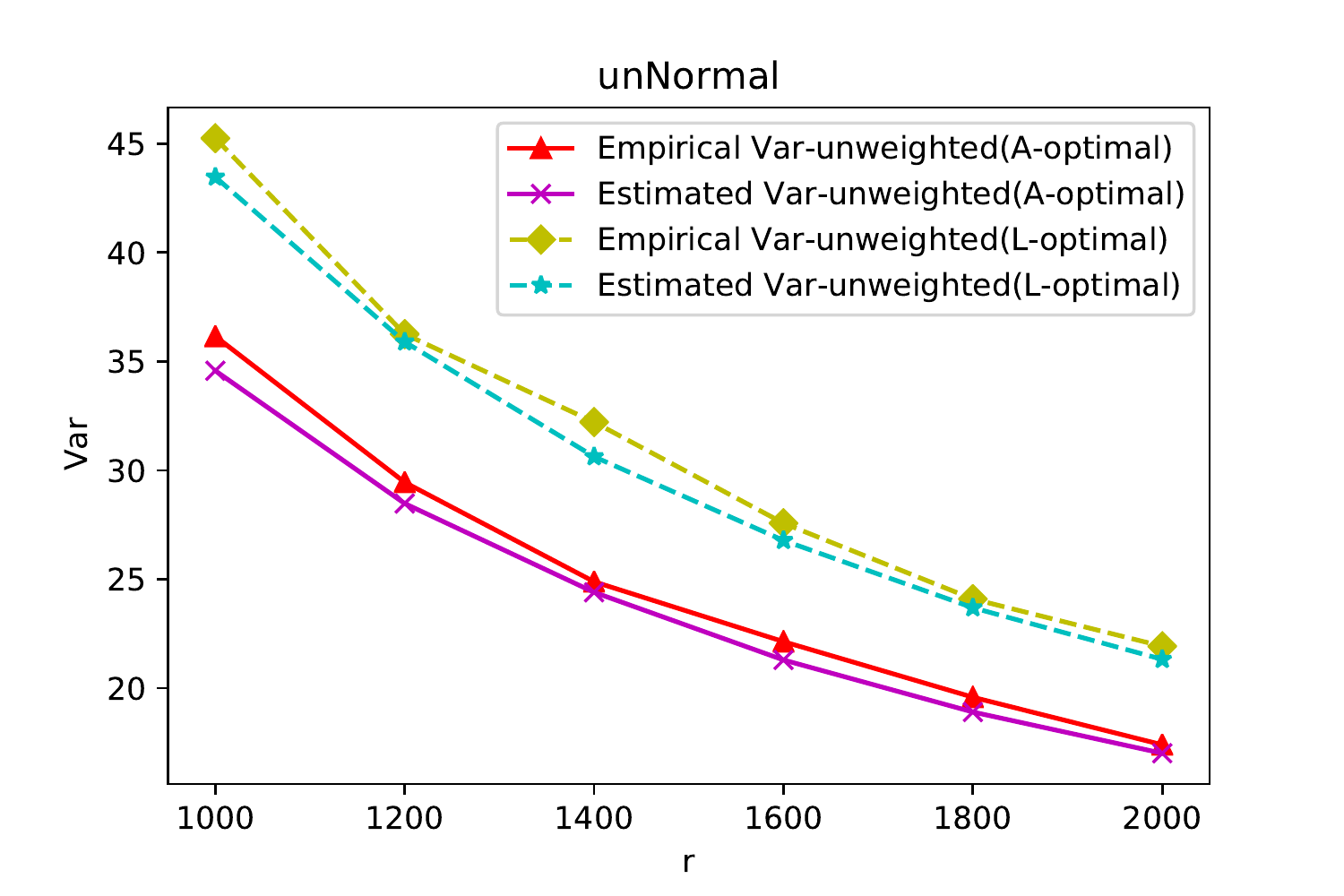}
		\caption{unNormal}
	\end{subfigure}
	\begin{subfigure}{0.48\textwidth}
		\includegraphics[width=\textwidth]{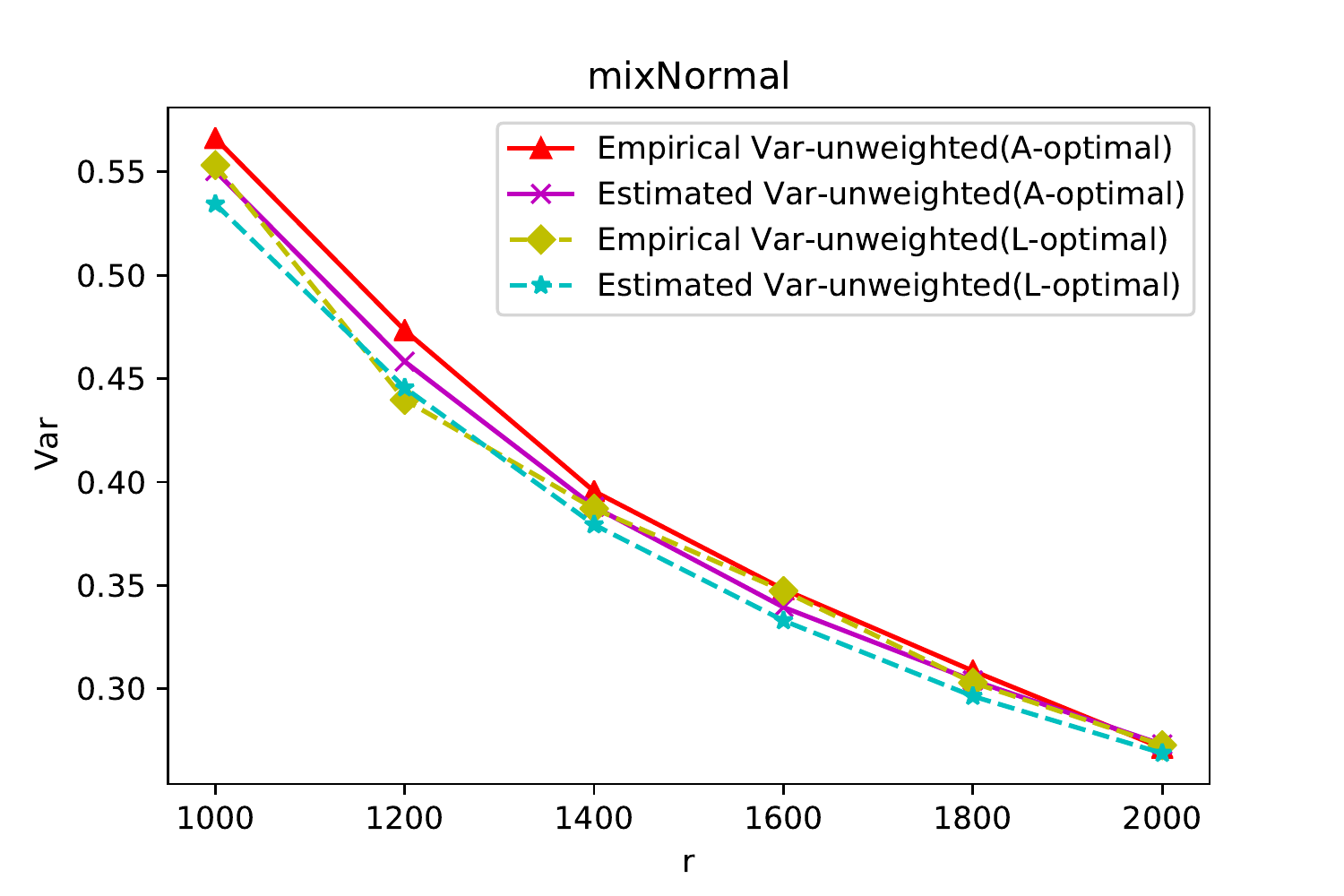}
		\caption{mixNormal}
	\end{subfigure}
	\caption{Empirical variance and estimated variance,
    $\text{tr}\{\hat{\Var}(\hbeta_{\ruw})\}$, for different subsample sizes $r$
    with a pilot sample size $r_{\rp}=500$ for the unweighted estimator under
    different settings.}
	\label{fig:unconditionEMSE_logistic}
\end{figure}

Performances of the unweighted estimator under the Poisson regression are also
investigated. The Poisson regression model has a form of
\begin{equation*}
  f(y|x,\beta_0)=\exp\left\{ yx^T\beta_0-e^{x^T\beta_0}-\log(y!) \right\},
  \text{ for } y = 0, 1, 2, ...
\end{equation*}
We generated $n=100,000$ data points. A $100\times 1$ vector of
$0.5$ is used as the true value of the parameter, $\beta_{0}$, in this
scenario. We use the same settings discussed in the appendix of
\cite{Zhang2020optimal}. Specifically, covariates are generated using
the following two settings:
\begin{enumerate}
\item \textbf{Case 1}: Each component of $X$ is generated independently from the
  uniform distribution over $[-0.5, 0.5]$.
\item \textbf{Case 2}: First half of the components of $X$ are generated
  independently from the uniform distribution over $[-0.5, 0.5]$, and the other
  half of the components of $X$ are generated indepedently from the uniform
  distribution over $[-1, 1]$.
\end{enumerate}
Again we repeated the experiment for $S=500$ times and in each repetition we
sampled $r_{\rp}=500$ data points to obtain pilot estimates. We also
compared the empirical MSE defined in \eqref{eq:MSE} and calculated
$\text{tr}\{\hat{\Var}(\hbeta_{\ruw})\}$ to investigate the
performance of the estimated variance defined in \eqref{eq:estMSE}. Empirical
MSEs of the unweighted and weighted estimators are presented in Figure
\ref{fig:uncondition_MSE_poisson}. For Poisson regression, our unweighted
  estimator also outperforms the weighted OSUMC estimator under both criteria,
  and $\pi_i^{\mmse}$ and $\pi_i^{\mvc}$ perform similarly. For Case
  1, The empirical MSE of the weighted estimator is around 1.5 times as large as
  that of the unweighted estimator we proposed. For Case 2, the empirical MSE of
  our estimator is about half of that of the weighted estimator. The results for
  the estimated variances are presented in Figure
  \ref{fig:unconditionEMSE_poisson}. The estimated variance we proposed in
  \eqref{eq:estMSE} also works well under the Poisson model.
\begin{figure}[H]
	\centering
	\begin{subfigure}{0.48\textwidth}
		\includegraphics[width=\textwidth]{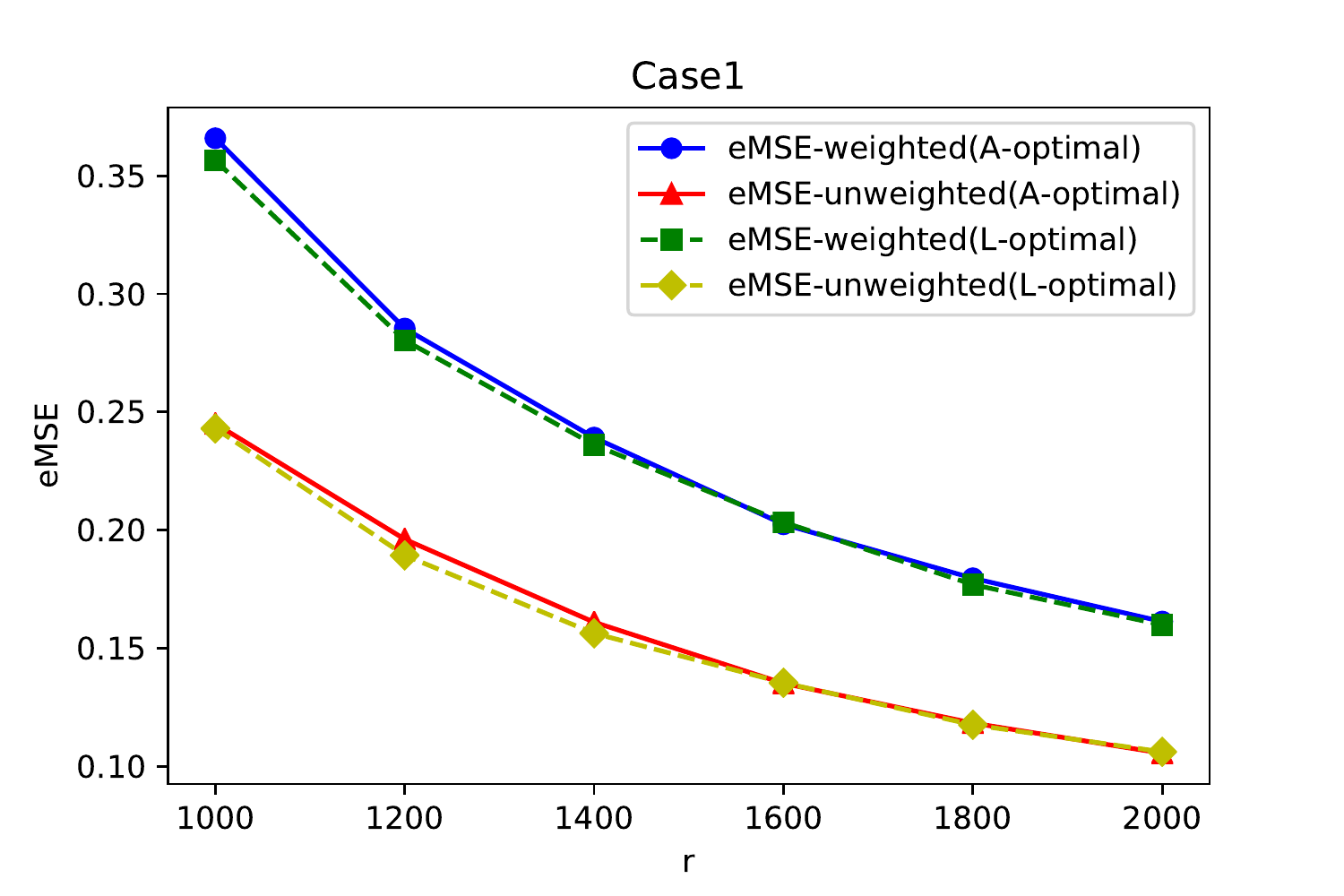}\\[-9mm]
		\caption{Case1}
	\end{subfigure}
	\begin{subfigure}{0.48\textwidth}
		\includegraphics[width=\textwidth]{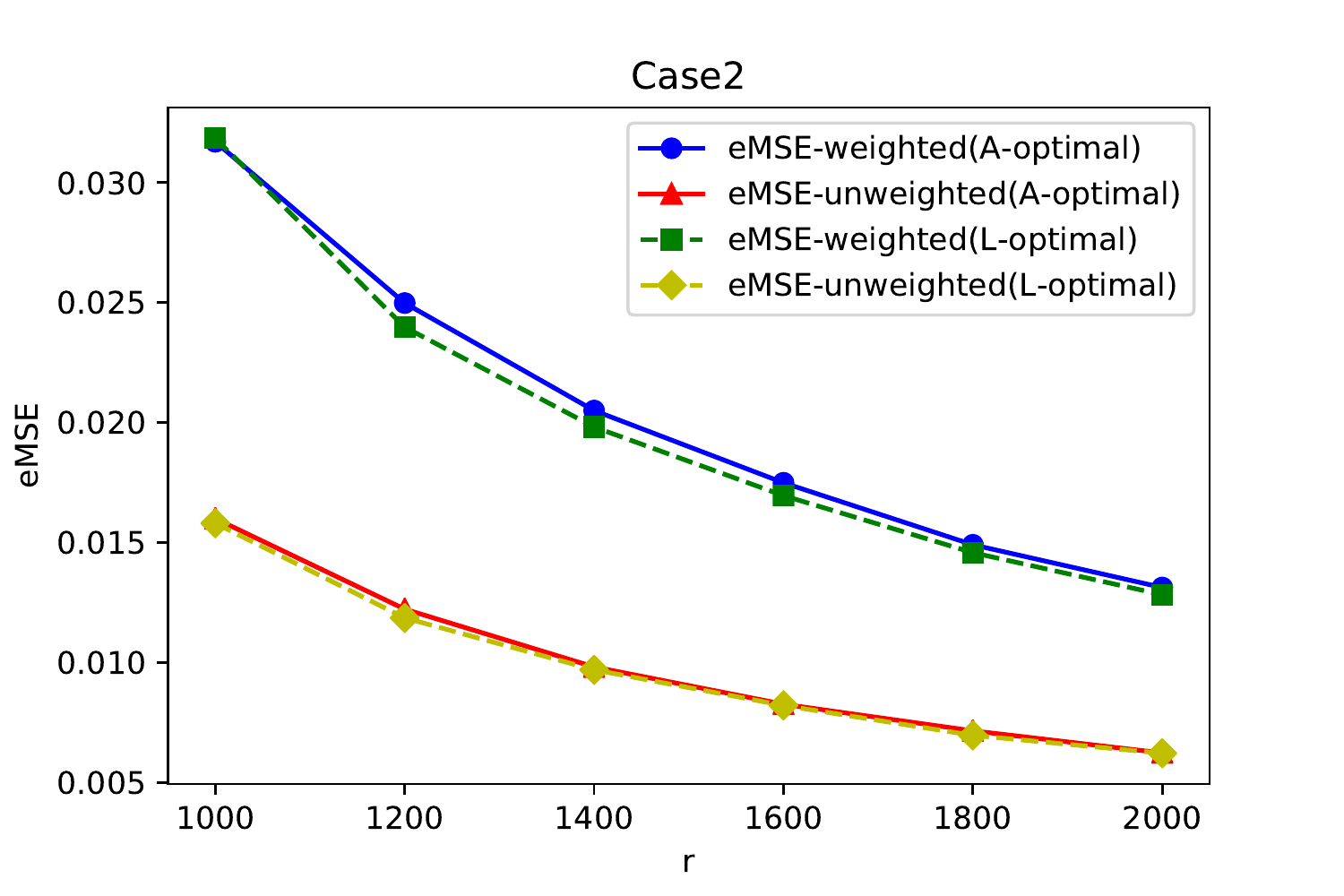}\\[-9mm]
		\caption{Case2}
	\end{subfigure}
	\caption{eMSE for different subsample sizes $r$ with a pilot sample size
    $r_{\rp}=500$ for Poisson regression under different settings.}
	\label{fig:uncondition_MSE_poisson}
\end{figure}
\begin{figure}[H]
	\centering
	\begin{subfigure}{0.48\textwidth}
		\includegraphics[width=\textwidth]{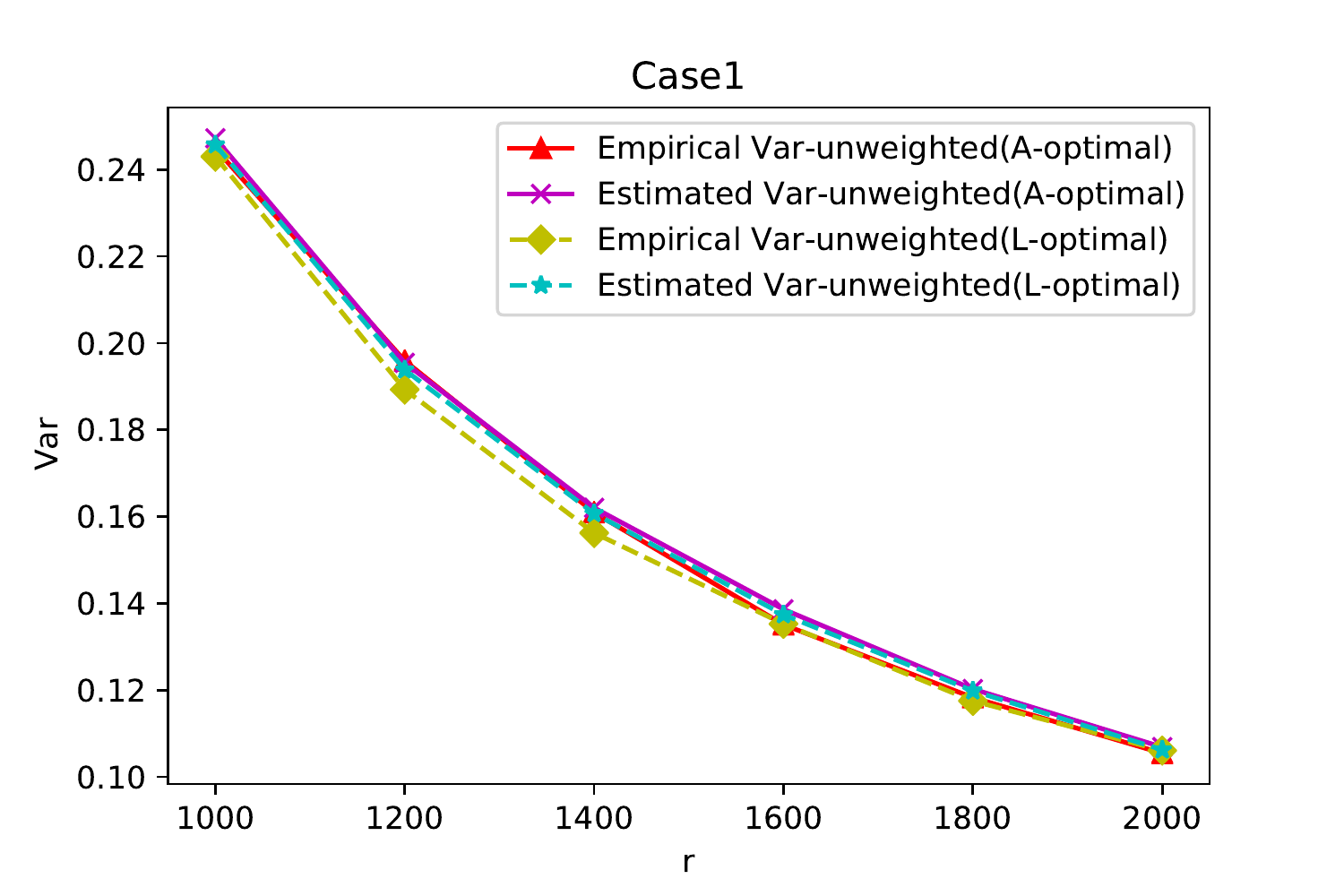}\\[-9mm]
		\caption{Case1}
	\end{subfigure}
	\begin{subfigure}{0.48\textwidth}
		\includegraphics[width=\textwidth]{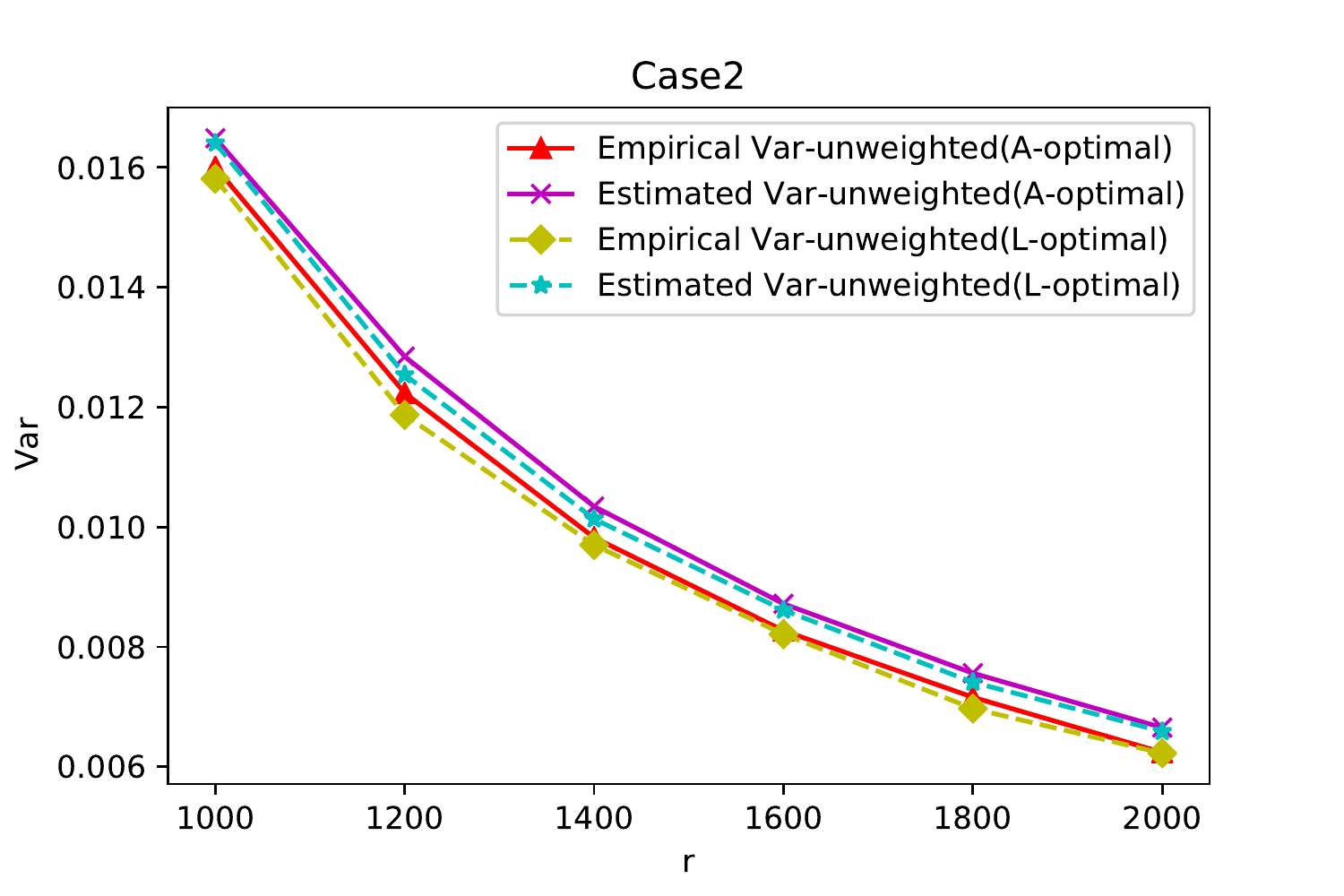}\\[-9mm]
		\caption{Case2}
	\end{subfigure}
	\caption{Empirical variance and estimated variance,
    $\text{tr}\{\hat{\Var}(\hbeta_{\ruw})\}$, for different
    subsample sizes $r$ with a pilot sample size $r_{\rp}=500$ for the
    unweighted estimator under different settings.}
	\label{fig:unconditionEMSE_poisson}
\end{figure}

\subsubsection{Linear Model}

We now present simulation results for linear regression. We used the
settings in \cite{Zhang2020optimal} which generated full data of size
$n=100,000$ from the following model:
\begin{equation*}
	Y=X\beta_{0}+\epsilon,
\end{equation*}
where
$\beta_{0}=(\underbrace{0.1,...,0.1}_\text{5},\underbrace{10,...,10}_\text{20},\underbrace{0.1,...,0.1}_\text{5})^{T}$
is a 30 dimensional vector, and $\epsilon\sim N(0,9I_{n})$. We used the
following distributions of $X$:
\begin{enumerate}
\item \textbf{GA}: The covariate $X$ follows a multivariate normal distribution
  $N(1_{p},\Sigma_{2})$, where $p=30$, $\Sigma_{2}=U_{2}\Sigma U_{2}$ and
  $U_{2}=diag(5,5/2,...,5/30)$. The entries of $\Sigma$ are
  $\Sigma_{ij}=0.5^{I(i\neq j)}$, which is the same as we defined before.
\item \textbf{T3}: The covariate $X$ follows a multivariate t-distribution which
  has degrees of freedom 3, $T_{3}(0,\Sigma_{2})$, and $\Sigma_{2}$ is defined
  in GA above.
\item \textbf{T1}: The covariate $X$ follows a multivariate t-distribution which
  has degrees of freedom 1, $T_{1}(0,\Sigma_{2})$, and $\Sigma_{2}$ is the same
  as GA.
\item \textbf{EXP}: Components of $X$ are i.i.d. from an exponential
  distribution with a rate parameter of 2.
\end{enumerate}

The first three settings are exactly the same settings used in
\cite{Zhang2020optimal}. The last setting is used in \cite{WangYangStufken2018}
and \cite{wang2019more}. Since the sampling probabilities are not related to the
responses for linear models, Algorithm \ref{alg:algo} can be simplified. For
completeness, we present the simplified algorithm as Algorithm
\ref{alg:algolinear}, which is similar to the algorithm used in
\cite{ma2015statistical}.

\begin{algorithm}[tbp]%
  \caption{Unweighted estimation for linear model under measurement constraints}
  \label{alg:algolinear}
  \begin{algorithmic}[1]
    \STATE Caculate the sampling probabilities $\{\pi_{i}\}_{i=1}^{n}$ using the
    following formula:
    \begin{equation*}
      \pi_{i}^{\mmse}=\frac{\left\|\left(\sumjn
            X_{j}X_{j}^{T}\right)^{-1}X_{i}\right\|}{\sum_{k=1}^{n}\left\|\left(\sumjn
            X_{j}X_{j}^{T}\right)^{-1}X_{k}\right\|}
    \end{equation*}
    \STATE Obtain a subsample $\{(X_{i}^{*},Y_{i}^{*})\}_{i=1}^{r}$ of size $r$
    according to the sampling probabilities
    $\{\pi_{i}^{\mmse}\}_{i=1}^{n}$ using sampling with replacement, and
    solve the estimation equation
    \begin{equation*}
      \Psi_{\ruw}^{*}(\beta):=\oner\sumr(X_{i}^{*{T}}\beta-Y_{i}^{*})X_{i}^{*}=0,
    \end{equation*}
    to obtain the unweighted estimator.
  \end{algorithmic}
\end{algorithm}

We also repeated the simulation for $S=500$ times and
compared the empirical MSEs. In this section, we present the numerical results
under A-optimality only. The results under L-optimality are similar and we
present them in the supplementary material. Simulation results of
unconditionally empirical MSE are presented in Figure
\ref{fig:unconditionMSE_linear}. We see that the unweighted estimator is more
efficient in every case. Especially, when $X$ has a $T_{3}$ or $T_{1}$
distribution, the unweighted estimator performs significantly better than the
weighted estimator. As described in \cite{Zhang2020optimal}, the OSUMC estimator
outperforms other sampling methods more obviously when $X$ is heavy-tailed. We
notice that using the unweighted estimator, the advantage of OSUMC can be
significantly reinforced when the design is heavy-tailed, despite $X$ 
not meeting the regularity conditions we presented in Section \ref{sec:theory}.
\begin{figure}[H]
	\centering
	\begin{subfigure}{0.48\textwidth}
		\includegraphics[width=\textwidth]{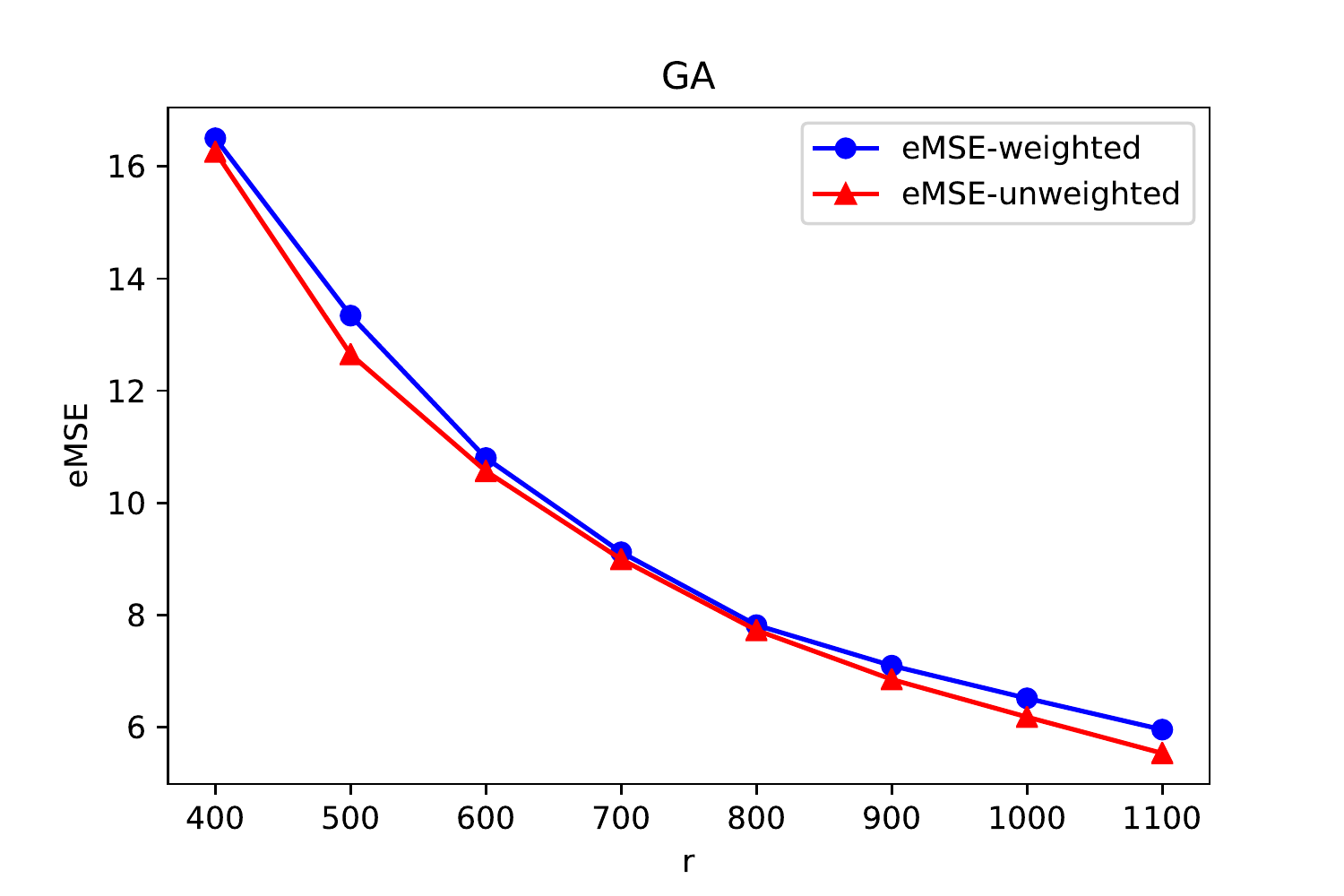}
		\caption{GA}
	\end{subfigure}
	\begin{subfigure}{0.48\textwidth}
		\includegraphics[width=\textwidth]{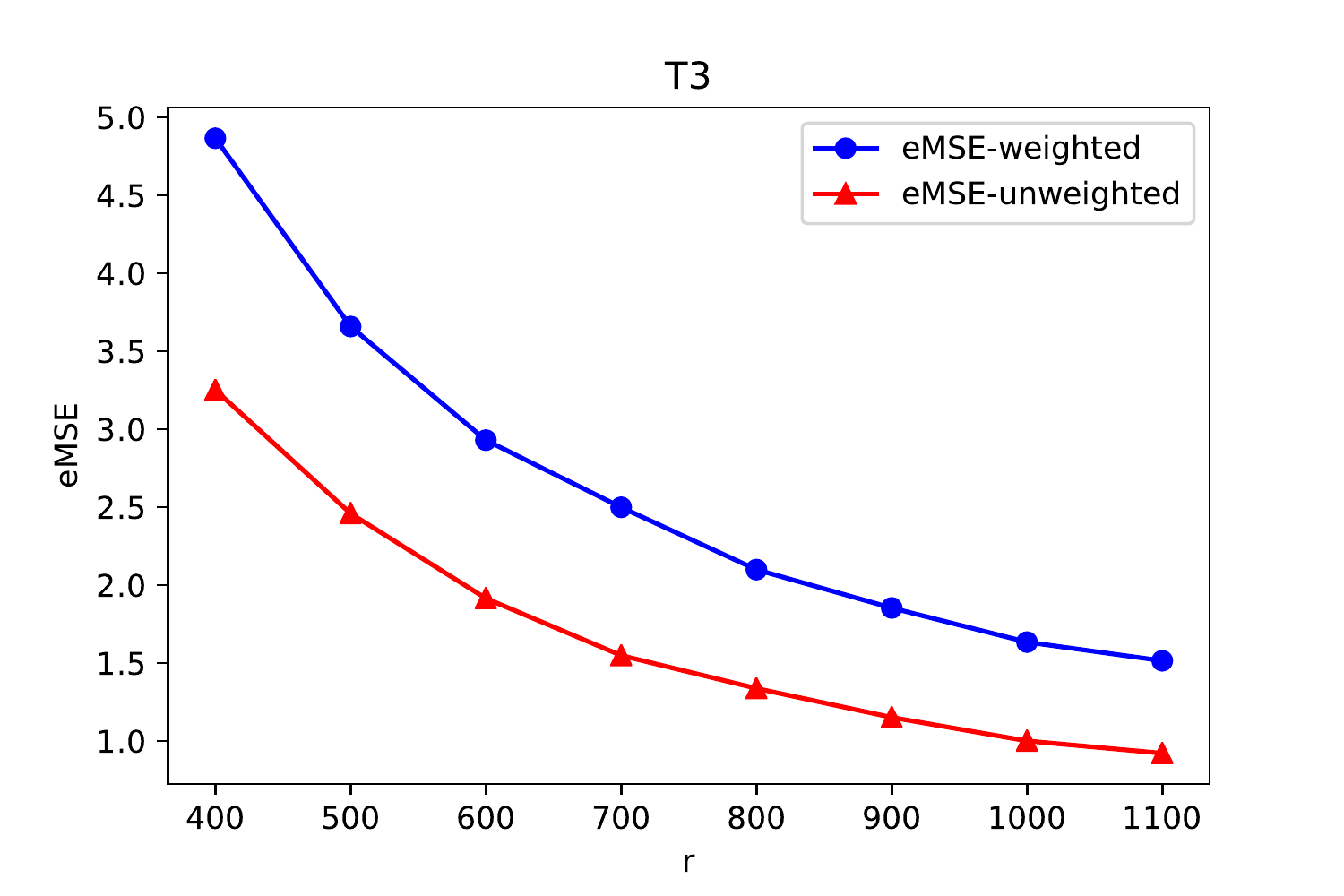}
		\caption{T3}
	\end{subfigure}
	\begin{subfigure}{0.48\textwidth}
		\includegraphics[width=\textwidth]{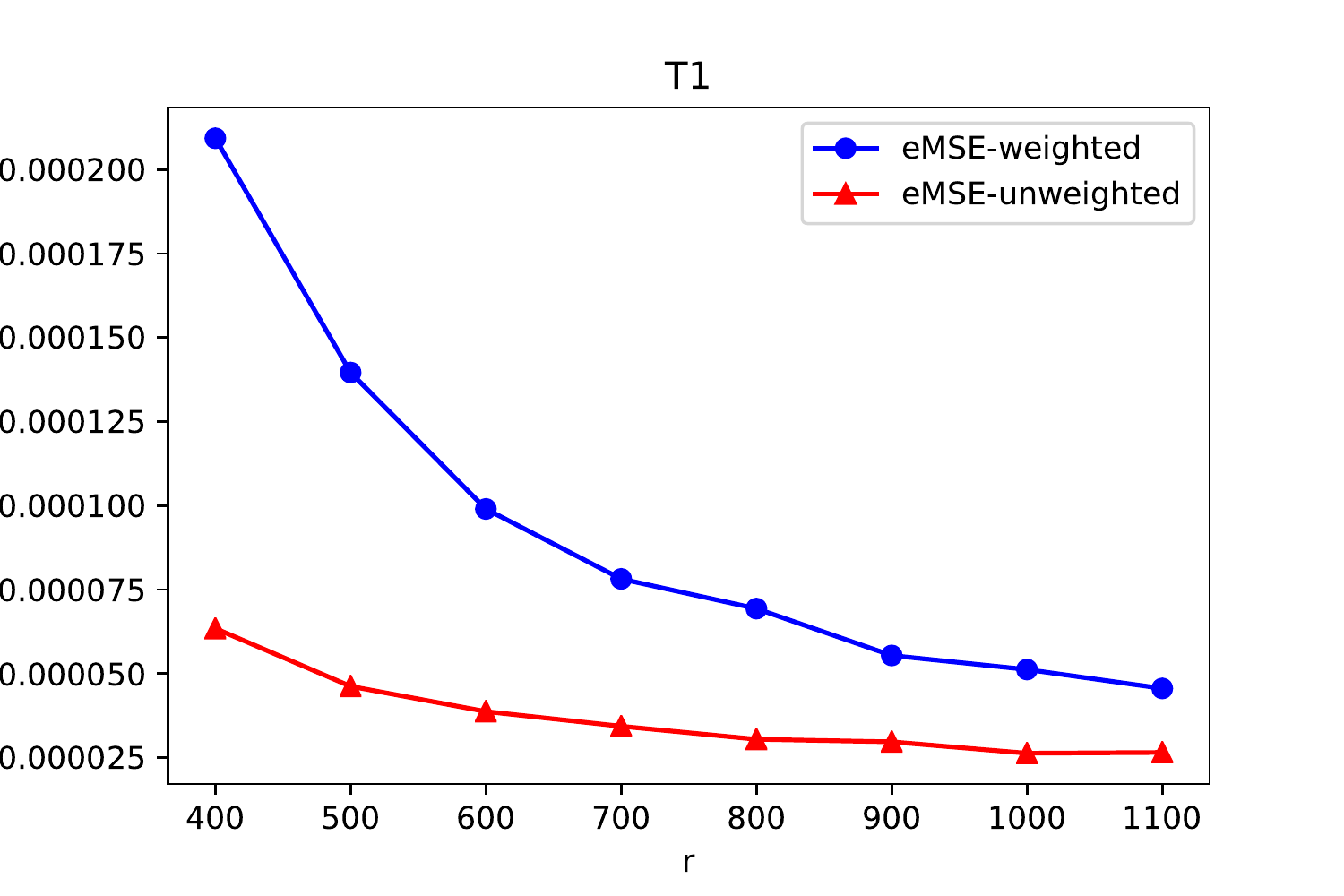}
		\caption{T1}
	\end{subfigure}
	\begin{subfigure}{0.48\textwidth}
		\includegraphics[width=\textwidth]{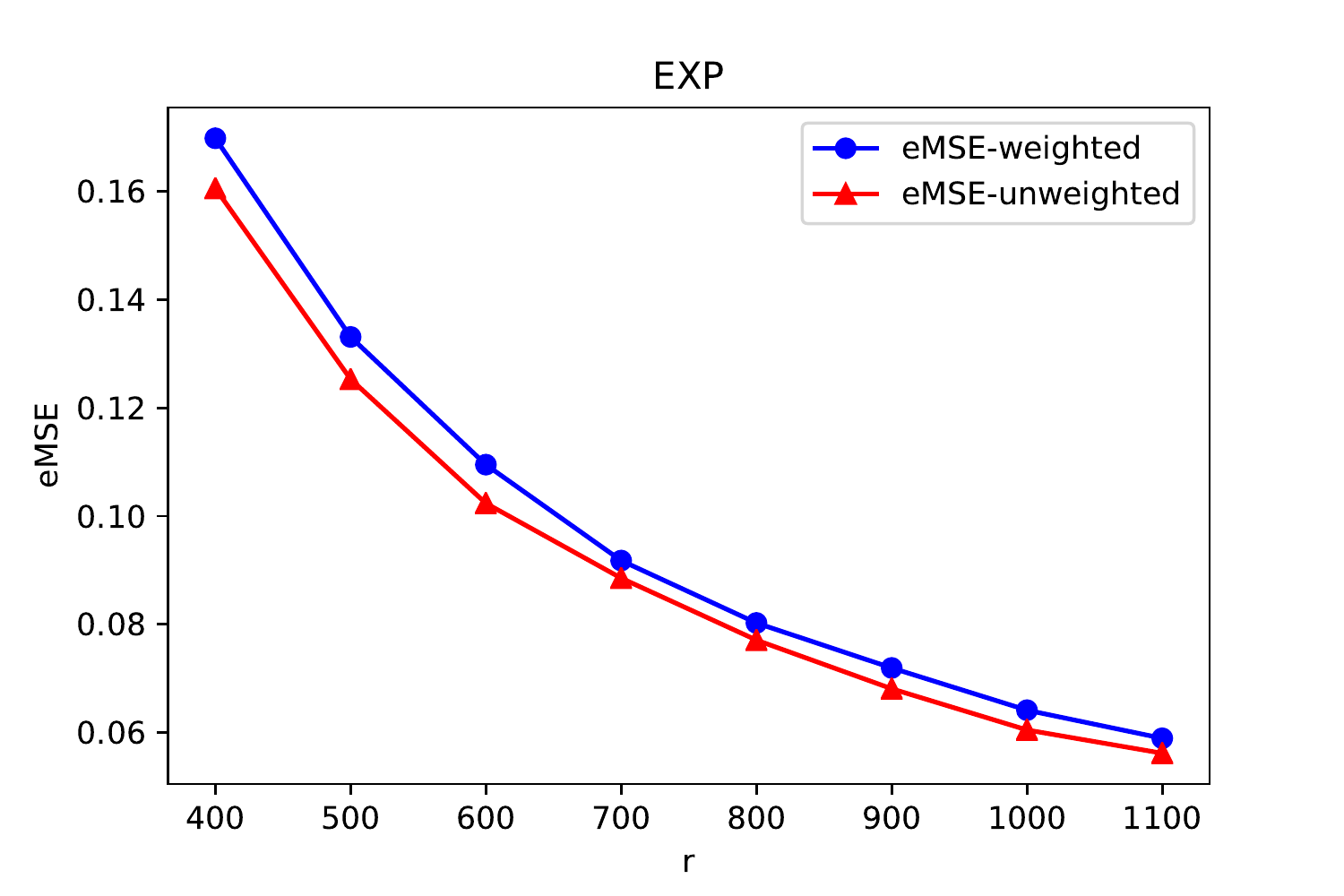}
		\caption{EXP}
	\end{subfigure}
	\caption{eMSE for different subsample sizes $r$ for linear regression under
    different settings.}
	\label{fig:unconditionMSE_linear}
\end{figure}

\subsubsection{Computational Complexity}

We present the computation times for the simulations based
on logistic regression in Table~\ref{tb:computime}. We used the same four
settings for the logistic regression in Section~\ref{sec:simulog}, and
repeated the experiments for $S=500$ times. We recorded the computing time for
the weighted and unweighted procedures and implemented both $\pi^{\mmse}$
and $\pi^{\mvc}$ using Python. Our computations were carried out on a laptop
running Windows 10 with an Intel i5 processor and 8GB memory, and we used the
package: \verb|sklearn.linear_model.LogisticRegression| for optimization. We present the results with
subsample size $r=1000$. The results for other subsample size are similar and
thus are omitted.
\begin{table}[H]
\caption{Computational time (seconds)}
\label{tb:computime}
\centering
\begin{tabular}{lcccccc}
\hline
 & \multicolumn{2}{c}{A-optimiality} && \multicolumn{2}{c}{L-optimality} &  \multirow{2}{*}{Full data}\\
  \cline{2-3}  \cline{5-6}
          & weighted & unweighted &  & weighted & unweighted                 \\ \hline
mzNormal  & 42.38    & 36.70      &  & 32.15    & 26.42      & 177.10 \\
nzNormal  & 39.98    & 36.38      &  & 30.31    & 26.89      & 165.15 \\
unNormal  & 41.20    & 37.47      &  & 32.61    & 28.69      & 256.45 \\
mixNormal & 40.97    & 36.14      &  & 31.92    & 27.45      & 162.44 \\ \hline
\end{tabular}
\end{table}

In Table~\ref{tb:computime}, both the weighted and unweighted subsample
estimators significantly reduce the computation time compared with the
MLE. The computation time of the unweighted estimator is not significantly
different from that of the weighted estimator. The probabilities based on
L-optimality reduce computation time more than the probabilities
based on the A-optimality, in agreement with the analysis in
Section~\ref{sec:theory}. Interestingly, we see that the unweighted estimator
is faster than the weighted estimator. This is because the target function of
the unweighted estimator is usually smoother than that of the weighted
estimator, and thus it takes fewer iterations for the algorithm to
converge. To confirm this, we present the average
numbers of iterations for optimizing the weighted and unweighted target functions
in Table~\ref{tb:itertimes}.
\begin{table}[H]
\caption{Average number of iterations of the optimization algorithm}
\label{tb:itertimes}
\centering
\begin{tabular}{lccccc}
\hline
 & \multicolumn{2}{c}{A-optimiality} && \multicolumn{2}{c}{L-optimality}\\
  \cline{2-3}  \cline{5-6}
          & weighted & unweighted &  & weighted & unweighted \\ \hline
mzNormal  & 18.53    & 10.68      &  & 18.51    & 10.77 \\
nzNormal  & 19.08    & 10.85      &  & 18.74    & 10.88 \\
unNormal  & 22.71    & 11.81      &  & 22.77    & 12.42 \\
mixNormal & 19.04    & 10.85      &  & 18.88    & 10.82 \\ \hline
\end{tabular}
\end{table}

\subsection{Experiments for real data}\label{sec:realdata}

We apply our more efficient unweighted estimator to real data and evaluate its
performance in this section.

\subsubsection{Superconductivty Data Set}

In this section, we apply our more efficient estimator to the superconductivty
data set used in \cite{Zhang2020optimal}. The data set is
available from the Machine Learning Repository at
\url{https://archive.ics.uci.edu/ml/datasets/Superconductivty+Data\#}. It
contains 21,263 different data points, and every data point has 81 features with
one continuous response. Each data point represents a superconductor. The
response is the superconductor's critical temperature and the features are
extracted from its chemical formula. For example, the 81st column is the number
of elements of the superconductor. We use standardized features as covariate
variables and adopted a multiple linear regression model to fit the critical
temperature from the chemical formula of the superconductor. Specially, the
linear regression model is
  \begin{equation*}
    Y=\beta^{(0)}+\beta^{(1)}Z_1+\beta^{(2)}Z_2+...+\beta^{(81)}Z_{81}+\epsilon,
  \end{equation*}
  where $Z_i$s represent the standardized features, $Y$ is the critical temperature, and
  $\epsilon$ is the normally distributed error.  To measure the performances of sampling
methods in parameter estimation, we use the empirical MSE of the estimator
\begin{equation}\label{eq:MSE_real}
	\text{eMSE}(\hbeta)=\frac{1}{S}\sum_{s=1}^{S}\|\hbeta^{(s)}-\hbeta_{\text{MLE}}\|,
\end{equation}
and the relative efficiency
\begin{equation}\label{eq:RE_real}
	\text{Relative Efficiency}=\frac{\text{eMSE}(\hbeta_{\rw})}{\text{eMSE}(\hbeta_{\ruw})},
\end{equation}
where $\hbeta^{(s)}$ represents the estimate in the $s$-th repetition. Here
we use the full data estimator $\hbeta_{\text{MLE}}$ instead of the ``true''
parameter $\beta_{0}$ to calculate eMSE because the true parameter is unknown
for real data sets. We repeated the experiment for $S=1000$ times, and
present the numerical results in Figure \ref{fig:Real_SuperConduct}. Our
unweighted estimator also outperforms the weighted estimator when applied to the
Superconductivity data set and $\pi_i^{\mmse}$ result in smaller eMSE than
  $\pi_i^{\mvc}$ for both the weighted and unweighted estimators.
\begin{figure}[H]
	\centering
	\begin{subfigure}{0.48\textwidth}
		\includegraphics[width=\textwidth]{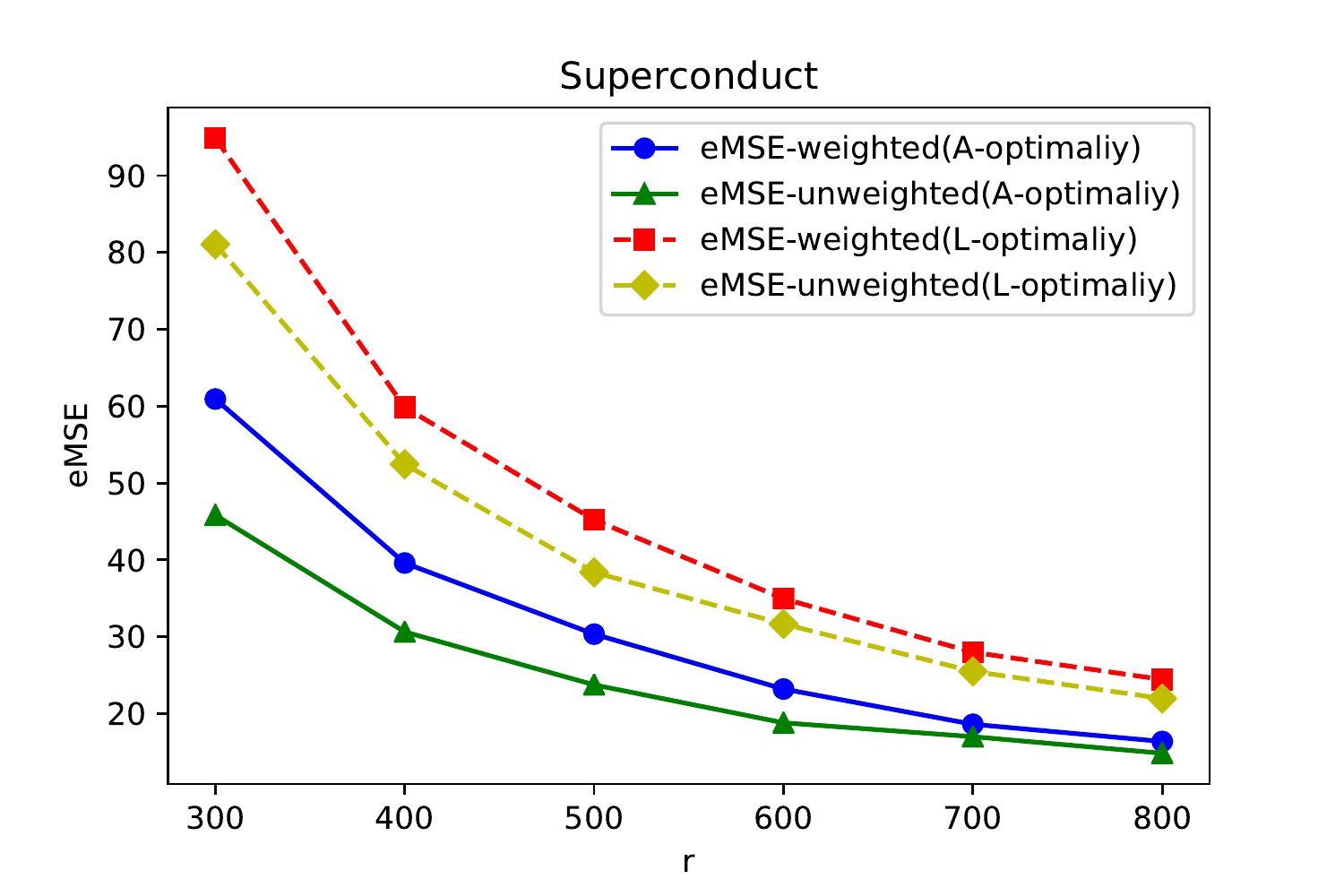}
		\caption{eMSE}
	\end{subfigure}
	\begin{subfigure}{0.48\textwidth}
		\includegraphics[width=\textwidth]{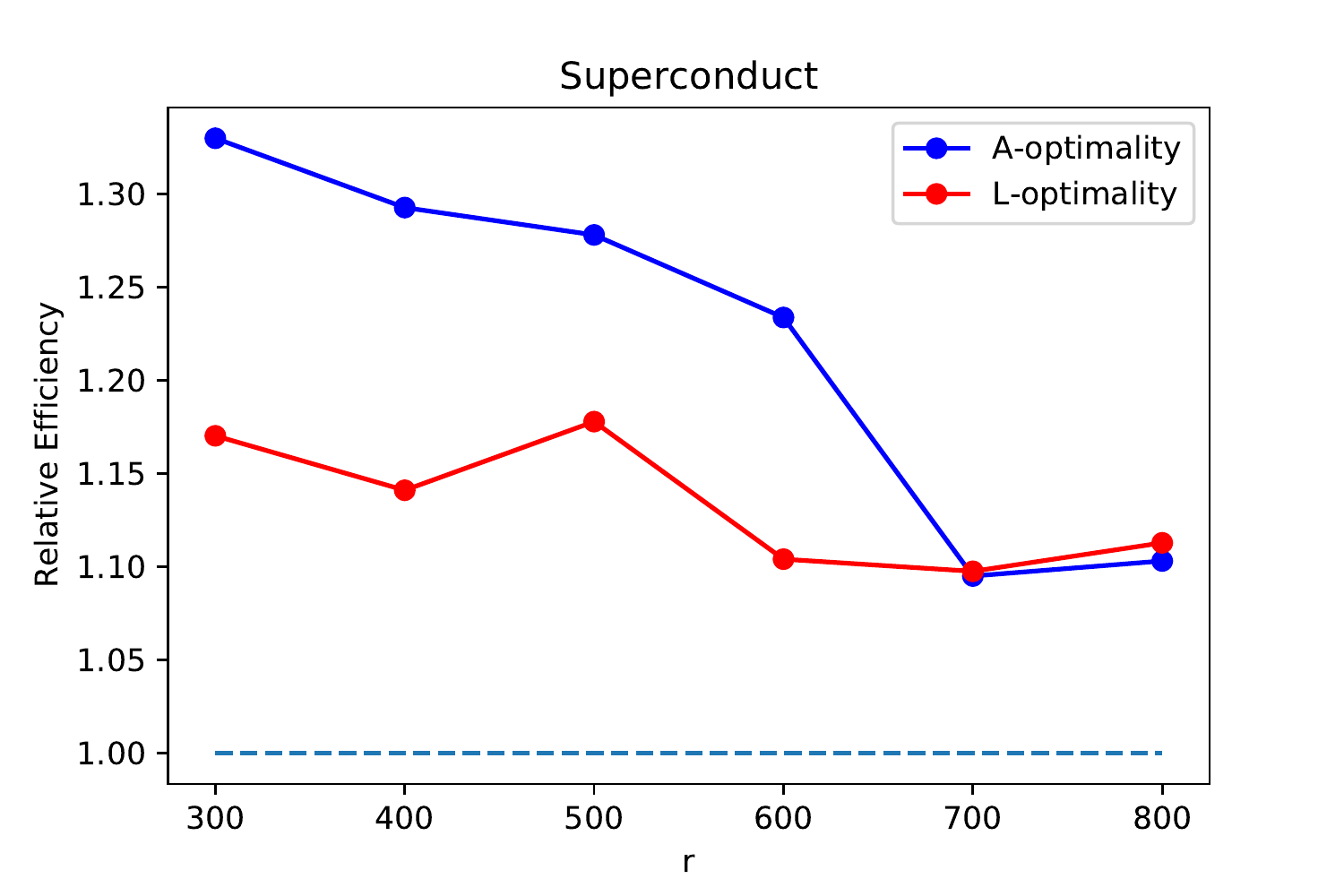}
		\caption{Relative Efficiency}
	\end{subfigure}
	\caption{eMSE and Relative Efficiency for Superconductivty data set with
    different subsample data sizes $r$.}
	\label{fig:Real_SuperConduct}
\end{figure}

\subsubsection{Supersymmetric Data Set}

In this section, the supersymmetric (SUSY) benchmark data set is used to evaluate
the performance of the unweighted estimator when applied to real data under
logistic model. The SUSY data set is available from the Machine Learning
Repository at \url{https://archive.ics.uci.edu/ml/datasets/SUSY}, and was also
used in \cite{WangZhuMa2017} and \cite{wang2019more}. The data are composed of
$n=5,000,000$ data points. Each data point represents a process and has one
binary response with 18 covariates. The response variable represents whether the
process produces new supersymmetric particles or the process is a background
process. The kinematic features of the process are used as covariates. We used a
logistic regression model to fit the data. Specifically, we model the probability
that a process produces new supersymmetric particles as
\begin{equation*}
P(Y=1|Z,\beta)=\frac{e^{\beta^{(0)}+\sum_{i=1}^{18}\beta^{(i)}Z_i}}{ 1 +
  e^{\beta^{(0)}+\sum_{i=1}^{18}\beta^{(i)}Z_i}},
\end{equation*}
where $Z_i$'s are the kinematic features of a process.
In order to compare
the efficiency of parameter estimation, we again use the regression coefficient
$\hbeta_{\text{MLE}}$ derived from the full data as the ``true
parameter''. The empirical MSE of the estimator defined in \eqref{eq:MSE_real}
and the relative efficiency defined in \eqref{eq:RE_real} are also
considered. We repeated the experiment for $S=1000$ times and drew a
pilot subsample of size $r_{\rp}=500$ in each repetition. Figure
\ref{fig:Real_SUSY} shows that the unweighted estimator is over 130\% more
efficient than the weighted one when applied to the SUSY data set when
  using $\pi_i^{\mmse}$, and over 110\% more efficient when using
  $\pi^{\mvc}$. Also, $\pi_i^{\mmse}$ performs better than
  $\pi_i^{\mvc}$ for the SUSY data set. %
\begin{figure}[H]
	\centering
	\begin{subfigure}{0.48\textwidth}
		\includegraphics[width=\textwidth]{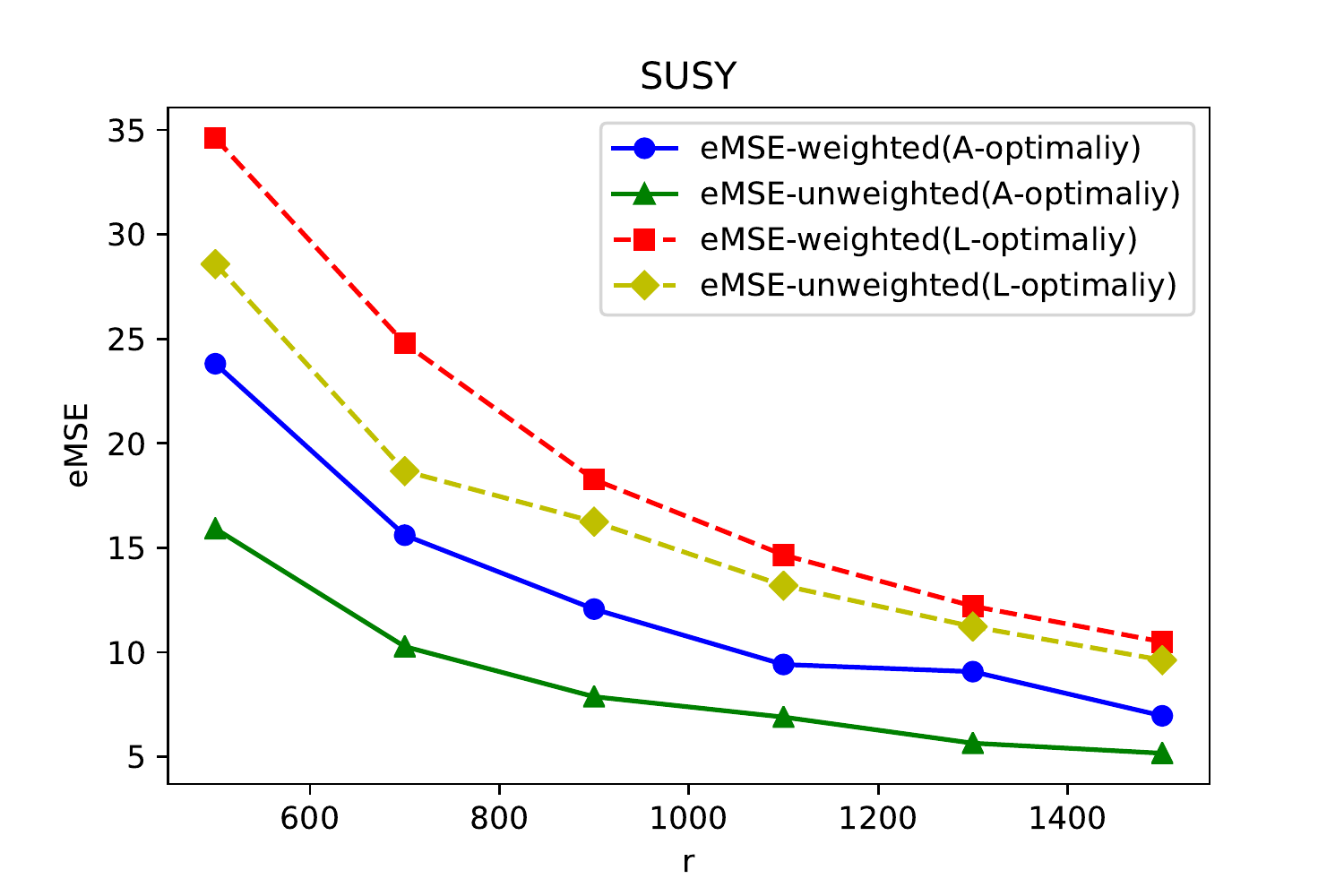}
		\caption{eMSE}
	\end{subfigure}
	\begin{subfigure}{0.48\textwidth}
		\includegraphics[width=\textwidth]{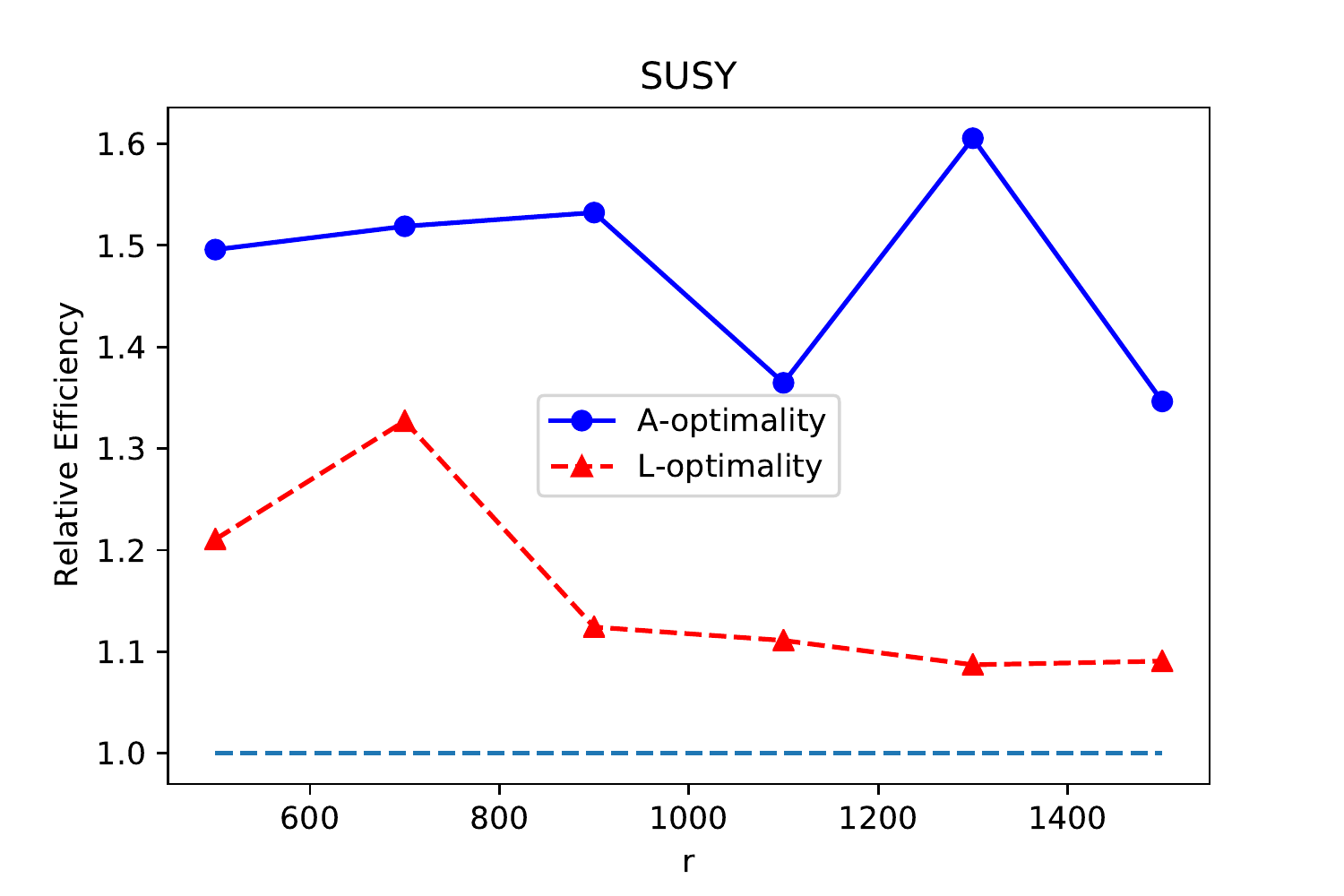}
		\caption{Relative Efficiency}
	\end{subfigure}
	\caption{eMSE and Relative Efficiency for SUSY data set with different
    subsample data sizes $r$ and a pilot sample size $r_{\rp}=500$.}
	\label{fig:Real_SUSY}
\end{figure}

\section{CONCLUSION} \label{sec:conclusion}

We proposed a novel unweighted estimator based on OSUMC subsample
for GLMs. It can be used to reduce computational burdens when responses are hard
to acquire. A two-step scheme was proposed and we showed the asymptotic
normality of the estimator unconditionally. We proved asymptotic results under a
martingale framework without conditioning on pilot estimates. Furthermore, we
showed that our new estimator is more efficient than the original OSUMC
estimator for parameter estimation under subsampling settings. Several numerical
experiments were implemented to demonstrate the superiouity of our
unweighted estimator over the original weighted esitmator.

Some extensions may be interesting for future research. Sampling with
replacement is used for both the weighted estimator and our new unweighted
estimator based on OSUMC. Poisson sampling is an alternative that reduces the
RAM usage for subsampling methods. Therefore, Poisson sampling is worth
developing under measurement constraints. Extending our subsampling
procedure to models beyond GLMs is also an interesting topic for future studies.

\section*{ACKNOWLEDGEMENTS}

The authors are very grateful to two anonymous referees and the editor for
constructive comments that helped to improve the paper. Jing Wang and Haiying Wang's research
is partially supported by US NSF grant CCF-2105571. Xiong's work is supported by the National
Science Foundation of China (Grant No. 12171462).

\clearpage

\appendix
\begin{center}
\LARGE Supplemental Material\\
\large for ``Unweighted estimation based on optimal sample under measurement
constraints''\\[20pt]
\end{center}

In this supplemental material, we present the proofs and technical details and some
addtional numerical experiments. Technical details are presented in Section
\ref{sec:proof_asymp} and Section \ref{sec:proof_effi}. Additional numerical
experiments are provided in Section \ref{sec:add_expriments}.

\setcounter{equation}{0}
\renewcommand{\theequation}{A.\arabic{equation}}
\setcounter{figure}{0}
\renewcommand{\thefigure}{A.\arabic{figure}}
\renewcommand{\thelemma}{A.\arabic{lemma}}

\section{Proofs and technical details}

In this appendix, we provide proofs in Section \ref{sec:theory}. Technical
details related to asymptotic results are presented in Section
\ref{sec:proof_asymp}; technical details related to estimation efficiency are
presented in Section \ref{sec:proof_effi}.

\subsection{Proofs of asymptotic results}\label{sec:proof_asymp}

We present proofs of the asymptotic results in this section. The proof of Lemma
\ref{lm:Convergence_of_Gamma} is presented in Section
\ref{sec:Convergence_of_Gamma}. The proof of Lemma \ref{lm:Normality_of_Psi} is
presented in Section \ref{sec:Normality}. The proof of Theorem
\ref{th:Normality} is presented in Section \ref{sec:Normality}. We first recall
some notations defined in the main paper:
\begin{equation*}
  \Psi_{\ruw}^{*}(\beta)=\oner\sumr\{b^{'}(X_{i}^{*{T}}\beta)-Y_{i}^{*}\}X_{i}^{*},
\end{equation*}
\begin{equation*}
  \hat{m}=\onen\sumn\sqrt{b^{''}(X_{i}^{T}\hbeta_{\rp})}\|L\hat{\Phi}_{\rp}^{-1}X_{i}\|,
\end{equation*}
\begin{equation*}
  \Phi=\Exp\left\{\onen\sumn b^{''}(X_{i}^{T}\beta_{0})X_{i}X_{i}^{T}\right\}=\Exp\{b^{''}(X^{T}\beta_{0})XX^{T}\},
\end{equation*}
and
\begin{equation*}
  \hat{\Phi}_{\rp}=\frac{1}{r_{\rp}}\sum_{i=1}^{r_{\rp}}b^{''}(X_{i}^{*_{\rp}{T}}\hbeta_{\rp})X_{i}^{*_{\rp}}X_{i}^{*_{\rp}{T}}.
\end{equation*}
In the following, we denote
\begin{equation*}
  \Psi_{\ruw}^{*\mathrm{m}}
  :=\hat{m}\Psi_{\ruw}^{*}(\beta)
  =\hat{m}\oner\sumr\{b^{'}(X_{i}^{*{T}}\beta)-Y_{i}^{*}\}X_{i}^{*}.
\end{equation*}
Here we use $\hat{m}$ to rescale the score function in order to simplify the
proof. We also denote
\begin{equation*}
  M_{n,i}:=\hat{m}\{b^{'}(X_{i}^{*{T}}\beta)-Y_{i}^{*}\}X_{i}^{*}-Q,
\end{equation*}
and
\begin{equation*}
  Q:=\onen\sumjn\sqrt{b^{''}(X_{j}^{T}\hbeta_{\rp})}\|L\hat{\Phi}_{\rp}^{-1}X_{j}\|\{b^{'}(X_{j}^{T}\beta)-Y_{j}\}X_{j}.
\end{equation*}
We then have
\begin{equation*}
  \Psi^{*\mathrm{m}}_{\ruw}(\beta)=\oner\sumr M_{n,i}+Q.
\end{equation*}
For filtration $\{\mathcal{F}_{n,i}\}_{i=1}^{r}$:
$\mathcal{F}_{n,0}=\sigma(X_{1}^{n},Y_{1}^{n},\hbeta_{\rp},\hat{\Phi}_{\rp})$;
$\mathcal{F}_{n,1}=\sigma(X_{1}^{n},Y_{1}^{n},\hbeta_{p},\hat{\Phi}_{\rp})\lor\sigma(*_{1})$;$\dots$;$\mathcal{F}_{n,i}=\sigma(X_{1}^{n},Y_{1}^{n},\hbeta_{\rp},\hat{\Phi}_{\rp})\lor\sigma(*_{1})\lor\dots\lor\sigma(*_{i})$,
where $\sigma(*_{i})$ is the $\sigma$-algebra generated by the $i$-th sampling
step, we have
\begin{align*}
  \Exp(M_{n,i}\vert\mathcal{F}_{n,i-1})&=\Exp\left[\hat{m}\{b^{'}(X_{i}^{*{T}}\beta)-Y_{i}^{*}\}X_{i}^{*}\vert\mathcal{F}_{n,i-1}\right]-Q\\
                                     &=\hat{m}\sumjn\pi_{j}^{\os}(\hbeta_{\rp},\hat{\Phi}_{\rp})\{b^{'}(X_{j}^{T}\beta)-Y_{j}\}X_{j}-Q.
\end{align*}
Since
\begin{equation*}
  \pi_{j}^{\os}(\hbeta_{\rp},\hat{\Phi}_{\rp})=\frac{\sqrt{b^{''}(X_{j}^{T}\hbeta_{\rp})}\|L\hat{\Phi}_{\rp}^{-1}X_{j}\|}{\sumn\sqrt{b^{''}(X_{i}^{T}\hbeta_{\rp})}\|L\hat{\Phi}_{\rp}^{-1}X_{i}\|}=\frac{\sqrt{b^{''}(X_{j}^{T}\hbeta_{\rp})}\|L\hat{\Phi}_{\rp}^{-1}X_{j}\|}{n\hat{m}},
\end{equation*}
we obtain that
\begin{equation*}
	\Exp(M_{n,i}\vert\mathcal{F}_{n,i-1})=0.
\end{equation*}
Therefore, $M_{n,i}$ is a martingale difference sequence, and we now present
some lemmas we will use to prove asymptotic results.
\begin{lemma}[Martingale Law of Large Numbers]\label{MartingaleLLN}
  If triangle array $(M_{n,i})_{i=1}^{r}$ is a martingale difference sequence,
  and uniformly integrable:
  \begin{equation*}
    \lim_{B\to\infty}\sup_{n,i}\Exp\left[\| M_{n,i}\| I\{\| M_{n,i}\|\geq B\}\right]=0,
  \end{equation*}
  then
  \begin{equation*}
    \Exp\left\{\left|\oner\sumr M_{n,i}\right|\right\}\to0.
  \end{equation*}
\end{lemma}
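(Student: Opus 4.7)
The plan is to use a truncation argument that preserves the martingale difference structure. For any threshold $B>0$, I would decompose $M_{n,i} = M_{n,i}^{(B)} + \tilde{M}_{n,i}^{(B)}$, where
\[
M_{n,i}^{(B)} := M_{n,i}\, I\{\|M_{n,i}\|\leq B\} - \Exp\!\left[M_{n,i}\, I\{\|M_{n,i}\|\leq B\}\,\big|\,\mathcal{F}_{n,i-1}\right],
\]
and $\tilde{M}_{n,i}^{(B)}$ is defined analogously with the complementary indicator $I\{\|M_{n,i}\|>B\}$. Because $\Exp[M_{n,i}\mid\mathcal{F}_{n,i-1}]=0$ by hypothesis, these two centered pieces sum back to $M_{n,i}$, and both $(M_{n,i}^{(B)})_{i}$ and $(\tilde{M}_{n,i}^{(B)})_{i}$ inherit the martingale difference property with respect to $(\mathcal{F}_{n,i})_{i}$. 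This conditional recentering is essential; a naive indicator truncation would destroy the martingale property.

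For the bounded piece, $\|M_{n,i}^{(B)}\|\leq 2B$ almost surely, so the $L^{2}$ orthogonality of vector-valued martingale differences (cross terms vanish since $M_{n,i}^{(B)}$ is $\mathcal{F}_{n,j-1}$-measurable for $i<j$) gives
\[
\Exp\!\left\|\oner\sumr M_{n,i}^{(B)}\right\|^{2} = \frac{1}{r^{2}}\sumr \Exp\|M_{n,i}^{(B)}\|^{2} \leq \frac{4B^{2}}{r},
\]
which tends to zero as $r\to\infty$; Jensen's inequality then yields $\Exp\|\oner\sumr M_{n,i}^{(B)}\|\to 0$. For the tail piece, conditional Jensen applied to the centering step combined with the triangle inequality gives $\Exp\|\tilde{M}_{n,i}^{(B)}\|\leq 2\Exp[\|M_{n,i}\|\,I\{\|M_{n,i}\|>B\}]$, hence
\[
\Exp\!\left\|\oner\sumr \tilde{M}_{n,i}^{(B)}\right\| \leq 2\sup_{n,i}\Exp\!\left[\|M_{n,i}\|\,I\{\|M_{n,i}\|>B\}\right],
\]
a bound which is independent of $r$ and, by the uniform integrability hypothesis, can be made arbitrarily small by choosing $B$ large.

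Combining the two bounds via a standard $\varepsilon/2$ argument — first fix $B$ large enough that the tail bound is below $\varepsilon/2$ uniformly in $n$ and $r$, then take $r$ large enough that the bounded piece bound $4B^{2}/r$ is below $\varepsilon^{2}/4$ — delivers $\Exp\|\oner\sumr M_{n,i}\|<\varepsilon$ for all sufficiently large $r$, proving the claim. The main subtlety is the need for conditional recentering when truncating, without which one loses either the martingale property (and with it the $L^2$ orthogonality that makes the bounded piece tractable) or the uniform boundedness of the truncated increments; everything else is a routine application of $L^{2}$ orthogonality of martingale differences and the uniform integrability hypothesis.
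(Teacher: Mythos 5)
Your proof is correct, but it takes a different route from the paper: the paper does not prove this lemma at all, it simply cites Theorem 2(b) of Andrews (1988) (an $L^1$ law of large numbers for uniformly integrable $L^1$-mixingale arrays, of which martingale difference arrays are a special case) and adds only the remark that for sampling with replacement the $M_{n,i}$ are identically distributed in $i$, so the supremum over $i$ in the uniform integrability condition is automatic. Your self-contained argument --- conditional recentering of the truncated and tail parts so that both remain martingale differences, $L^2$ orthogonality plus Jensen for the bounded part giving the $2B/\sqrt{r}$ rate, and the uniform integrability hypothesis controlling the tail part uniformly in $n$ and $r$ --- is essentially the classical proof underlying Andrews' theorem, specialized from mixingales to martingale differences. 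All steps check out: the two centered pieces do sum to $M_{n,i}$ because $\Exp(M_{n,i}\mid\mathcal{F}_{n,i-1})=0$, the cross terms in the second moment of the bounded piece vanish by adaptedness, and conditional Jensen gives the factor $2$ in the tail bound. What your approach buys is transparency and independence from an external reference; what the citation buys the paper is brevity and a slightly more general statement (Andrews' result covers mixingales, not just martingale differences). The only cosmetic discrepancies are that the lemma's conclusion is written with $|\cdot|$ while $M_{n,i}$ is vector-valued (your use of $\|\cdot\|$ is the intended reading) and that the limit is along $r\to\infty$, which your $\varepsilon/2$ argument handles correctly since the tail bound is uniform in $r$.
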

\begin{remark}
	This is a direct corollary of Theorem 2(b) in \cite{andrews1988laws}, which is
  mentioned in Section 3 of \cite{andrews1988laws}. Specially, if
  $(M_{n,i})_{i=1}^{r}$ is an identically distributed martingale difference
  sequence for fixed $n$, which is the case for sampling with replacement, we
  only need that $(M_{n,i})_{i=1}^{r}$ is uniformly integrable for index $n$,
  since
	\begin{equation*}
		\sup_{n}\Exp\left[\| M_{n,1}\| I\{\| M_{n,1}\|\geq B\}\right]=\sup_{n,i}\Exp\left[\| M_{n,i}\| I\{\| M_{n,i}\|\geq B\}\right].
	\end{equation*}
\end{remark}

\begin{lemma}\label{lm:L1_in_MoreEffi}
  Let $\mathbf{v_{1}}$,...,$\mathbf{v_{n}}$ be i.i.d random vector with the same
  distribution of $\mathbf{v}$. Let $g_{1}$ be a bounded function that may
  depend on $n$ and other random variables, and $g_{2}$ be a fixed function that
  does not depend on $n$. If $g_{1}(\mathbf{v_{i}})=o_{p}(1)$ for each $i$ as
  $n\to\infty$, and $\Exp\vert g_{2}(\mathbf{v})\vert<\infty$, then
  \begin{equation*}
    \onen\sumn g_{1}(\mathbf{v_{i}})g_{2}(\mathbf{v_{i}})=o_{p}(1).
  \end{equation*}
\end{lemma}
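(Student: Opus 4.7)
The plan is to establish $L^{1}$ convergence of the average to zero, which by Markov's inequality immediately yields the stated convergence in probability. First, by the triangle inequality and the fact that $\mathbf{v}_1,\ldots,\mathbf{v}_n$ are identically distributed,
$$\Exp\left|\onen\sumn g_1(\mathbf{v}_i)g_2(\mathbf{v}_i)\right|\le\onen\sumn\Exp|g_1(\mathbf{v}_i)g_2(\mathbf{v}_i)|=\Exp|g_1(\mathbf{v}_1)g_2(\mathbf{v}_1)|,$$
so it suffices to show that $\Exp|g_1(\mathbf{v}_1)g_2(\mathbf{v}_1)|\to0$ as $n\to\infty$.

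Next I would proceed by truncation. Let $M$ denote the (uniform in $n$) bound on $|g_1|$, and for any $K>0$ decompose
$$\Exp|g_1(\mathbf{v}_1)g_2(\mathbf{v}_1)|=\Exp[|g_1g_2|I(|g_2|\le K)]+\Exp[|g_1g_2|I(|g_2|>K)].$$
For the tail term, boundedness of $g_1$ gives $\Exp[|g_1g_2|I(|g_2|>K)]\le M\,\Exp[|g_2(\mathbf{v})|I(|g_2(\mathbf{v})|>K)]$, and by the integrability $\Exp|g_2(\mathbf{v})|<\infty$ together with dominated convergence, this can be made arbitrarily small by choosing $K$ large, uniformly in $n$. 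For the truncated term, the integrand $|g_1(\mathbf{v}_1)g_2(\mathbf{v}_1)|I(|g_2(\mathbf{v}_1)|\le K)$ is dominated by the constant $MK$ and converges in probability to $0$ as $n\to\infty$ (since $g_1(\mathbf{v}_1)=o_p(1)$ and $g_2(\mathbf{v}_1)I(|g_2(\mathbf{v}_1)|\le K)$ is bounded). The bounded convergence theorem then yields $\Exp[|g_1g_2|I(|g_2|\le K)]\to0$. Letting $n\to\infty$ first and then $K\to\infty$ gives $\Exp|g_1(\mathbf{v}_1)g_2(\mathbf{v}_1)|\to0$, and Markov's inequality finishes the proof.

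The main subtlety I expect is that $g_1$ depends not only on $\mathbf{v}_1$ but also on $n$ and on exogenous random variables (such as the pilot quantities $\hbeta_{\rp}$ and $\hat{\Phi}_{\rp}$ in the intended applications). Thus $g_1(\mathbf{v}_1)g_2(\mathbf{v}_1)$ is not a deterministic function of $\mathbf{v}_1$ and the truncation argument must be carried out on the joint probability space. This is not a real obstacle, however: the hypothesis $g_1(\mathbf{v}_i)=o_p(1)$ is stated with respect to this joint randomness, and the uniform bound $M$ (assumed independent of $n$) is precisely what allows the dominated/bounded convergence step to go through. If $M$ were permitted to grow with $n$, the proof would fail and an additional rate condition would be required, so one should verify in each application of the lemma that the bounding constant is truly $n$-free.
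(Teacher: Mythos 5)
The paper does not actually prove this lemma; it simply cites it as Lemma~1 of \cite{wang2019more}. Your argument --- bound the Cesàro mean in $L^{1}$, truncate $g_{2}$ at level $K$, control the tail by $M\,\Exp[|g_{2}(\mathbf{v})|I(|g_{2}(\mathbf{v})|>K)]$ and the truncated part by bounded convergence, then apply Markov --- is the standard route for results of this type and each individual step is sound. So in substance your proof is correct and self-contained, which is more than the paper provides.

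One step deserves more care than you give it: the identity $\onen\sumn\Exp|g_{1}(\mathbf{v}_{i})g_{2}(\mathbf{v}_{i})|=\Exp|g_{1}(\mathbf{v}_{1})g_{2}(\mathbf{v}_{1})|$ does \emph{not} follow merely from the $\mathbf{v}_{i}$ being identically distributed, because $g_{1}$ carries auxiliary randomness; what is needed is that the joint law of (auxiliary randomness, $\mathbf{v}_{i}$) is the same for every $i$. This is not a pedantic point --- the lemma is false as literally stated without it. For instance, with $\mathbf{v}_{i}$ i.i.d.\ continuous and $g_{1}(x)=I\bigl(x\in\{\mathbf{v}_{\lfloor n/2\rfloor+1},\dots,\mathbf{v}_{n}\}\bigr)$, $g_{2}\equiv1$, one has $g_{1}(\mathbf{v}_{i})=0$ a.s.\ for every fixed $i$ once $n>2i$ (so the hypothesis $g_{1}(\mathbf{v}_{i})=o_{p}(1)$ holds), yet $\onen\sumn g_{1}(\mathbf{v}_{i})g_{2}(\mathbf{v}_{i})\to1/2$. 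Without identical distribution of the summands, the pointwise-in-$i$ convergence $\Exp|g_{1}(\mathbf{v}_{i})g_{2}(\mathbf{v}_{i})|\to0$ need not transfer to the triangular-array average. In every application in this paper the required exchangeability does hold --- the pilot quantities $\hbeta_{\rp}$ and $\hat{\Phi}_{\rp}$ are symmetric functions of a simple random subsample, so $(\hbeta_{\rp},\hat{\Phi}_{\rp},\mathbf{v}_{i})$ has the same law for each $i$ --- so your proof is valid for the intended use; your closing caveat about the joint probability space and the $n$-free bound $M$ is pointing at the right neighborhood, but the condition you should actually flag and verify in each application is the identical distribution of the pairs, not just the uniformity of the bound.
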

This is Lemma 1 in \cite{wang2019more}.
\begin{lemma}\label{lm:eigenvalue}
  Let $M$, $N$ be self-adjoint $k\times k$ matrices, and $m_{j}$, $n_{j}$ be
  their eigenvalues arranged in increasing order. Then
  \begin{equation*}
    \sum(n_{j}-m_{j})^{2}\leq\| N-M\|_{F}^{2}.
  \end{equation*}
\end{lemma}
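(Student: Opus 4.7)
The statement is the classical Hoffman-Wielandt inequality for self-adjoint matrices. The plan is to prove it via spectral decomposition combined with Birkhoff-von Neumann and the rearrangement inequality.

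First I would write spectral decompositions $M = U D_M U^{\mathrm{T}}$ and $N = V D_N V^{\mathrm{T}}$, where $D_M = \mathrm{diag}(m_1,\dots,m_k)$ and $D_N = \mathrm{diag}(n_1,\dots,n_k)$ have entries in increasing order, and $U$, $V$ are orthogonal. Since the Frobenius norm is invariant under orthogonal conjugation, I would conjugate $N - M$ by $V$ (equivalently, set $P := V^{\mathrm{T}} U$, which is orthogonal) to obtain
\begin{equation*}
  \|N-M\|_F^{2} = \|D_N - P D_M P^{\mathrm{T}}\|_F^{2}.
\end{equation*}

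Next I would expand this Frobenius norm as a trace and compute
\begin{equation*}
  \|D_N - P D_M P^{\mathrm{T}}\|_F^{2}
  = \sum_j n_j^{2} + \sum_j m_j^{2} - 2\sum_{i,j} s_{ij}\, n_i m_j,
\end{equation*}
where $s_{ij} := p_{ij}^{2}$. The key observation is that because $P$ is orthogonal, the matrix $S = (s_{ij})$ is doubly stochastic. Comparing with the right-hand side $\sum_j(n_j - m_j)^{2} = \sum n_j^{2} + \sum m_j^{2} - 2 \sum n_j m_j$, it suffices to prove
\begin{equation*}
  \sum_{i,j} s_{ij}\, n_i m_j \;\le\; \sum_j n_j m_j .
\end{equation*}

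Finally I would invoke Birkhoff-von Neumann to write $S$ as a convex combination of permutation matrices, $S = \sum_\sigma \lambda_\sigma P_\sigma$ with $\lambda_\sigma \ge 0$ and $\sum_\sigma \lambda_\sigma = 1$. This reduces the inequality to $\sum_j n_{\sigma(j)} m_j \le \sum_j n_j m_j$ for every permutation $\sigma$, which is exactly the rearrangement inequality (both sequences are arranged in increasing order, so the ordered pairing is maximal). Taking the convex combination and chaining back through the computation yields the claim. The main (and only non-routine) obstacle is recognizing that $(p_{ij}^{2})$ is doubly stochastic and connecting the Frobenius expansion to the rearrangement inequality via Birkhoff's theorem; the rest is bookkeeping.
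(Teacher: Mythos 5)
Your proof is correct: the orthogonal-invariance reduction to $\|D_N - P D_M P^{\mathrm{T}}\|_F^2$, the observation that $(p_{ij}^2)$ is doubly stochastic, and the reduction via Birkhoff--von Neumann to the rearrangement inequality together give a complete argument (for genuinely complex self-adjoint matrices you would take $s_{ij}=|p_{ij}|^2$ and conjugate transposes, but the matrices arising in this paper are real symmetric). The paper itself gives no proof of this lemma --- it simply cites it as the Hoffman--Wielandt inequality (Theorem 18, Chapter 10 of Lax) --- and your argument is precisely the classical one found in such references, so there is nothing to fault.
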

This lemma is usually referred to as the Hoffman-Wielandt inequality, see
Theorem 18 in Chapter 10 of \cite{laxlinear}. Before we prove the main results,
we first prove Lemma \ref{lm:EigenvalueBound}.
\begin{lemma}\label{lm:EigenvalueBound}
  There exists an $r_{*}$ such that when $r_{\rp}\geq r_{*}$ ,
  $\lambda_{\max}(\hat{\Phi}_{\rp}^{-1})<C$ a.s., where $C$ is a finite
  constant. In addition, we have
  $\hbeta_{\rp}\xrightarrow{p}\beta_{0}$ and
  $\hat{\Phi}_{\rp}\xrightarrow{p}\Phi$, as $r_{\rp}\to\infty$.
\end{lemma}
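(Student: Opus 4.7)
The plan is to establish three claims in sequence: (i) consistency of the pilot MLE $\hbeta_{\rp}$, (ii) the in-probability convergence $\hat{\Phi}_{\rp}\cvp\Phi$, and (iii) the eigenvalue bound, which will fall out of (ii) together with the positive-definiteness of $\Phi$. Since $\{(X_i,Y_i)\}_{i=1}^n$ are i.i.d.\ and the pilot subsample is drawn by simple random sampling, exchangeability gives that the pilot sample of size $r_{\rp}$ has the same joint distribution as an i.i.d.\ sample from the population. Hence $\hbeta_{\rp}$ is a classical MLE on an i.i.d.\ sample, and standard M-estimation on the compact set $\mathbb{B}$ yields $\hbeta_{\rp}\cvp\beta_0$: concavity of the map $\beta\mapsto yx^T\beta-b(x^T\beta)$, together with Assumptions A\ref{A1}--A\ref{A2}, gives a uniform law of large numbers for the pilot objective, whose expectation is uniquely maximized at $\beta_0$ by the strict concavity ensured by Assumption A\ref{A3}.

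Next, to obtain $\hat{\Phi}_{\rp}\cvp\Phi$ I would decompose
\[
\hat{\Phi}_{\rp}-\Phi = \Big\{\frac{1}{r_{\rp}}\sum_{i=1}^{r_{\rp}} b''(X_i^{*_{\rp} T}\beta_0)X_i^{*_{\rp}}X_i^{*_{\rp} T}-\Phi\Big\} + \frac{1}{r_{\rp}}\sum_{i=1}^{r_{\rp}}\{b''(X_i^{*_{\rp} T}\hbeta_{\rp})-b''(X_i^{*_{\rp} T}\beta_0)\}X_i^{*_{\rp}}X_i^{*_{\rp} T}.
\]
The first bracketed term is $o_p(1)$ by the weak law of large numbers, since $b''$ is bounded (A\ref{A1}) and $\Exp\|X\|^2<\infty$ (A\ref{A2}). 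For the second term I would apply a first-order Taylor expansion of $b''$ and bound the remainder entrywise by $h(X_i^{*_{\rp}})\|X_i^{*_{\rp}}\|^3\|\hbeta_{\rp}-\beta_0\|$ via Assumption A\ref{A3}. Lemma \ref{lm:L1_in_MoreEffi} then applies, with $g_1$ absorbing the $\hbeta_{\rp}-\beta_0$ factor (which is $o_p(1)$ by step one) and $g_2(x)=h(x)\|x\|^3$ having finite expectation by Assumption A\ref{A3}.

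Third, the eigenvalue bound follows from step two. Assumption A\ref{A3} yields $\Phi=\Exp\{b''(X^T\beta_0)XX^T\}\succeq\Exp\{g(X)XX^T\}$, so $\lambda_{\min}(\Phi)>0$ and $\Phi$ is positive-definite. By Lemma \ref{lm:eigenvalue} (Hoffman--Wielandt), $|\lambda_{\min}(\hat{\Phi}_{\rp})-\lambda_{\min}(\Phi)|\leq \|\hat{\Phi}_{\rp}-\Phi\|_F=o_p(1)$, so with probability tending to one, $\lambda_{\min}(\hat{\Phi}_{\rp})\geq \lambda_{\min}(\Phi)/2$ and $\lambda_{\max}(\hat{\Phi}_{\rp}^{-1})\leq 2/\lambda_{\min}(\Phi)=:C$. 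To upgrade ``with probability tending to one'' to ``almost surely for large enough $r_{\rp}$'', one replaces the WLLN in step two by the SLLN, which goes through under the same moment conditions.

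The main obstacle is the middle step, because the summands defining $\hat{\Phi}_{\rp}$ are not i.i.d.\ once $\hbeta_{\rp}$ replaces $\beta_0$. Lemma \ref{lm:L1_in_MoreEffi} is tailor-made for this mixed-index situation, so the work reduces to a clean Taylor expansion of $b''$ and verifying integrability of the envelope $h(X)\|X\|^3$, which Assumption A\ref{A3} delivers directly.
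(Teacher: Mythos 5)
Your treatment of the two consistency claims is essentially the paper's: $\hbeta_{\rp}$ is a classical MLE on what is distributionally an i.i.d.\ sample, and $\hat{\Phi}_{\rp}\xrightarrow{p}\Phi$ follows from the decomposition into the term evaluated at $\beta_{0}$ (law of large numbers) plus a remainder handled by Lemma \ref{lm:L1_in_MoreEffi}. (The paper gets the remainder more cheaply than you do: by Assumption \ref{A1} the difference $b^{''}(X_i^{*_{\rp}T}\hbeta_{\rp})-b^{''}(X_i^{*_{\rp}T}\beta_0)$ is already bounded and $o_p(1)$ by continuity alone, so no Taylor expansion or third-derivative envelope is needed; your route also works but uses more than is necessary.)

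The genuine gap is in the eigenvalue bound. You derive $\lambda_{\max}(\hat{\Phi}_{\rp}^{-1})\leq C$ from $\hat{\Phi}_{\rp}\xrightarrow{p}\Phi$, which only yields the bound \emph{with probability tending to one}, and your proposed upgrade to an almost-sure statement --- ``replace the WLLN by the SLLN'' --- does not go through: the second term in your decomposition carries the factor $\hbeta_{\rp}-\beta_{0}$, which is only $o_p(1)$, and Lemma \ref{lm:L1_in_MoreEffi} is itself an in-probability statement. You would need strong consistency of $\hbeta_{\rp}$, which you have not established. The paper sidesteps $\hbeta_{\rp}$ entirely: since $b^{''}(x^{T}\hbeta_{\rp})\geq g(x):=\inf_{\beta\in\mathbb{B}}b^{''}(x^{T}\beta)$, one has
\begin{equation*}
  \hat{\Phi}_{\rp}\geq \frac{1}{r_{\rp}}\sum_{i=1}^{r_{\rp}}g(X_{i}^{*_{\rp}})X_{i}^{*_{\rp}}X_{i}^{*_{\rp}T},
\end{equation*}
whose summands are genuinely i.i.d., so the SLLN plus the Hoffman--Wielandt inequality (Lemma \ref{lm:eigenvalue}) give an almost-sure bound on $\lambda_{\max}$ of the inverse; this is exactly what the condition $\lambda_{\min}[\Exp\{g(X)XX^{T}\}]>0$ in Assumption \ref{A3} is for. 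The distinction matters downstream: the a.s.\ bound is what licenses inequalities of the form $|\hat{w}_{i}|\lesssim\|X_{i}\|$ with a deterministic constant inside expectations throughout the proofs of Lemmas \ref{lm:Convergence_of_Gamma} and \ref{lm:Normality_of_Psi}, so a ``with probability tending to one'' version would not suffice there.
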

\begin{proof}
  We have
  \begin{equation*}
    \hat{\Phi}_{\rp}=\frac{1}{r_{\rp}}\sum_{i=1}^{r_{\rp}}b^{''}(X_{i}^{*_{\rp}}\hbeta_{\rp})X_{i}^{*_{\rp}}X_{i}^{*_{\rp}{T}}\geq \frac{1}{r_{\rp}}\sum_{i=1}^{r_{\rp}}g(X_{i}^{*_{\rp}})X_{i}^{*_{\rp}}X_{i}^{*_{\rp}{T}},
  \end{equation*}
  and therefore
  \begin{equation*}
    \hat{\Phi}_{\rp}^{-1}\leq \left\{\frac{1}{r_{\rp}}\sum_{i=1}^{r_{\rp}}g(X_{i}^{*_{\rp}})X_{i}^{*_{\rp}}X_{i}^{*_{\rp}{T}}\right\}^{-1}.
  \end{equation*}
  Then, we have
  \begin{align*}
    &\lambda_{\max}\left(\hat{\Phi}_{\rp}^{-1}\right)\leq\lambda_{\max}\left[\left\{\frac{1}{r_{\rp}}\sum_{i=1}^{r_{\rp}}g(X_{i}^{*_{\rp}})X_{i}^{*_{\rp}}X_{i}^{*_{\rp}{T}}\right\}^{-1}\right]\\
    &\leq\lambda_{\max}\left[\Exp\{g(X)XX^{T}\}^{-1}\right]+\left|\lambda_{\max}\left[\left\{\frac{1}{r_{\rp}}\sum_{i=1}^{r_{\rp}}g(X_{i}^{*_{\rp}})X_{i}^{*_{\rp}}X_{i}^{^{*_{\rp}{T}}}\right\}^{-1}\right]-\lambda_{\max}\left[\Exp\{g(X)XX^{T}\}^{-1}\right]\right|\\
    &\leq\lambda_{\max}\left[\Exp\{g(X)XX^{T}\}^{-1}\right]+\left\|\left\{\frac{1}{r_{\rp}}\sum_{i=1}^{r_{\rp}}g(X_{i}^{*_{\rp}})X_{i}^{*_{\rp}}X_{i}^{*_{0}^{T}}\right\}^{-1}-\Exp\{g(X)XX^{T}\}^{-1}\right\|_{F},
  \end{align*}
  the last inequality is due to Lemma \ref{lm:eigenvalue}. Applying the Strong
  Law of Large Numbers, we obtain
  \begin{equation*}
    \left\|\left\{\frac{1}{r_{\rp}}\sum_{i=1}^{r_{\rp}}g(X_{i}^{*_{\rp}})X_{i}^{*_{\rp}}X_{i}^{*_{\rp}{T}}\right\}^{-1}-\Exp\{g(X)XX^{T}\}^{-1}\right\|_{F}=o(1),
  \end{equation*}
  and therefore we can find a constant $C$ and a $r_{*}$ such that
  \begin{equation*}
    \lambda_{\max}\left(\hat{\Phi}_{\rp}^{-1}\right)<C,
  \end{equation*}
  when
  $r>r_{*}$. %

  Next, we prove the second part of the lemma. Since we use simple random
  sampling in Algorithm \ref{alg:algo}, the asymptotic property of
  $\hbeta_{\rp}$ is the same as i.i.d data. Thus, the consistency of
  $\hbeta_{\rp}$ is easy to obtain, see
  \cite{mccullagh1989generalized}. Now, we know that
  $b^{''}(X_{i}^{*_{\rp}{T}}\hbeta_{\rp})-b^{''}(X_{i}^{*_{\rp}{T}}\beta_{0})$
  is $o_{p}(1)$ and bounded and
  \begin{equation*}
    \left\|\Exp\left(X_{i}^{*_{\rp}}X_{i}^{*_{\rp}{T}}\right)\right\|_{F}\leq\Exp(\| X_{i}^{*_{\rp}}\|^{2})<\infty.
  \end{equation*}
  By Lemma \ref{lm:L1_in_MoreEffi} and the Law of Large Numbers, we have
  \begin{equation*}
    \hat{\Phi}_{\rp}\xrightarrow{p}\Phi.
  \end{equation*}
	This complete the proof.
\end{proof}

Therefore, in the rest of the paper, we always assume that
\begin{equation*}
  \lambda_{\max}(\hat{\Phi}_{\rp}^{-1})\leq C.
\end{equation*}
Now, we show the proofs of asymptotic results presented in Section
\ref{sec:theory}.

\subsubsection{Proof of Lemma \ref{lm:Convergence_of_Gamma}}
\label{sec:Convergence_of_Gamma}

We prove Lemma \ref{lm:Convergence_of_Gamma} in this section, and we first
introduce some notations we use in this section. We have already defined
\begin{equation*}
  \Gamma(\beta)=\Exp\left\{\sqrt{b^{''}(X^{T}\beta_{0})}\|L\Phi^{-1}X\| b^{''}(X^{T}\beta) XX^{T}\right\}.
\end{equation*}
Denote
\begin{equation*}
  \tilde{M}_{n,i}=\hat{m}b^{''}(X_{i}^{*{T}}\beta)X_{i}^{*}X_{i}^{*{T}}-\tilde{Q},
\end{equation*}
\begin{equation*}
  \tilde{Q}=\onen\sumjn\sqrt{b^{''}(X_{j}^{T}\hbeta_{\rp})}\|L\hat{\Phi}_{\rp}^{-1}X_{j}\| b^{''}(X_{j}^{T}\beta)X_{j}X_{j}^{T}.
\end{equation*}
Then, we have
\begin{equation*}
  \Dot{\Psi}_{\ruw}^{*\mathrm{m}}(\beta)-\Gamma(\beta)=\oner\sumr\tilde{M}_{n,i}+\tilde{Q}-\Gamma(\beta).
\end{equation*}
For filtration $\{\mathcal{F}_{n,i}\}_{i=1}^{r}$:
$\mathcal{F}_{n,0}=\sigma(X_{1}^{n},Y_{1}^{n},\hbeta_{\rp},\hat{\Phi}_{\rp})$;
$\mathcal{F}_{n,1}=\sigma(X_{1}^{n},Y_{1}^{n},\hbeta_{\rp},\hat{\Phi}_{\rp})\lor\sigma(*_{1})$;
$\dots$;$\mathcal{F}_{n,i}=\sigma(X_{1}^{n},Y_{1}^{n},\hbeta_{\rp},\hat{\Phi}_{\rp})\lor\sigma(*_{1})\lor\dots\lor\sigma(*_{i})$,
we have
\begin{equation*}
  \Exp(\tilde{M}_{n,i}\vert\mathcal{F}_{n,i-1})=0.
\end{equation*}
Therefore, $\tilde{M}_{n,i}$ is a martingale difference sequence. If we denote
\begin{equation*}
  \hat{w}_{i}=\sqrt{b^{''}(X_{i}^{T}\hbeta_{\rp})}\|L\hat{\Phi}_{\rp}^{-1}X_{i}\|>0,
\end{equation*}
we will then know that
\begin{equation*}
  \hat{m}=\onen\sumjn\hat{w}_{j},
\end{equation*}
and
\begin{equation*}
  |\hat{w}_{i}|\lesssim\|L\hat{\Phi}_{\rp}^{-1}X_{i}\|\lesssim\| X_{i}\|,
\end{equation*}
due to Assumption \ref{A1} and Lemma \ref{lm:EigenvalueBound}, where the notion
``$\lesssim$'' means that there exist a constant $K$, such that
$\hat{w}_{i}\leq K\| X_{i}\|$. Now, we present the proof of Lemma
\ref{lm:Convergence_of_Gamma}.
\begin{proof}[\textbf{Proof of Lemma \ref{lm:Convergence_of_Gamma}}]
  To show that
  $\Dot{\Psi}_{\ruw}^{*\mathrm{m}}(\beta_{n})\xrightarrow{p}\Gamma$ for
  every sequence $\beta_{n}\xrightarrow{p}\beta_{0}$, we first show that for
  every $\beta\in\mathbb{B}$,
  \begin{equation*}
    \Dot{\Psi}_{\ruw}^{*\mathrm{m}}(\beta)-\Gamma(\beta)=\hat{m}\oner\sumr
          b^{''}(X_{i}^{*{T}}\beta)X_{i}^{*}X_{i}^{*{T}}-\Gamma(\beta)=o_{p}(1),
  \end{equation*}
  where
  \begin{equation*}
    \Gamma(\beta):=\Exp\left\{\sqrt{b^{''}(X^{T}\beta_{0})}
      \|L\Phi^{-1}X\|b^{''}(X^{T}\beta) XX^{T}\right\}.
  \end{equation*}
  We already know that
  \begin{equation*}
    \Dot{\Psi}_{\ruw}^{*\mathrm{m}}(\beta)-\Gamma(\beta)=\oner\sumr\tilde{M}_{n,i}
    +\tilde{Q}-\Gamma(\beta),
  \end{equation*}
   and $(\tilde{M}_{n,i})_{i}^{r}$ is an identically distributed martingale
  difference sequence. Therefore, for each entry of
  $\Dot{\Psi}_{\ruw}^{*\mathrm{m}}(\beta)$, assuming $0<\delta<1$, we
  have
  \begin{align*}
    &\left\{\Exp\left(\left| \tilde{M}_{n,1}^{(lk)}\right|^{1+\delta}\right)\right\}^{\frac{1}{1+\delta}}\\
    &\leq\left[\Exp\left\{\left|\hat{m}b^{''}(X_{1}^{*{T}}\beta)X_{1}^{*{(l)}}X_{1}^{*{(k)}}\right|^{1+\delta}\right\}\right]^{\frac{1}{1+\delta}}+\left[\Exp\left\{\left|\onen\sumjn\hat{w}_{j} b^{''}(X_{j}^{T}\beta)X_{j}^{(l)}X_{j}^{(k)}\right|^{1+\delta}\right\}\right]^{\frac{1}{1+\delta}}\\
    &:=I_{1}^{\frac{1}{1+\delta}}+I_{2}^{\frac{1}{1+\delta}},
  \end{align*}
  due to the Minkowski inequality. Then, for $I_{1}$, we have
  \begin{align*}
    I_{1}&\lesssim\Exp\left(\hat{m}^{1+\delta}\left| X_{1}^{*{(l)}}X_{1}^{*{(k)}}\right|^{1+\delta}\right)=\Exp\left\{\Exp\left(\hat{m}^{1+\delta}\left| X_{1}^{*{(l)}}X_{1}^{*{(k)}}\right|^{1+\delta}\bigg{\vert}\mathcal{F}_{n,0}\right)\right\}\\
         &=\Exp\left(\hat{m}^{\delta}\onen\sumjn\hat{w}_{j}\left| X_{j}^{(l)}X_{j}^{(k)}\right|^{1+\delta}\right)\lesssim\Exp\left\{\left(\onen\sum_{i=1}^{n}\| X_{i}\|\right)^{\delta}\onen\sumjn\| X_{j}\|\left| X_{j}^{(l)}X_{j}^{(k)}\right|^{1+\delta}\right\}\\
         &\leq\left\{\Exp\left(\onen\sum_{i=1}^{n}\| X_{i}\|\right)\right\}^{\delta}\left[\Exp\left\{\left(\onen\sumjn\| X_{j}\|\left| X_{j}^{(l)}X_{j}^{(k)}\right|^{1+\delta}\right)^{\frac{1}{1-\delta}}\right\}\right]^{1-\delta}.
  \end{align*}
  The last inequality is due to the H\"{o}lder inequality when $p=1/\delta$ and
  $q=1/(1-\delta)$. Since $1/(1-\delta)>1$, we then use the Minkowski inequality
  and obtain
  \begin{align*}
    I_{1}&\lesssim\left\{\Exp\left(\onen\sum_{i=1}^{n}\| X_{i}\|\right)\right\}^{\delta}\left[\onen\sumjn\Exp\left\{\left(\| X_{j}\|\left| X_{j}^{(l)}X_{j}^{(k)}\right|^{1+\delta}\right)^{\frac{1}{1-\delta}}\right\}\right]^{1-\delta}\\
         &\leq\left\{\Exp\left(\| X\|\right)\right\}^{\delta}\left\{\Exp\left(\| X\|^{\frac{1}{1-\delta}}\left| X^{(l)}X^{(k)}\right|^{\frac{1+\delta}{1-\delta}}\right)\right\}^{1-\delta}.
  \end{align*}
  For $I_{2}$, applying the Minkowski inequality, we obtain
  \begin{align*}
    I_{2}&\leq\onen\sumjn\Exp\left\{\left|\hat{w}_{j} b^{''}(X_{j}^{T}\beta)X_{j}^{(l)}X_{j}^{(k)}\right|^{1+\delta}\right\}\\
         &\lesssim\Exp\left(\| X\|^{1+\delta}| X^{(l)}X^{(k)}|^{1+\delta}\right).
  \end{align*}
  To see
  $\Exp\left(\| X\|^{\frac{1}{1-\delta}}\left|
      X^{(l)}X^{(k)}\right|^{\frac{1+\delta}{1-\delta}}\right)<\infty$, we
  only need to add all the entries:
  \begin{align*}
    \sum_{l,k}\Exp\left(\| X\|^{\frac{1}{1-\delta}}\left|
    X^{(l)}X^{(k)}\right|^{\frac{1+\delta}{1-\delta}}\right)=\sum_{l,k}\Exp\left(\|
    X\|^{\frac{1}{1-\delta}}\left|
    X^{(l)}\right|^{\frac{1+\delta}{1-\delta}}\left|
    X^{(k)}\right|^{\frac{1+\delta}{1-\delta}}\right)\lesssim\Exp\left(\| X\|^{\frac{3+2\delta}{1-\delta}}\right),
  \end{align*}
  Similarly, we also have
  \begin{equation*}
    \sum_{l,k}\Exp\left(\| X\|^{1+\delta}|
      X^{(l)}X^{(k)}|^{1+\delta}\right)\lesssim \Exp\left(\| X\|^{3+3\delta}\right).
  \end{equation*}
  If we let $\delta=1/6$, we then have
  \begin{equation*}
    \Exp\left(\| X\|^{\frac{3+2\delta}{1-\delta}}\right)=\Exp(\| X\|^{4})<\infty,
  \end{equation*}
  and
  \begin{equation*}
    \Exp\left(\| X\|^{3+3\delta}\right)=\Exp\left(\| X\|^{3.5}\right)<\infty.
  \end{equation*}
  Therefore,
  $\sup_{n}\Exp(| \tilde{M}_{n,1}^{(lk)}|^{1+\delta})<\infty$
  and we know that $(\tilde{M}_{n,i})_{i}^{r}$ is $L^{1}$ uniformly
  integrable. Now applying Lemma \ref{MartingaleLLN}, we can show that
  \begin{equation*}
    \oner\sumr\tilde{M}_{n,i}=o_{p}(1).
  \end{equation*}
  Next, we prove
  \begin{align*}
    &\tilde{Q}-\Gamma(\beta)\\
    &=\onen\sumjn\hat{w}_{j}b^{''}(X_{j}^{T}\beta)X_{j}X{j}^{T}-\Exp\left\{\sqrt{b^{''}(X^{T}\beta_{0})}\|L\Phi^{-1}X\| b^{''}(X^{T}\beta)XX^{T}\right\}\\
    &=o_{p}(1).
  \end{align*}
  For the consistency of $\hbeta_{\rp}$ and
  $\hat{\Phi}_{\rp}^{-1}$, it is easy to know that
  \begin{equation*}
    \sqrt{b^{''}(X_{j}^{T}\hbeta_{\rp})}\|L\hat{\Phi}_{\rp}^{-1}X_{j}\|-\sqrt{b^{''}(X_{j}^{T}\beta_{0})}\|L\Phi^{-1}X_{j}\|=o_{p}(1).
  \end{equation*}
  Considering each entry of $\tilde{Q}$ and $\Gamma(\beta)$, we know that
  \begin{align*}
    &\onen\sumjn\hat{w}_{j}b^{''}(X_{j}^{T}\beta)X_{j}^{(l)}X_{j}^{(k)}-\onen\sumjn\sqrt{b^{''}(X_{j}^{T}\beta_{0})}\|L\Phi^{-1}X_{j}\| b^{''}(X_{j}^{T}\beta)X_{j}^{(l)}X_{j}^{(k)}\\
    &=\onen\sumjn\left\{\sqrt{b^{''}(X_{j}^{T}\hbeta_{\rp})}\|L\hat{\Phi}_{\rp}^{-1}X_{j}\|-\sqrt{b^{''}(X_{j}^{T}\beta_{0})}\|L\Phi^{-1}X_{j}\|\right\}b^{''}(X_{j}^{T}\beta)X_{j}^{(l)}X_{j}^{(k)}.
  \end{align*}
  Then, since we include intercept in the model we know
  $\| X_{j}\|\geq 1$, and thus we have
  \begin{align*}
    &\onen\sumjn\left\{\sqrt{b^{''}(X_{j}^{T}\hbeta_{\rp})}\|L\hat{\Phi}_{\rp}^{-1}X_{j}\|-\sqrt{b^{''}(X_{j}^{T}\beta_{0})}\|L\Phi^{-1}X_{j}\|\right\}b^{''}(X_{j}^{T}\beta)X_{j}^{(l)}X_{j}^{(k)}\\
    &=\onen\sumjn\left\{\sqrt{b^{''}(X_{j}^{T}\hbeta_{\rp})}\frac{\|L\hat{\Phi}_{\rp}^{-1}X_{j}\|}{\|
     X_{j}\|}-\sqrt{b^{''}(X_{j}^{T}\beta_{0})}\frac{\|L\Phi^{-1}X_{j}\|}{\| X_{j}\|}\right\}\| X_{j}\| b^{''}(X_{j}^{T}\beta)X_{j}^{(l)}X_{j}^{(k)}.
  \end{align*}
  We know that
  \begin{equation*}
    \frac{\| L\Phi^{-1}X_{j}\|}{\| X_{j}\|}\leq C,\quad and\quad \frac{\| L\hat{\Phi}_{\rp}^{-1}X_{j}\|}{\| X_{j}\|}\leq C,
  \end{equation*}
  which means
  \begin{equation*}
    \left\{\sqrt{b^{''}(X_{j}^{T}\hbeta_{\rp})}
      \frac{\|L\hat{\Phi}_{\rp}^{-1}X_{j}\|}{\| X_{j}\|}
      -\sqrt{b^{''}(X_{j}^{T}\beta_{0})}
      \frac{\|L\Phi^{-1}X_{j}\|}{\| X_{j}\|}\right\}
  \end{equation*}
  is bounded and also is $o_{p}(1)$. We now show that
  $\| X_{j}\| b^{''}(X_{j}^{T}\beta)X_{j}^{(l)}X_{j}^{(k)}$ is
  integrable, since
  \begin{align*}
    \Exp\left\{\| X_{j}\| b^{''}(X_{j}^{T}\beta)| X^{(l)}|\cdot|
    X^{(k)}|\right\}&\lesssim\Exp\left(\| X_{j}\|\cdot | X^{(l)}|\cdot| X^{(k)}|\right)\\
                   &\leq\sum_{l,k}\Exp\left(\| X_{j}\|\cdot| X^{(l)}|\cdot| X^{(k)}|\right)\\
                   &\lesssim\Exp\left(\| X_{j}\|^{3}\right)<\infty.
  \end{align*}
  Then, applying Lemma \ref{lm:L1_in_MoreEffi} and the Law of Large Numbers, we
  have
  \begin{align*}
    &\tilde{Q}^{(lk)}-\Gamma(\beta)^{(lk)}\\
    &=\onen\sumjn\sqrt{b^{''}(X_{j}^{T}\beta_{0})}\|L\Phi^{-1}X_{j}\| b^{''}(X_{j}^{T}\beta)X_{j}^{(l)}X_{j}^{(k)}-\Gamma(\beta)^{(lk)}+o_{p}(1)\\
    &=o_{p}(1),
  \end{align*}
  which means $\tilde{Q}-\Gamma(\beta)=o_{p}(1)$. Therefore,
  \begin{equation*}
    \Dot{\Psi}_{\ruw}^{*\mathrm{m}}(\beta)-\Gamma(\beta)=o_{p}(1).
  \end{equation*}
  We have already proved that
  $\Dot{\Psi}_{\ruw}^{*\mathrm{m}}(\beta)\xrightarrow{p}\Gamma(\beta)$,
  for every $\beta\in\mathbb{B}$, and we next prove
  $\Dot{\Psi}_{\ruw}^{*\mathrm{m}}(\beta_{n})\xrightarrow{p}\Gamma$, for
  every sequence $\beta_{n}\xrightarrow{p}\beta_{0}$.  For each entry of
  $\Dot{\Psi}_{n}^{*}(\beta)$ and $\Gamma(\beta)$, taking
  $\beta_{1}, \beta_{2}\in\mathbb{B}$, we use the mean value theorem and obtain
  \begin{align*}
    &\left| \left\{\Dot{\Psi}_{\ruw}^{*\mathrm{m}}(\beta_{1})^{(lk)}-\Gamma(\beta_{1})^{(lk)}\right\}-\left\{\Dot{\Psi}_{\ruw}^{*\mathrm{m}}(\beta_{2})^{(lk)}-\Gamma(\beta_{2})^{(lk)}\right\}\right|\\
    &=\bigg{\|}\bigg{\{}\hat{m}\oner\sumr b^{'''}(X_{i}^{*{T}}\tilde{\beta_{1}})X_{i}^{*{(l)}}X_{i}^{*{(k)}}X_{i}^{*}\\
    &\quad -\Exp[\sqrt{b^{''}(X^{T}\beta_{0})}\|L\Phi^{-1}X\| b^{'''}(X^{T}\tilde{\beta}_{2})X^{(l)}X^{(k)}X]\bigg{\}}\cdot(\beta_{1}-\beta_{2})\bigg{\|}\\
    &\lesssim \left\{\hat{m}\oner\sumr h(X_{i}^{*})\left| X_{i}^{*{(l)}}X_{i}^{*{(k)}}\right|\| X_{i}^{*}\| +O(1)\right\}\|\beta_{1}-\beta_{2}\|\\
    &:=L_{n}\|\beta_{1}-\beta_{2}\|,
  \end{align*}
  the second inequality is due to Assumption \ref{A3}. Then, we need to show
  $L_{n}=O_{p}(1)$. Actually, we know that
  \begin{align*}\label{Newproof}
    \Exp\left\{\hat{m}\oner\sumr
    h(X_{i}^{*})\left|X_{i}^{*{(l)}}X_{i}^{*{(k)}}\right|\| X_{i}^{*}\|\right\}&=\Exp\left\{\onen\sumjn\hat{w}_{j}h(X_{j})\left| X_{j}^{(l)}X_{j}^{(k)}\right\rvert\| X_{j}\|\right\}\\
                                                                          &\lesssim\Exp\left\{\onen\sumjn
                                                                            h(X_{j})\left|X_{j}^{(l)}X_{j}^{(k)}\right\rvert\| X_{j}\|^{2}\right\}\\
                                                                          &=\Exp\left\{h(X)\left| X^{(l)}X^{(k)}\right|\| X\|^{2}\right\},
  \end{align*}
  and
  $\Exp\left\{h(X)\left| X^{(l)}X^{(k)}\right\rvert\|
    X\|^{2}\right\}\leq\sum_{l,k}\Exp\left\{h(X)\left|
      X^{(l)}X^{(k)}\right|\|
    X\|^{2}\right\}\lesssim\Exp\left\{h(X)\|
    X\|^{4}\right\}<\infty$, which is guaranteed because of Assumption
  \ref{A3}; therefore, $L_{n}$ is $O_{p}(1)$. Now we apply Theorem 21.10 in
  \cite{Davidson1994Stochastic} to conclude that
  $\Dot{\Psi}_{\ruw}^{*\mathrm{m}}(\beta)-\Gamma(\beta)$ is stochastic
  equicontinuous. Then consistency of
  $\Dot{\Psi}_{\ruw}^{*\mathrm{m}}(\beta)$ and stochastic equicontinuity
  implies
  \begin{equation*}
    \sup_{\beta\in\mathbb{B}}\|\Dot{\Psi}_{\ruw}^{*\mathrm{m}}(\beta)-\Gamma(\beta)\|\xrightarrow{p}0,
  \end{equation*}
  due to Theorem 21.9 in \cite{Davidson1994Stochastic}. Finally, applying
  Theorem 21.6 in \cite{Davidson1994Stochastic}, we conclude that for every
  sequence $\beta_{n}\xrightarrow{p}\beta_{0}$, we have
  \begin{equation*}
    \Dot{\Psi}_{\ruw}^{*\mathrm{m}}(\beta_{n})\xrightarrow{p}\Gamma(\beta_{0})=\Gamma.
  \end{equation*}
\end{proof}

\subsubsection{Proof of Lemma
  \ref{lm:Normality_of_Psi}}\label{sec:Normality_of_Psi}

In this section, our goal is to prove Lemma \ref{lm:Normality_of_Psi}.  We first
prove that when $r/n\to0$, under the condition
\begin{equation*}
  \Exp\left\{\left| b^{'}(X^{T}\beta_{0})-Y\right|^{2+\delta}\left\| X\right\|^{4+2\delta}\right\}<\infty,
\end{equation*}
we have
\begin{equation*}
  \sqrt{r}\Psi_{\ruw}^{*\mathrm{m}}(\beta_{0})\xrightarrow{d}N(0,m\Gamma).
\end{equation*}
We use the following martingale central limit theorem (CLT) in Hilbert space to
prove this weaker verision of asymptotic normality.
\begin{lemma}[Martingale Central Limit Theorem]\label{lm:MartingaleCLT}
  Let $H$ be separable Hilbert space, ${X_{nk}}$ be $H$-valued martingale
  difference sequence w.r.t. ${\mathcal{F}_{nk}}$, namely, $\{X_{nk}\}$ is
  adapted to $\{\mathcal{F}_{nk}\}$, $\Exp\| X_{nk}\|^{2}<\infty$,
  $\Exp(X_{nk}\vert\mathcal{F}_{nk-1})=0$, and $N(0,S)$ be Gaussian
  distribution, if the following conditions hold:
  \begin{enumerate}
  \item\label{c1}
    $\sum_{k=1}^{k(n)}\Exp\left(\|
      X_{nk}\|^{2}\vert\mathcal{F}_{n,k-1}\right)\xrightarrow{p}\text{tr}(S)$,
  \item\label{c2}
    $\sum_{k=1}^{k(n)}\Exp\left[\| X_{nk}\|^{2} I\{\|
      X_{nk}\|>\varepsilon\}\vert\mathcal{F}_{n,k-1}\right]\xrightarrow{p}0$,
    for every $\varepsilon>0$,
  \item\label{c3}
    $\sum_{k=1}^{k(n)}\Exp\left\{( X_{nk},e_{i})(
      X_{nk},e_{j})\vert\mathcal{F}_{n,k-1}\right\}\xrightarrow{p}(Se_{i},e_{j})$
    for some orthonormal basis ${e_{i}}$ in $H$ and $i,j\in\mathcal{N}$,
  \end{enumerate}
	then $S_{n}:=\sum_{k=1}^{k(n)} X_{nk}\xrightarrow{d}N(0,S)$.
\end{lemma}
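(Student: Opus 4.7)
\textbf{Proof strategy for Lemma~\ref{lm:MartingaleCLT}.} The standard route to a martingale CLT in a separable Hilbert space $H$ combines three ingredients: a scalar martingale CLT, the Cram\'er--Wold device, and a tightness argument that exploits the finite trace $\tr(S)=\sum_{i}(Se_{i},e_{i})<\infty$.

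First, I would reduce the problem to finite-dimensional convergence. Fix the orthonormal basis $\{e_{i}\}$ appearing in condition~\ref{c3}, an integer $N\ge 1$, and a vector $c=(c_{1},\ldots,c_{N})\in\mathbb{R}^{N}$. The scalar array $Y_{nk}:=\sum_{i=1}^{N}c_{i}(X_{nk},e_{i})$ is a martingale difference sequence with respect to $\{\mathcal{F}_{nk}\}$, and condition~\ref{c3} yields
\[
  \sum_{k=1}^{k(n)}\Exp(Y_{nk}^{2}\mid\mathcal{F}_{n,k-1})
  =\sum_{i,j=1}^{N}c_{i}c_{j}\sum_{k=1}^{k(n)}\Exp\{(X_{nk},e_{i})(X_{nk},e_{j})\mid\mathcal{F}_{n,k-1}\}
  \;\cvp\;\sum_{i,j=1}^{N}c_{i}c_{j}(Se_{i},e_{j}),
\]
while the bound $|Y_{nk}|\le\|c\|\|X_{nk}\|$ combined with condition~\ref{c2} supplies the scalar Lindeberg condition. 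The classical scalar martingale CLT (e.g., Brown 1971, McLeish 1974) then gives $\sum_{k}Y_{nk}\cvd N(0,\sum_{i,j}c_{i}c_{j}(Se_{i},e_{j}))$, and the Cram\'er--Wold device upgrades this to joint convergence of every finite-dimensional projection $\bigl((S_{n},e_{1}),\ldots,(S_{n},e_{N})\bigr)\cvd N(0,(Se_{i},e_{j})_{i,j\le N})$.

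Second, I would establish tightness of $\{S_n\}$ as $H$-valued random variables. Martingale orthogonality gives $\Exp\|S_n\|^{2}=\sum_{k}\Exp\|X_{nk}\|^{2}$, so condition~\ref{c1} — combined with the uniform integrability of $\|X_{nk}\|^{2}$ delivered by condition~\ref{c2} — yields $\Exp\|S_n\|^{2}\to\tr(S)<\infty$ and thus $L^{2}$-boundedness. The tightness criterion in a separable Hilbert space additionally requires a uniform tail estimate: for every $\varepsilon>0$ one needs $N=N(\varepsilon)$ with $\limsup_{n}\Exp\sum_{i>N}(S_{n},e_{i})^{2}<\varepsilon$. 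Writing the tail as $\Exp\|S_n\|^{2}-\sum_{i\le N}\Exp(S_{n},e_{i})^{2}$, passing to the limit via the finite-dimensional convergence already obtained, and using $\sum_{i>N}(Se_{i},e_{i})\to 0$ as $N\to\infty$, delivers the required estimate.

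Finally, I would combine the two ingredients through Prokhorov's theorem: every subsequential weak limit of $\{S_n\}$ is a Radon probability measure on $H$ whose finite-dimensional marginals coincide with those of $N(0,S)$; since a Radon centred Gaussian measure on $H$ is uniquely determined by its finite-dimensional marginals, the limit must equal $N(0,S)$, and hence the full sequence satisfies $S_{n}\cvd N(0,S)$. The main obstacle I anticipate is upgrading the conditional-in-probability statement of condition~\ref{c1} to the unconditional convergence of $\Exp\|S_n\|^{2}$ needed for the tail estimate, since this step requires uniform integrability of the conditional quadratic variation — a role that condition~\ref{c2} (the Lindeberg condition) must play in addition to its standard duty in the scalar CLT.
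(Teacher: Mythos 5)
The first thing to note is that the paper does not prove Lemma~\ref{lm:MartingaleCLT} at all: it is quoted as Theorem~C of \cite{jakubowski1980limit} and used as a black box, so there is no internal proof to compare against. Your sketch is therefore an attempt to reprove a cited result. Its skeleton --- finite-dimensional convergence via the Cram\'er--Wold device and the scalar martingale CLT, followed by a flat-concentration tightness argument in $H$ --- is indeed the standard architecture for theorems of this type, and the finite-dimensional half is essentially sound: condition~3 gives convergence in probability of the conditional variance of each linear combination $Y_{nk}=\sum_{i\le N}c_i(X_{nk},e_i)$, and the bound $|Y_{nk}|\le\|c\|\,\|X_{nk}\|$ correctly transfers condition~2 into the scalar conditional Lindeberg condition, so Brown/McLeish applies.

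The genuine gap is in the tightness step, and it is the one you flag but do not close. Conditions~1--3 are all convergences \emph{in probability}; nothing in the hypotheses makes the total conditional variance $V_n:=\sum_{k}\Exp\left(\|X_{nk}\|^{2}\mid\mathcal{F}_{n,k-1}\right)$ uniformly integrable, so you cannot conclude $\Exp\|S_n\|^{2}=\Exp V_n\to\tr(S)$. Condition~2 cannot supply this on its own: it controls the conditional tails of the \emph{individual} increments, not the integrability of $V_n$, which can have exploding expectation on an event of vanishing probability. The same problem recurs when you ``pass to the limit via the finite-dimensional convergence'' to evaluate $\sum_{i\le N}\Exp(S_n,e_i)^{2}$: weak convergence of $(S_n,e_i)$ does not yield convergence of its second moment without uniform integrability. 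The standard repair --- and the one used in proofs of this type, including Jakubowski's --- is a stopping-time truncation: stop the martingale the first time the running conditional variance exceeds $\tr(S)+1$. By condition~1 the stopped array coincides with the original with probability tending to one, its quadratic variation is bounded once the single overshooting increment is also controlled (which is where condition~2 genuinely enters, via $\max_k\Exp(\|X_{nk}\|^{2}\mid\mathcal{F}_{n,k-1})\cvp 0$), and the moment computations in your tightness argument then go through for the stopped array. Without this truncation the argument as written does not close.
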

See Theorem C in \cite{jakubowski1980limit}. We now present the
proof.%
\begin{proof}[\textbf{Proof for the case of $\rho=0$}]
  Note that
  \begin{equation*}
    \sqrt{r}\Psi_{\ruw}^{*\mathrm{m}}(\beta_{0})=\frac{1}{\sqrt{r}}\sumr
    M_{n,i}+\sqrt{r}Q.
  \end{equation*}
  First, we prove
  \begin{equation*}
    \sqrt{r}Q=o_{p}(1).
  \end{equation*}
  We know that
  \begin{align*}
    \sqrt{r}Q&=\sqrt{r}\onen\sumjn\sqrt{b^{''}(X_{j}^{T}\hbeta_{\rp})}\|L\hat{\Phi}_{\rp}^{-1}X_{j}\|\left\{b^{'}(X_{j}^{T}\beta_{0})-Y_{j}\right\}X_{j}\\
             &=\sqrt{\frac{r}{n}}\frac{1}{\sqrt{n}}\sumjn\sqrt{b^{''}(X_{j}^{T}\hbeta_{\rp})}\|L\hat{\Phi}_{\rp}^{-1}X_{j}\|\left\{b^{'}(X_{j}^{T}\beta_{0})-Y_{j}\right\}X_{j}.
  \end{align*}
  When $r/n\to0$, it is sufficient to show that
  \begin{equation*}
    \frac{1}{\sqrt{n}}\sumjn\sqrt{b^{''}(X_{j}^{T}\hbeta_{\rp})}\|L\hat{\Phi}_{\rp}^{-1}X_{j}\|\left\{b^{'}(X_{j}^{T}\beta_{0})-Y_{j}\right\}X_{j}=O_{p}(1).
  \end{equation*}
  The data points used to estimate $\hbeta_{\rp}$ and $\hat{\Phi}_{\rp}$
  can be ignored, because
  \begin{align*}
    &\frac{1}{\sqrt{n}}\sumjn\sqrt{b^{''}(X_j^T\hbeta_{\rp})}
      \|L\hat{\Phi}_{\rp}^{-1}X_j\|\left\{ b^{'}(X_j^T\beta_0)-Y_j\right\}X_j\\
    &=\frac{1}{\sqrt{n}}\sum_{k=1}^{r_{\rp}}\sqrt{b^{''}(X_k^{*_{\rp}T}\hbeta_{\rp})}
      \|L\Phi_{\rp}^{-1}X_k^{*_{\rp}}\|\left\{b^{'}(X_k^{*_{\rp}T}\beta_0)-Y_k^{*_{\rp}}\right\}X_k^{*_{\rp}}\\
    &\quad+\sqrt{1-\frac{r_{\rp}}{n}}\frac{1}{\sqrt{n-r_{\rp}}}
      \sum_{j=1}^{n-r_{\rp}}\sqrt{b^{''}(X_j^T\hbeta_{\rp})}
      \|L\hat{\Phi}_{\rp}^{-1}X_j\|\left\{ b^{'}(X_j^T\beta_0)-Y_j\right\}X_j\\
    &\lesssim\frac{1}{\sqrt{n}}\sum_{k=1}^{r_{\rp}}
      \|X_k^{*_{\rp}}\|\left\{b^{'}(X_k^{*_{\rp}T}\beta_0)-Y_k^{*_{\rp}}\right\}X_k^{*_{\rp}}\\
    &\quad+\sqrt{1-\frac{r_{\rp}}{n}}\frac{1}{\sqrt{n-r_{\rp}}}\sum_{j=1}^{n-r_{\rp}}\sqrt{b^{''}(X_j^T\hbeta_{\rp})}\|L\hat{\Phi}_{\rp}^{-1}X_j\|\left\{ b^{'}(X_j^T\beta_0)-Y_j\right\}X_j\\
    &=O_p \left( \frac{r_{\rp}}{\sqrt{n}}
      \right)+\sqrt{1-\frac{r_{\rp}}{n}}\frac{1}{\sqrt{n-r_{\rp}}}\sum_{j=1}^{n-r_{\rp}}\sqrt{b^{''}(X_j^T\hbeta_{\rp})}\|L\hat{\Phi}_{\rp}^{-1}X_j\|\left\{ b^{'}(X_j^T\beta_0)-Y_j\right\}X_j,
  \end{align*}
  where $X_k^{*_{\rp}}$ and $Y_k^{*_{\rp}}$ denote the data points in the pilot
  sample. The last equation is due to the fact that
  $\Exp\left[\|X\|\left\{b^{'}(X^{T}\beta_{0})-Y\right\}X\right]<\infty$. Since
  that $r_{\rp}/\sqrt{n}\to0$, we know that the
  data points in the pilot sample can be ignored. We know that
  $r_{\rp}/n\to0$, and thus to ease the notation, we can assume that $X_{j},
  Y_{j}$ are independent of $\hbeta_{\rp}$ and $\hat{\Phi}_{\rp}$. 
  Conditionally on $\hbeta_{\rp}$ and $\hat{\Phi}_{\rp}$, we have
  \begin{align*}
    \left\|\Var\left[\hat{w}_{1}\left\{b^{'}(X_{1}^{T}\beta_{0})-Y_{1}\right\}X_{1}\bigg{\vert}\hbeta_{\rp},\hat{\Phi}_{\rp}\right]\right\|_{F}&\lesssim\left\|\Var\left[\| X_{1}\|\left\{b^{'}(X_{1}^{T}\beta_{0})-Y_{1}\right\}X_{1}\bigg{\vert}\hbeta_{\rp},\hat{\Phi}_{\rp}\right]\right\|_{F}\\
                                                                                                                                                                             &=\Exp\left\{b^{''}(X_{1}^{T}\beta_{0})\| X_{1}\|^{4}\right\}\\
                                                                                                                                                                             &\lesssim\Exp\left(\| X_{1}\|^{4}\right)<\infty.
  \end{align*}
  Here, we can use a normal CLT for i.i.d. data to show that
  \begin{equation*}
    \frac{1}{\sqrt{n}}\sumjn\sqrt{b^{''}(X_{j}^{T}\hbeta_{\rp})}\|L\hat{\Phi}_{\rp}^{-1}X_{j}\|\left\{b^{'}(X_{j}^{T}\beta_{0})-Y_{j}\right\}X_{j}=O_{p\vert\hbeta_{\rp},\hat{\Phi}_{\rp}}(1),
  \end{equation*}
  and therefore $O_{p}(1)$ unconditionally \citep[see][]{xiong2008some, wang2019more}. This is sufficient to show that
  \begin{equation*}
    \sqrt{r}Q=o_{p}(1).
  \end{equation*}
  Therefore, we have
  \begin{equation*}
    \sqrt{r}\Psi_{\ruw}^{*\mathrm{m}}(\beta_{0})=\frac{1}{\sqrt{r}}\sumr
    M_{n,i}+o_{p}(1).
  \end{equation*}
  Next, our goal is to show the asymptotic normality of
  $\left(\sumr M_{n,i}\right)/\sqrt{r}$, using the martingale CLT for
  Hilbert space. We can check the conditions of the martingale CLT for Hilbert
  space. Denote
  \begin{equation*}
    \xi_{n,i}=\frac{1}{\sqrt{r}}M_{n,i},
  \end{equation*}
  and we have $\Exp(\xi_{n,i}\vert\mathcal{F}_{n,i-1})=0$. In addition, we
  have $\Exp(\|\xi_{n,i}\|^{2})<\infty$ because
  \begin{align*}
    &\left\{\Exp\left(\| M_{n,1}\|^{2}\right)\right\}^{\frac{1}{2}}\\
    &\leq\left(\Exp\left[\left\|\hat{m}\{b^{'}(X_{1}^{*{T}}\beta_{0})-Y_{1}^{*}\}X_{1}^{*}\right\|^{2}\right]\right)^{\frac{1}{2}}+\left(\Exp\left[\left\|\onen\sumjn\hat{w}_{j}\left\{b^{'}(X_{j}^{T}\beta_{0})-Y_{j}\right\}X_{j}\right\|^{2}\right]\right)^{\frac{1}{2}}\\
    &:=I_{1}^{\frac{1}{2}}+I_{2}^{\frac{1}{2}}<\infty.
  \end{align*}
  We know $I_{1}<\infty$ because
  \begin{align*}
    I_{1}&=\Exp\left[\hat{m}^{2}\{b^{'}(X_{1}^{*{T}}\beta_{0})-Y_{1}^{*}\}^{2}\left\| X_{1}^{*}\right\|^{2}\right]=\Exp\left(\Exp\left[\hat{m}^{2}\{b^{'}(X_{1}^{*{T}}\beta_{0})-Y_{1}^{*}\}^{2}\left\| X_{1}^{*}\right\|^{2}\bigg{\vert}\mathcal{F}_{n,0}\right]\right)\\
         &=\Exp\left[\hat{m}\onen\sumjn\hat{w}_{j}\{b^{'}(X_{j}^{T}\beta_{0})-Y_{j}\}^{2}\| X_{j}\|^{2}\right]\\
         &\lesssim\Exp\left(\left(\onen\sumjn\| X_{j}\|\right)\left[\onen\sumjn\{b^{'}(X_{j}^{T}\beta_{0})-Y_{j}\}^{2}\| X_{j}\|^{3}\right]\right)\\
         &=\Exp\left\{\Exp\left(\left(\onen\sumjn\| X_{j}\|\right)\left[\onen\sumjn\{b^{'}(X_{j}^{T}\beta_{0})-Y_{j}\}^{2}\| X_{j}\|^{3}\right]\bigg{\vert} X_{1}^{n}\right)\right\}\\
         &=\Exp\left[\left(\onen\sumjn\|
           X_{j}\|\right)\left\{\onen\sumjn b^{''}(X_{j}^{T}\beta_{0})\| X_{j}\|^{3}\right\}\right]\\
         &\lesssim\Exp\left\{\left(\onen\sumn\| X_{i}\|\right)\left(\onen\sumjn\| X_{j}\|^{3}\right)\right\}\\
         &=\Exp\left(\frac{1}{n^{2}}\sumn\| X_{i}\|^{4}+\frac{1}{n^{2}}\sum_{i\neq j}\| X_{i}\|\| X_{j}\|^{3}\right)\\
         &=\onen\Exp(\| X\|^{4})+\frac{n(n-1)}{n^{2}}\Exp(\| X\|)\Exp(\| X\|^{3})<\infty.
  \end{align*}
  Similarly, using Minkowski inequality, we know $I_{2}<\infty$ because
  \begin{align*}
    I_{2}&\leq\onen\sumjn\Exp\left[\left\|\hat{w}_{j}\left\{b^{'}(X_{j}^{T}\beta_{0})-Y_{j}\right\}X_{j}\right\|^{2}\right]\\
         &\lesssim\onen\sumjn\Exp\left[\left\{b^{'}(X_{j}^{T}\beta_{0})-Y_{j}\right\}^{2}\left\| X_{j}\right\|^{4}\right]\\
         &=\onen\sumjn\Exp\left\{b^{''}(X_{j}^{T}\beta_{0})\left\| X_{j}\right\|^{4}\right\}\\
         &\lesssim\Exp(\left\| X\right\|^{4})<\infty.
  \end{align*}
  Therefore $\Exp(\| M_{n,1}\|^{2})<\infty$, which means
  $\Exp(\|\xi_{n,i}\|^{2})<\infty$. We then verify the three
  conditions of Lemma \ref{lm:MartingaleCLT}.  Condition \ref{c1} and \ref{c3}
  are trivial to verify. For condition \ref{c1}, we can see
  \begin{align*}
    \sumr\Exp\left(\left\|\xi_{n,i}\right\|^{2}\vert\mathcal{F}_{n,i-1}\right)&=\Exp\left(\| M_{n,1}\|^{2}\vert\mathcal{F}_{n,0}\right)\\
                                                                             &=\onen\sum_{k=1}^{n}\hat{w}_{k}\onen\sumjn\hat{w}_{j}\{b^{'}(X_{j}^{T}\beta_{0})-Y_{j}\}^{2}\| X_{j}\|^{2}\\
                                                                             &\quad-\onen\sum_{k=1}^{n}\hat{w}_{k}\{b^{'}(X_{k}^{T}\beta_{0})-Y_{k}\}X_{k}^{T}\onen\sumjn\hat{w}_{j}\{b^{'}(X_{j}^{T}\beta_{0})-Y_{j}\}X_{j}.
  \end{align*}
  We have already known that
  $\| X_{j}\|\{b^{'}(X_{j}^{T}\beta_{0})-Y_{j}\}X_{j}$ is integrable and
  also
  \begin{align*}
    \Exp\left[\{b^{'}(X_{j}^{T}\beta_{0})-Y_{j}\}^{2}\| X_{j}\|^{3}\right]&=\Exp\left(\Exp\left[\{b^{'}(X_{j}^{T}\beta_{0})-Y_{j}\}^{2}\| X_{j}\|^{3}\bigg{\vert} X_{j}\right]\right)\\
                                                                       &=\Exp\left\{b^{''}(X_{j}^{T}\beta_{0})\| X_{j}\|^{3}\right\}\\
                                                                       &\lesssim\Exp\left(\| X_{j}\|^{3}\right)<\infty.
  \end{align*}
  Therefore, $\| X_{j}\|$,
  $\| X_{j}\|\{b^{'}(X_{j}^{T}\beta_{0})-Y_{j}\}X_{j}$ and
  $\{b^{'}(X_{j}^{T}\beta_{0})-Y_{j}\}^{2}\| X_{j}\|^{3}$ are all
  integrable. Then, applying Lemma \ref{lm:L1_in_MoreEffi} and the Law of Large
  Numbers, we respectively have that
  \begin{align*}
    \onen\sum_{k=1}^{n}\hat{w}_{k}&=\onen\sum_{k=1}^{n}\sqrt{b^{''}(X_{k}^{T}\beta_{0})}\|L\Phi^{-1}X_{k}\|+o_{p}(1)\\
                                  &=\Exp\left\{\sqrt{b^{''}(X^{T}\beta_{0})}\|L\Phi^{-1}X\|\right\}+o_{p}(1),
  \end{align*}
  and
  \begin{align*}
    &\onen\sumjn\hat{w}_{j}\{b^{'}(X_{j}^{T}\beta_{0})-Y_{j}\}^{2}\| X_{j}\|^{2}\\
    &=\onen\sumjn\sqrt{b^{''}(X_{j}^{T}\beta_{0})}\|L\Phi^{-1}X_{j}\|\{b^{'}(X_{j}^{T}\beta_{0})-Y_{j}\}^{2}\| X_{j}\|^{2}+o_{p}(1)\\
    &=\Exp\left[\left\{b^{''}(X_{j}^{T}\beta_{0})\right\}^{\frac{3}{2}}\|L\Phi^{-1}X_{j}\|\| X_{j}\|^{2}\right]+o_{p}(1),
  \end{align*}
  and
  \begin{align*}
    &\onen\sumjn\hat{w}_{j}\{b^{'}(X_{j}^{T}\beta_{0})-Y_{j}\} X_{j}\\
    &=\onen\sumjn\sqrt{b^{''}(X_{j}^{T}\beta_{0})}\|L\Phi^{-1}X_{j}\|\{b^{'}(X_{j}^{T}\beta_{0})-Y_{j}\} X_{j}+o_{p}(1)\\
    &=o_{p}(1).
  \end{align*}
  Therefore, we have
  \begin{align*}
    \sumr\Exp\left(\left\|\xi_{n,i}\right\|^{2}\vert\mathcal{F}_{n,i-1}\right)
    &=\onen\sum_{k=1}^{n}\hat{w}_{k}\onen\sumjn\hat{w}_{j}\{b^{'}(X_{j}^{T}\beta_{0})-Y_{j}\}^{2}\| X_{j}\|^{2}\\
    &\quad-\onen\sum_{k=1}^{n}\hat{w}_{k}
      \{b^{'}(X_{k}^{T}\beta_{0})-Y_{k}\}X_{k}^{T}\onen\sumjn\hat{w}_{j}
      \{b^{'}(X_{j}^{T}\beta_{0})-Y_{j}\}X_{j}\\
    &\xrightarrow{p}\Exp\left\{\sqrt{b^{''}(X^{T}\beta_{0})}\|L\Phi^{-1}X\|\right\}\Exp\left[\left\{b^{''}(X^{T}\beta_{0})\right\}^{\frac{3}{2}}\|L\Phi^{-1}X\|\| X\|^{2}\right]\\
    &=\text{tr}(m\Gamma),
  \end{align*}
  which means condition \ref{c1} is verified. For condition \ref{c3}, similarly
  as condition \ref{c1}, we can prove that
  \begin{align}
    \sumr\Exp\left(\xi_{n,i}\xi_{n,i}^{T}\vert\mathcal{F}_{n,i-1}\right)
    &=\Exp\left( M_{n,1}M_{n,1}^{T}\vert\mathcal{F}_{n,0}\right)\notag\\
    &=\onen\sum_{k=1}^{n}\hat{w}_{k}\onen\sumjn\hat{w}_{j}\{b^{'}(X_{j}^{T}\beta_{0})-Y_{j}\}^{2}X_{j}X_{j}^{T}\notag\\
    &\quad-\onen\sum_{k=1}^{n}\hat{w}_{k}\{b^{'}(X_{k}^{T}\beta_{0})-Y_{k}\}X_{k}\onen\sumjn\hat{w}_{j}\{b^{'}(X_{j}^{T}\beta_{0})-Y_{j}\}X_{j}^{T}
      \notag\\
    &\xrightarrow{p}\Exp\left\{\sqrt{b^{''}(X^{T}\beta_{0})}\|L\Phi^{-1}X\|\right\}\Exp\left[\left\{b^{''}(X^{T}\beta_{0})\right\}^{\frac{3}{2}}\|L\Phi^{-1}X\| XX^{T}\right]\notag\\
    &=m\Gamma.\label{eq:2}
  \end{align}
  Let the orthonormal basis $e_{i}$ be $e_{1}=(1,0,0,...,0)^{T}$,
  $e_{2}=(0,1,0,...,0)^{T}$,..., then condition \ref{c3} is the same as the
  convergence in probability of each entry of
  $\Exp\left[M_{n,1}M_{n,1}^{T}\vert\mathcal{F}_{n,0}\right]$, which is
  guaranteed. To prove condition \ref{c2}, we first show that
  $\sup_{n}\Exp\left[\|
    M_{n,1}\|^{2+\delta}\vert\right]<\infty$. We have
  \begin{align*}
    &\left\{\Exp\left(\left\| M_{n,1}\right\|^{2+\delta}\right)\right\}^{\frac{1}{2+\delta}}\\
    &\leq\left(\Exp\left[\left\|\hat{m}\{b^{'}(X_{1}^{*{T}}\beta_{0})-Y_{1}^{*}\}X_{1}^{*}\right\|^{2+\delta}\right]\right)^{\frac{1}{2+\delta}}+\left(\Exp\left[\left\|\onen\sumjn\hat{w}_{j}\{b^{'}(X_{j}^{T}\beta_{0})-Y_{j}\}X_{j}\right\|^{2+\delta}\right]\right)^{\frac{1}{2+\delta}}\\
    &:=I_{1}^{\frac{1}{2+\delta}}+I_{2}^{\frac{1}{2+\delta}}.
  \end{align*}
  For $I_{1}$, we know
  \begin{align*}
    I_{1}&=\Exp\left[\left\|\hat{m}\{b^{'}(X_{1}^{*{T}}\beta_{0})-Y_{1}^{*}\}X_{1}^{*}\right\|^{2+\delta}\right]\\
         &=\Exp\left(\Exp\left[\left\|\hat{m}\{b^{'}(X_{1}^{*{T}}\beta_{0})-Y_{1}^{*}\}X_{1}^{*}\right\|^{2+\delta}\bigg{\vert}\mathcal{F}_{n,0}\right]\right)\\
         &=\Exp\left\{\hat{m}^{1+\delta}\onen\sumjn\hat{w}_{j}\left\vert b^{'}(X_{j}^{T}\beta_{0})-Y_{j}\right|^{2+\delta}\| X_{j}\|^{2+\delta}\right\}\\
         &\lesssim\Exp\left[\left(\onen\sumjn\| X_{j}\|\right)^{1+\delta}\left\{\onen\sumjn\left| b^{'}(X_{j}^{T}\beta_{0})-Y_{j}\right|^{2+\delta}\| X_{j}\|^{3+\delta}\right\}\right]\\
         &\leq\Exp\left[\left(\onen\sumn\| X_{i}\|^{1+\delta}\right)\left\{\onen\sumjn\left| b^{'}(X_{j}^{T}\beta_{0})-Y_{j}\right|^{2+\delta}\| X_{j}\|^{3+\delta}\right\}\right]\\
         &=\Exp\left\{\frac{1}{n^{2}}\sumn\left\vert b^{'}(X_{j}^{T}\beta_{0})-Y_{j}\right|^{2+\delta}\| X_{j}\|^{4+2\delta}+\frac{1}{n^{2}}\sum_{i\neq j}\| X_{i}\|^{1+\delta}\left\vert b^{'}(X_{j}^{T}\beta_{0})-Y_{j}\right|^{2+\delta}\| X_{j}\|^{3+\delta}\right\}\\
         &=\onen\Exp\left\{\left\vert b^{'}(X^{T}\beta_{0})-Y\right|^{2+\delta}\| X\|^{4+2\delta}\right\}+\frac{n(n-1)}{n^{2}}\Exp\left(\| X\|^{1+\delta}\right)\Exp\left\{\left\vert b^{'}(X^{T}\beta_{0})-Y\right|^{2+\delta}\| X\|^{3+\delta}\right\}.
  \end{align*}
  For $I_{2}$, we know
  \begin{align*}
    I_{2}&\leq\onen\sumjn\Exp\left[\left\|\hat{w}_{j}\left\{b^{'}(X_{j}^{T}\beta_{0})-Y_{j}\right\}X_{j}\right\|^{2+\delta}\right]\\
         &\lesssim\onen\sumjn\Exp\left\{\left| b^{'}(X_{j}^{T}\beta_{0})-Y_{j}\right|^{2+\delta}\left\| X_{j}\right\|^{4+2\delta}\right\}\\
         &=\Exp\left\{\left| b^{'}(X^{T}\beta_{0})-Y\right|^{2+\delta}\left\| X\right\|^{4+2\delta}\right\}.
  \end{align*}
  Since we include the intercept in the model, we have $\| X\|\geq 1$. Then,
  it is easy to know that
  $\Exp\left\{\left\vert
      b^{'}(X^{T}\beta_{0})-Y\right|^{2+\delta}\|
    X\|^{3+\delta}\right\}\leq \Exp\left\{\left\vert
      b^{'}(X^{T}\beta_{0})-Y\right|^{2+\delta}\|
    X\|^{4+2\delta}\right\}$. Then,
  $\sup_{n}\Exp\left(\| M_{n,1}\|^{2+\delta}\right)<\infty$, and
  condition \ref{c2} is due to
  $\sup_{n}\Exp\left(\| M_{n,1}\|^{2+\delta}\vert\right)<\infty$,
  because
  \begin{equation*}
    \sumr\Exp\left[\|\xi_{n,i}\|^{2}I\{\|\xi_{n,i}\|^{2}>\varepsilon\}\vert\mathcal{F}_{n,i-1}\right]=\Exp\left[\| M_{n,1}\|^{2}I\{\| M_{n,1}\|^{2}>\varepsilon r\}\vert\mathcal{F}_{n,0}\right],
  \end{equation*}
  and applying Markov inequality, we have
  \begin{align*}
    \Exp\left(\Exp\left[\| M_{n,1}\|^{2}I\{\| M_{n,1}\|^{2}>\varepsilon r\}\vert\mathcal{F}_{n,0}\right]\right)&=\Exp\left[\| M_{n,1}\|^{2}I\{\| M_{n,1}\|^{2}>\varepsilon r\}\right]\\
                                                                                                                                           &\leq\frac{1}{\varepsilon^{\frac{\delta}{2}}r^{\frac{\delta}{2}}}\Exp\left[\| M_{n,1}\|^{2+\delta}I\{\| M_{n,1}\|^{2}>\varepsilon r\}\right]\\
                                                                                                                                           &\leq\frac{1}{\varepsilon^{\frac{\delta}{2}}r^{\frac{\delta}{2}}}\Exp\left(\| M_{n,1}\|^{2+\delta}\right)\\
                                                                                                                                           &\leq\frac{1}{\varepsilon^{\frac{\delta}{2}}r^{\frac{\delta}{2}}}\sup_{n}\Exp\left(\| M_{n,1}\|^{2+\delta}\right)\to0.
  \end{align*}
  Therefore,
  \begin{equation*}
    \frac{1}{\sqrt{r}}\sumr M_{n,i}\xrightarrow{d}N(0,m\Gamma).
  \end{equation*}
  Then, the three conditions of Lemma \ref{lm:MartingaleCLT} have been verified
  and we know
  \begin{equation*}
    \sqrt{r}\Psi_{\ruw}^{*\mathrm{m}}(\beta_{0})\xrightarrow{d}N(0,m\Gamma).
  \end{equation*}	
\end{proof}

Next, for the more general case of $r/n\to\rho\in[0,1)$, we prove that
\begin{equation*}
  \sqrt{r}\Psi_{\ruw}^{*\mathrm{m}}(\beta_{0})
  \xrightarrow{d}N(0,m\Gamma+\rho\Omega),		
\end{equation*}
under a stronger moment condition:
\begin{equation*}
	\Exp\left\{\left| b^{'}(X^{T}\beta_{0})-Y\right|^{4+2\delta}
    \left\| X\right\|^{8+4\delta}\right\}<\infty.
\end{equation*}
We prove the result with the following martingle CLT:
\begin{lemma}[Multivariate version of martingale
  CLT]\label{lm:MultiMartingaleCLT}
	For $k=1,2,3,...,$ let $\{\xi_{ki}; i=1,2,...,N_{k}\}$ be a martingale
  difference sequence in $\mathbb{R}^{p}$ relative to the filtration
  $\{\mathcal{F}_{ki};i=0,1,...,N_{k}\}$ and let $Y_{k}\in\mathbb{R}^{p}$ be an
  $\mathcal{F}_{k0}$-measurable random vector. Set
  $S_{k}=\sum_{i=1}^{N_{k}}\xi_{ki}$. Assume the following conditions.
  \begin{enumerate}
	\item\label{cd1}
    $\lim\limits_{k\to\infty}\sum_{i=1}^{N_{k}}\Exp(\|\xi_{ki}\|^{4})=0$.
	\item\label{cd2}
    $\lim\limits_{k\to\infty}\Exp\left\{\left\|\sum_{i=1}^{N_{k}}\Exp(\xi_{ki}\xi_{ki}^{T}\vert\mathcal{F}_{k,i-1})-B_{k}\right\|^{2}\right\}=0$
    for some sequence of positive definite matrices $\{B_{k}\}_{k=1}^{\infty}$
    with $\sup_{k}\lambda_{\max}(B_{k})<\infty$ i.e. the largest eigenvalue is
    uniformly bounded.
	\item\label{cd3} For some probability distribution $L_{0}$, $*$ denotes
    convolution and $L(\cdot)$ denotes the law of random variables:
    \begin{equation*}
      L(Y_{k})*N(0,B_{k})\xrightarrow{d}L_{0}.
    \end{equation*}
  \end{enumerate}
  Then we have
  \begin{equation*}
    L(Y_{k}+S_{k})\xrightarrow{d}L_{0}.
  \end{equation*}
\end{lemma}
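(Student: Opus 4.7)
The plan is to combine Lévy's continuity theorem with a conditional martingale central limit theorem for $S_k$ given $\mathcal{F}_{k0}$. By Lévy's theorem it suffices to show that for every $t\in\mathbb{R}^p$, $\phi_k(t):=\Exp[e^{it^T(Y_k+S_k)}]\to\phi_{L_0}(t)$. Since $Y_k$ is $\mathcal{F}_{k0}$-measurable, the tower property gives $\phi_k(t)=\Exp[e^{it^T Y_k}U_k(t)]$ with $U_k(t):=\Exp(e^{it^T S_k}\mid\mathcal{F}_{k0})$, while condition~\ref{cd3} and Lévy's theorem give $\Exp[e^{it^T Y_k}]\,e^{-\frac{1}{2}t^T B_k t}\to\phi_{L_0}(t)$. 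Using $|e^{it^T Y_k}|=1$, it therefore suffices to prove $\Exp|U_k(t)-e^{-\frac{1}{2}t^T B_k t}|\to 0$.

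Set $V_k:=\sum_{i=1}^{N_k}\Exp(\xi_{ki}\xi_{ki}^T\mid\mathcal{F}_{k,i-1})$. I would split the error as
\begin{equation*}
\Exp|U_k(t)-e^{-\frac{1}{2}t^T B_k t}|\le \Exp|U_k(t)-\Exp(e^{-\frac{1}{2}t^T V_k t}\mid\mathcal{F}_{k0})| + \Exp|e^{-\frac{1}{2}t^T V_k t}-e^{-\frac{1}{2}t^T B_k t}|.
\end{equation*}
The second term is handled by condition~\ref{cd2}: the elementary bound $|e^{-a}-e^{-b}|\le|a-b|$ for $a,b\ge 0$ together with Cauchy-Schwarz yields $\Exp|e^{-\frac{1}{2}t^T V_k t}-e^{-\frac{1}{2}t^T B_k t}|\le\tfrac{1}{2}\|t\|^2\,\Exp\|V_k-B_k\|\to 0$. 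The bulk of the work is to show the first term vanishes, which is a conditional martingale CLT.

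For the conditional CLT, I would use the classical telescoping-plus-Taylor technique: with $T_j:=\sum_{i\le j}\xi_{ki}$ and $\Delta V_j:=\Exp(\xi_{kj}\xi_{kj}^T\mid\mathcal{F}_{k,j-1})$, expand
\begin{equation*}
e^{it^T T_{N_k}+\frac{1}{2}t^T V_{N_k} t}-1=\sum_{j=1}^{N_k} e^{it^T T_{j-1}+\frac{1}{2}t^T V_{j-1} t}\bigl(e^{it^T\xi_{kj}+\frac{1}{2}t^T\Delta V_j t}-1\bigr).
\end{equation*}
A third-order Taylor expansion of the inner factor, combined with $\Exp(\xi_{kj}\mid\mathcal{F}_{k,j-1})=0$ and $\Exp((t^T\xi_{kj})^2\mid\mathcal{F}_{k,j-1})=t^T\Delta V_j t$, makes the linear and quadratic terms cancel in conditional expectation, leaving a remainder of conditional order $O(\|\xi_{kj}\|^3+\|\Delta V_j\|^2)$. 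Condition~\ref{cd1} controls $\sum_j\Exp\|\xi_{kj}\|^3$ via Cauchy-Schwarz, using that $\sum_j\Exp\|\xi_{kj}\|^2=\tr\Exp V_k$ is bounded thanks to condition~\ref{cd2} and $\sup_k\lambda_{\max}(B_k)<\infty$; the $\|\Delta V_j\|^2$ terms are controlled analogously by noting that $\Exp\|\Delta V_j\|^2\le\Exp[\|\xi_{kj}\|^4]$.

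The main obstacle will be accumulation control in the telescoping step, because the prefactors $e^{\frac{1}{2}t^T V_{j-1}t}$ can grow along the sum. To contain this I would introduce a stopping time $\tau_k:=\inf\{j:\tr V_j>M\}$ for $M$ just above $\sup_k\tr B_k$, showing that the stopped martingale handles the bulk and that $P(\tau_k<N_k)\to 0$ by condition~\ref{cd2}. Combined with the elementary bound $|e^{ix}-1-ix+\tfrac{1}{2}x^2|\le\tfrac{1}{6}|x|^3$ for real $x$ (applied componentwise), this makes all Taylor remainders summable to $o(1)$ in $L^1$, yielding the required conditional estimate. Once the first term is shown to tend to zero, the reduction in the first paragraph closes the proof.
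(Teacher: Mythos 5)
The paper does not actually prove this lemma: immediately after the statement it writes ``We refer to Lemma 5 of \cite{Zhang2020optimal}'', so the result is imported wholesale and there is no in-paper argument to compare against. Your proposal supplies a genuine, essentially self-contained proof, and its architecture is the classical characteristic-function proof of the martingale CLT (McLeish/Billingsley style) adapted to carry the $\mathcal{F}_{k0}$-measurable shift $Y_k$: reduce to characteristic functions via L\'evy continuity, peel off $Y_k$ by the tower property using $|e^{it^TY_k}|=1$, swap the random quadratic variation $V_k$ for the deterministic $B_k$ via condition~2, and kill the core term $\Exp(e^{it^TS_k+\frac{1}{2}t^TV_kt}-1\mid\mathcal{F}_{k0})$ by telescoping, third-order Taylor expansion, and the moment bounds $\sum_j\Exp\|\xi_{kj}\|^3\le(\Exp\,\tr V_k)^{1/2}(\sum_j\Exp\|\xi_{kj}\|^4)^{1/2}$ and $\Exp\|\Delta V_j\|^2\le\Exp\|\xi_{kj}\|^4$, both of which are correct. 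This is a legitimate and arguably preferable route, since it makes the supplement self-contained where the paper relies on an external reference.

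Two technical points need tightening before this is a complete proof. First, in your decomposition the intermediate quantity $\Exp(e^{-\frac{1}{2}t^TV_kt}\mid\mathcal{F}_{k0})$ is awkward: to connect it to the telescoping identity you would write $e^{it^TS_k}-e^{-\frac{1}{2}t^TV_kt}=e^{-\frac{1}{2}t^TV_kt}(e^{it^TS_k+\frac{1}{2}t^TV_kt}-1)$, but the prefactor $e^{-\frac{1}{2}t^TV_kt}$ is random and depends on the whole path, so it cannot be pulled through the conditional expectations without destroying the martingale cancellation $\Exp(\xi_{kj}\mid\mathcal{F}_{k,j-1})=0$. The standard fix is to route through the deterministic constant $e^{-\frac{1}{2}t^TB_kt}$ instead, using condition~2 (and boundedness after stopping) to justify the swap $V_k\leftrightarrow B_k$; all the ingredients are already in your outline, but the order of operations must be rearranged. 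Second, your stopping time $\tau_k=\inf\{j:\tr V_j>M\}$ bounds $V_{j-1}$ on $\{j\le\tau_k\}$ but not the final increment $\Delta V_{\tau_k}$, so the factor $e^{\frac{1}{2}t^TV_jt}=e^{\frac{1}{2}t^TV_{j-1}t}e^{\frac{1}{2}t^T\Delta V_jt}$ appearing in the Taylor remainder at the stopping step is not automatically bounded; this is repaired by also stopping when $\|\Delta V_j\|$ exceeds a threshold, noting $\Pr(\max_j\|\Delta V_j\|>\varepsilon)\le\varepsilon^{-2}\sum_j\Exp\|\Delta V_j\|^2\to0$ by condition~1. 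With these standard adjustments the argument closes.
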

We refer to Lemma 5 of \cite{Zhang2020optimal}. Then we can present the
proof.%
\begin{proof}[\textbf{Proof for the case of $\rho\in[0,1)$}]
  We know that
  \begin{equation*}
    \sqrt{r}\Psi_{\ruw}^{*\mathrm{m}}(\beta_{0})=\frac{1}{\sqrt{r}}\sumr M_{n,i}+\sqrt{r}Q.
  \end{equation*}
  Denote
  \begin{equation*}
    B_{k}=m\Gamma=\Exp\left\{\sqrt{b^{''}(X^{T}\beta_{0})}\|L\Phi^{-1}X\|\right\}\Exp\left[\left\{b^{''}(X^{T}\beta_{0})\right\}^{\frac{3}{2}}\|L\Phi^{-1}X\| XX^{T}\right].
  \end{equation*}
   In \eqref{eq:2}, we have already proved that
  \begin{align*}
    \sumr\Exp\left(\xi_{n,i}\xi_{n,i}^{T}\vert\mathcal{F}_{n,i-1}\right)\xrightarrow{p}m\Gamma.
  \end{align*}
  If we can prove $\Exp\left(M_{n,1}M_{n,1}^{T}\vert\mathcal{F}_{n,0}\right)$ is
  $L^{2}$ uniformly integrable, then we can prove
  $\Exp\left(M_{n,1}M_{n,1}^{T}\vert\mathcal{F}_{n,0}\right)\xrightarrow{L^{2}}m\Gamma$,
  which is condition \ref{cd2}. Since
  \begin{equation*}
    \Exp\left\{\left\|\Exp\left( M_{n,1}M_{n,1}^{T}\vert\mathcal{F}_{n,0}\right)\right\|^{2+\delta}\right\}\leq\Exp\left(\left\| M_{n,1}M_{n,1}^{T}\right\|^{2+\delta}\right)=\Exp\left(\left\| M_{n,1}\right\|^{4+2\delta}\right),
  \end{equation*}
  it is sufficient to show
    $\sup_{n}\Exp\left(\left\|
      M_{n,1}\right\|^{4+2\delta}\right)<\infty$. This is true because
  \begin{align*}
    &\left\{\Exp\left(\left\| M_{n,1}\right\|^{4+2\delta}\right)\right\}^{\frac{1}{4+2\delta}}\\
    &\leq\left(\Exp\left[\left\|\hat{m}\{b^{'}(X_{1}^{*{T}}\beta_{0})-Y_{1}^{*}\}X_{1}^{*}\right\|^{4+2\delta}\right]\right)^{\frac{1}{4+2\delta}}+\left(\Exp\left[\left\|\onen\sumjn\hat{w}_{j}\{b^{'}(X_{j}^{T}\beta_{0})-Y_{j}\}X_{j}\right\|^{4+2\delta}\right]\right)^{\frac{1}{4+2\delta}}\\
    &:=I_{1}^{\frac{1}{4+2\delta}}+I_{2}^{\frac{1}{4+2\delta}},
  \end{align*}
  and for $I_{1}$, we know
  \begin{align*}
    I_{1}&=\Exp\left[\left\|\hat{m}\{b^{'}(X_{1}^{*{T}}\beta_{0})-Y_{1}^{*}\}X_{1}^{*}\right\|^{4+2\delta}\right]\\
         &=\Exp\left(\Exp\left[\left\|\hat{m}\{b^{'}(X_{1}^{*{T}}\beta_{0})-Y_{1}^{*}\}X_{1}^{*}\right\|^{4+2\delta}\bigg{\vert}\mathcal{F}_{n,0}\right]\right)\\
         &=\Exp\left\{\hat{m}^{3+2\delta}\onen\sumjn\hat{w}_{j}\left\vert b^{'}(X_{j}^{T}\beta_{0})-Y_{j}\right|^{4+2\delta}\| X_{j}\|^{4+2\delta}\right\}\\
         &\lesssim\Exp\left[\left(\onen\sumjn\| X_{j}\|\right)^{3+2\delta}\left\{\onen\sumjn\left\vert b^{'}(X_{j}^{T}\beta_{0})-Y_{j}\right|^{4+2\delta}\| X_{j}\|^{5+2\delta}\right\}\right]\\
         &\leq\Exp\left[\left(\onen\sumn\| X_{i}\|^{3+2\delta}\right)\left\{\onen\sumjn\left\vert b^{'}(X_{j}^{T}\beta_{0})-Y_{j}\right|^{4+2\delta}\| X_{j}\|^{5+2\delta}\right\}\right]\\
         &=\Exp\left\{\frac{1}{n^{2}}\sumn\left\vert b^{'}(X_{j}^{T}\beta_{0})-Y_{j}\right|^{4+2\delta}\| X_{j}\|^{8+4\delta}+\frac{1}{n^{2}}\sum_{i\neq j}\| X_{i}\|^{3+2\delta}\left\vert b^{'}(X_{j}^{T}\beta_{0})-Y_{j}\right|^{4+2\delta}\| X_{j}\|^{5+2\delta}\right\}\\
         &=\onen\Exp\left\{\left\vert b^{'}(X^{T}\beta_{0})-Y\right|^{4+2\delta}\| X\|^{8+4\delta}\right\}+\frac{n(n-1)}{n^{2}}\Exp\left(\| X\|^{3+2\delta}\right)\Exp\left\{\left\vert b^{'}(X^{T}\beta_{0})-Y\right|^{4+2\delta}\| X\|^{5+2\delta}\right\},
  \end{align*}
  and for $I_{2}$, we know
  \begin{align*}
    I_{2}&\leq\onen\sumjn\Exp\left[\left\|\hat{w}_{j}\left\{b^{'}(X_{j}^{T}\beta_{0})-Y_{j}\right\}X_{j}\right\|^{4+2\delta}\right]\\
         &\lesssim\onen\sumjn\Exp\left\{\left| b^{'}(X_{j}^{T}\beta_{0})-Y_{j}\right\rvert^{4+2\delta}\left\| X_{j}\right\|^{8+4\delta}\right\}\\
         &=\Exp\left\{\left| b^{'}(X^{T}\beta_{0})-Y\right|^{4+2\delta}\left\| X\right\|^{8+4\delta}\right\}.
  \end{align*}
  Then, we have
  $\sup_{n}\Exp\left(\left\|
      M_{n,1}\right\|^{4+2\delta}\right)<\infty$. In addition, we also know
  that condition \ref{cd1} is guaranteed, since
  \begin{equation*}
    \sumr\Exp(\|\xi_{n,i}\|^{4})=\oner\Exp\left(\| M_{n,1}\|^{4}\right)\to0.
  \end{equation*}
  Now, we verify condition \ref{cd3}. We know that $Q$ is
  $\mathcal{F}_{n,0}$-measurable and
  \begin{align*}
    \sqrt{r}Q&=\sqrt{r}\onen\sumjn\sqrt{b^{''}(X_{j}^{T}\hbeta_{\rp})}\|L
               \hat{\Phi}_{\rp}^{-1}X_{j}\|\left\{b^{'}(X_{j}^{T}\beta_{0})-Y_{j}\right\}X_{j}\\
             &=\sqrt{\frac{r}{n}}\frac{1}{\sqrt{n}}\sumjn\sqrt{b^{''}(X_{j}^{T}\hbeta_{\rp})}\|L\hat{\Phi}_{\rp}^{-1}X_{j}\|\left\{b^{'}(X_{j}^{T}\beta_{0})-Y_{j}\right\}X_{j}.
  \end{align*}
  Since $r_{\rp}/\sqrt{n}\to0$, we can use the same deduction in
    the case of $\rho=0$ and know that the data points used to estimate
  $\hbeta_{\rp}$ and $\hat{\Phi}_{\rp}$ can be ignored. Therefore, we
  assume $\hbeta_{\rp}$ and $\hat{\Phi}_{\rp}$ are independent with $X_{j},
  Y_{j}$'s. Now we denote
  \begin{equation*}
    \tilde{\Omega}=\Exp\left[\left\{b^{''}(X^{T}\beta_{0})\right\}^{2}\|L\Phi^{-1}X\|^{2}l^{T}XX^{T}l\right],
  \end{equation*}
  for every $l\in\mathbb{R}^{p}$, and
  \begin{equation*}
    \tau_{n_{i}}=\tilde{\Omega}^{-\frac{1}{2}}\sqrt{b^{''}(X_{i}^{T}\hbeta_{\rp})}\|L\hat{\Phi}_{\rp}^{-1}X_{i}\|\left\{b^{'}(X_{i}^{T}\beta_{0})-Y_{i}\right\}l^{T}X_{i}.
  \end{equation*}
  Then, $\tau_{n_{i}}$'s are i.i.d. conditional on $\hbeta_{\rp}$
  and $\hat{\Phi}_{\rp}$. Thus, they are interchangeable due to Theorem
  7.3.2 of \cite{ChowTeicher2003}. We now apply the central limit theorem of
  interchangeable random variables in Theorem 2 of \cite{blum1958central}. We
  should verify the three conditions of the theorem. The first condition is
  trivial to verify, $\forall i\neq j$,
  \begin{equation*}
    \Exp(\tau_{n_{i}}\tau_{n_{j}})=\Exp\left\{\Exp(\tau_{n_{i}}\tau_{n_{j}}\vert\hbeta_{\rp},\hat{\Phi}_{\rp})\right\}=0.
  \end{equation*}
  Then, we show $\Exp(|\tau_{n_{i}}|^{3})=o(\sqrt{n})$ because
  \begin{align*}
    \Exp(|\tau_{n_{i}}|^{3})&=\tilde{\Omega}^{-\frac{3}{2}}\Exp\left[\left\{b^{''}(X_{i}^{T}\hbeta_{\rp})\right\}^{\frac{3}{2}}\|L\hat{\Phi}_{\rp}^{-1}X_{i}\|^{3}\left| b^{'}(X_{i}^{T}\beta_{0})-Y_{i}\right|^{3}(l^{T}X_{i})^{3}\right]\\
                           &\lesssim \tilde{\Omega}^{-\frac{3}{2}}\| l\|^{3}\Exp\left\{\left| b^{'}(X_{i}^{T}\beta_{0})-Y_{i}\right|^{3}\| X_{i}\|^{6}\right\}\\
                           &=o(\sqrt{n}).
  \end{align*}
  Next, we verify that $\forall i\neq j$,
  $\Exp\left(\tau_{n_{i}}^{2}\tau_{n_{j}}^{2}\right)\to1$. We can see
  \begin{align*}
    \tau_{n_{i}}^{2}\tau_{n_{j}}^{2}&=\tilde{\Omega}^{-2}\hat{w}_{i}^{2}\left| b^{'}(X_{i}^{T}\beta_{0})-Y_{i}\right|^{2}(l^{T}X_{i})^{2}\hat{w}_{j}^{2}\left| b^{'}(X_{j}^{T}\beta_{0})-Y_{j}\right|^{2}(l^{T}X_{j})^{2}\\
                                 &\lesssim\tilde{\Omega}^{-2}\| l\|^{4}\left| b^{'}(X_{i}^{T}\beta_{0})-Y_{i}\right|^{2}\| X_{i}\|^{4}\left| b^{'}(X_{j}^{T}\beta_{0})-Y_{j}\right|^{2}\| X_{j}\|^{4},
  \end{align*}
  and
  \begin{equation*}
    \Exp\left\{\left| b^{'}(X_{i}^{T}\beta_{0})-Y_{i}\right|^{2}\| X_{i}\|^{4}\left| b^{'}(X_{j}^{T}\beta_{0})-Y_{j}\right|^{2}\| X_{j}\|^{4}\right\}=\Exp\left\{b^{''}(X^{T}\beta_{0})\| X\|^{4}\right\}^{2}<\infty.
  \end{equation*}
  Therefore, by the dominating convergence theorem, we have
  \begin{align*}
    &\Exp\left(\tau_{n_{i}}^{2}\tau_{n_{j}}^{2}\right)\\
    &\to\Exp\bigg{\{}\tilde{\Omega}^{-2}b^{''}(X_{i}^{T}\beta_{0})\|L\Phi^{-1}X_{i}\|^{2}\left| b^{'}(X_{i}^{T}\beta_{0})-Y_{i}\right|^{2}(l^{T}X_{i})^{2}\times\\
    &\quad\quad\quad b^{''}(X_{j}^{T}\beta_{0})\|L\Phi^{-1}X_{j}\|^{2}\left| b^{'}(X_{j}^{T}\beta_{0})-Y_{j}\right|^{2}(l^{T}X_{j})^{2}\bigg{\}}\\
    &=1.
  \end{align*}
  We then use Theorem 2 of \cite{blum1958central} and obtain
  \begin{equation*}
    \frac{1}{\sqrt{n}}\sumn\tau_{n_{i}}\xrightarrow{d}N(0,1).
  \end{equation*}
  Thus, from Cram\'er-Wold device, we know that
  \begin{equation*}
    \sqrt{r}Q\xrightarrow{d}N(0,\rho \Omega).
  \end{equation*}
  We use $\phi_{X}(t)$ to denote the characteriatic function of random vector
  $X$, then
  \begin{equation*}
    \phi_{\sqrt{r}Q}(t)=\Exp e^{it^{T}\sqrt{r}Q}\to e^{-\frac{1}{2}t^{T}\rho\Omega t},
  \end{equation*}
  and therefore
  \begin{equation*}
    \phi_{\sqrt{r}Q}(t)\cdot\phi_{N(0,m\Gamma)}(t)=\Exp e^{it^{T}\sqrt{r}Q}e^{-\frac{1}{2}t^{T}(m\Gamma)t}\to e^{-\frac{1}{2}t^{T}(m\Gamma+\rho\Omega)t}.
  \end{equation*}
  Let $L_{0}=N(0,m\Gamma+\rho\Omega)$. We then have
  \begin{equation*}
    L\left(\sqrt{r}Q\right)*N(0,m\Gamma)\xrightarrow{d}L_{0},
  \end{equation*}
  which means condition \ref{cd3} is true. Therefore, we have already verified
  three conditions of Lemma \ref{lm:MultiMartingaleCLT} and know that
  \begin{equation*}
    \sqrt{r}\Psi_{\ruw}^{*\mathrm{m}}(\beta_{0})=\sumr\xi_{n,i}+\sqrt{r}Q\xrightarrow{d}N(0,m\Gamma+\rho\Omega).
  \end{equation*}
\end{proof}

\subsubsection{Proof of Theorem \ref{th:Normality}}\label{sec:Normality}

\begin{proof}[\textbf{Proof of Theorem \ref{th:Normality}}]
  We now show the asymptotic normality of $\hbeta_{\ruw}$. We know that the
  estimator $\hbeta_{\ruw}$ is defined as \eqref{eq:def_beta_uw}; therefore
  it is the minimizer of
  \begin{equation*}
    \lambda_{\ruw}^{*}(\beta)=\hat{m}\sumr\left\{b(X_{i}^{*{T}}\beta)-Y_{i}^{*}X_{i}^{*{T}}\beta\right\}.
  \end{equation*}
  Thus, $\sqrt{r}(\hbeta_{\ruw}-\beta_{0})$ is the minimizer of
  \begin{equation*}
    \gamma(s)=\lambda_{\ruw}^{*}(\beta_{0}+s/\sqrt{r})-\lambda_{\ruw}^{*}(\beta_{0}).
  \end{equation*}
  Applying Taylor's Theorem, we have
  \begin{align*}
    \gamma(s)&=\frac{1}{\sqrt{r}}s^{T}\dot{\lambda}_{\ruw}^{*}(\beta_{0})+\frac{1}{2r}s^{T}\Ddot{\lambda}_{\ruw}^{*}\left(\beta_{0}+\acute{s}/\sqrt{r}\right)s\\
             &=s^{T}\sqrt{r}\Psi_{\ruw}^{*\mathrm{m}}(\beta_{0})+\frac{1}{2}s^{T}\Dot{\Psi}_{\ruw}^{*\mathrm{m}}\left(\beta_{0}+\acute{s}/\sqrt{r}\right)s\\
             &=s^{T}\sqrt{r}\Psi_{\ruw}^{*\mathrm{m}}(\beta_{0})+\frac{1}{2}s^{T}\left\{\Gamma\left(\beta_{0}\right)+o_{p}(1)\right\}s\\
             &=s^{T}\sqrt{r}\Psi_{\ruw}^{*\mathrm{m}}(\beta_{0})+\frac{1}{2}s^{T}\Gamma s+o_{p}(\| s\|^{2}).
  \end{align*}
  The third equation is due to Lemma \ref{lm:Convergence_of_Gamma} and
  $\beta_{0}+\acute{s}/\sqrt{r}\xrightarrow{p}\beta_{0}$. Then we apply the
  Basic Corollary in page 2 of \cite{hjort2011asymptotics} and obtain
  \begin{equation*}
    \sqrt{r}(\hbeta_{\ruw}-\beta_{0})=-\Gamma^{-1}\sqrt{r}\Psi_{\ruw}^{*\mathrm{m}}(\beta_{0})+o_{p}(1).
  \end{equation*}
  Now, due to Lemma \ref{lm:Normality_of_Psi} and Slutsky's theorem, we have
  \begin{equation*}
    \sqrt{r}(\hbeta_{\ruw}-\beta_{0})\xrightarrow{d}N(0,\Sigma_{\ruw}^{\rho}),
  \end{equation*}
  under the conditions of Lemma \ref{lm:Normality_of_Psi}. This completes the
  proof.
\end{proof}

\subsection{Proofs of Efficency Comparison}\label{sec:proof_effi}

In this section, we present the technique details in Section
\ref{sec:efficiency}. The proof of Equation \eqref{eq:weightedvariance} is
presented in Section \ref{sec:eq_weightedvar}. The proof of Theorem
\ref{th:Efficiency} is presented in Section \ref{sec:th_effi}.

\subsubsection{Proof of Equation
  \eqref{eq:weightedvariance}}\label{sec:eq_weightedvar}

\begin{proof}[\textbf{Proof of Equation \eqref{eq:weightedvariance}}]
  Inserting in the optimal probability defined in \eqref{eq:pi}, we obtain that
  \begin{align*}
    &\Exp\left[\frac{1}{n^{2}}\sumn b^{''}(X_{i}^{T}\beta_{0})X_{i}X_{i}^{T}\left\{\frac{1}{r\pi_{i}}-\oner+1\right\}\right]\\
    &=\Exp\left\{\frac{1}{n^{2}}\sumn b^{''}(X_{i}^{T}\beta_{0})X_{i}X_{i}^{T}\oner\frac{\sumjn\sqrt{b^{''}(X_{j}^{T}\beta_{0})}\left\|L\Phi^{-1}X_{j}\right\|}{\sqrt{b^{''}(X_{i}^{T}\beta_{0})}\left\|L\Phi^{-1}X_{i}\right\|}\right\}-\Exp\left\{\frac{1}{rn^{2}}\sumn b^{''}(X_{i}^{T}\beta_{0})X_{i}X_{i}^{T}\right\}\\
    &\quad + \Exp\left\{\frac{1}{n^{2}}\sumn b^{''}(X_{i}^{T}\beta_{0})X_{i}X_{i}^{T}\right\}\\
    &=\oner\Exp\left\{\onen\sumjn\sqrt{b^{''}(X_{j}^{T}\beta_{0})}\left\|L\Phi^{-1}X_{j}\right\|\onen\sumn\frac{b^{''}(X_{i}^{T}\beta_{0})X_{i}X_{i}^{T}}{\sqrt{b^{''}(X_{i}^{T}\beta_{0})}\left\|L\Phi^{-1}X_{i}\right\|}\right\}-\frac{1}{rn}\Phi+\onen\Phi\\
    &=\frac{1}{rn^{2}}\sumn\Exp\left\{b^{''}(X_{i}^{T}\beta_{0})X_{i}X_{i}^{T}\right\}+\frac{1}{rn^{2}}\sum_{i\neq j}\Exp\left\{\sqrt{b^{''}(X_{i}^{T}\beta_{0})}\left\|L\Phi^{-1}X_{i}\right\| \frac{b^{''}(X_{j}^{T}\beta_{0})X_{j}X_{j}^{T}}{\sqrt{b^{''}(X_{j}^{T}\beta_{0})}\left\|L\Phi^{-1}X_{j}\right\|}\right\}\\
    &\quad -\frac{1}{rn}\Phi+\onen\Phi\\
    &=\frac{1}{rn}\Phi+\oner\frac{n(n-1)}{n^{2}}m\Lambda-\frac{1}{rn}\Phi+\onen\Phi\\
    &=\oner\frac{n(n-1)}{n^{2}}m\Lambda+\onen\Phi=\oner\frac{n-1}{n}m\Lambda+\onen\Phi.
  \end{align*}
  This completes the proof
\end{proof}

\subsubsection{Proof of Theorem \ref{th:Efficiency}}\label{sec:th_effi}

\begin{proof}[\textbf{Proof of Theorem \ref{th:Efficiency}}]
  First, we prove that $\Gamma^{-1}\leq\Phi^{-1}\Lambda\Phi^{-1}$.
  Denote $\mathbf{v}=\sqrt{b^{''}(X^{T}\beta_{0})}X$ and
  $h=\sqrt{b^{''}(X^{T}\beta_{0})}\|L\Phi^{-1}X\|$. We then only
  need to
  prove
  $$\Exp(h\mathbf{v}\mathbf{v}^{T})^{-1}\leq\Exp(\mathbf{v}\mathbf{v}^{T})^{-1}\Exp(h^{-1}\mathbf{v}\mathbf{v}^{T})\Exp(\mathbf{v}\mathbf{v}^{T})^{-1}.$$
  Denote
  \begin{equation*}
    \mathbf{f}:=\sqrt{h}\Exp(h\mathbf{v}\mathbf{v}^{T})^{-1}\mathbf{v}-\frac{1}{\sqrt{h}}\Exp(\mathbf{v}\mathbf{v}^{T})^{-1}\mathbf{v}.
  \end{equation*}
  Since $\mathbf{f}\mathbf{f}^{T}\geq0$, we have
  \begin{equation*}
    0\leq\Exp(\mathbf{f}\mathbf{f}^{T})=\Exp(\mathbf{v}\mathbf{v}^{T})^{-1}\Exp(h^{-1}\mathbf{v}\mathbf{v}^{T})\Exp(\mathbf{v}\mathbf{v}^{T})^{-1}-\Exp(h\mathbf{v}\mathbf{v}^{T})^{-1}.
  \end{equation*}
  Next, we prove $\Phi^{-1}\leq\Gamma^{-1}\Omega\Gamma^{-1}$.
  It is straightforward to verify that
  $\Phi=\Exp(\mathbf{v}\mathbf{v}^{T})$ and
  $\Gamma^{-1}\Omega\Gamma^{-1}=\Exp(h\mathbf{v}\mathbf{v}^{T})^{-1}\Exp(h^{2}\mathbf{v}\mathbf{v}^{T})\Exp(h\mathbf{v}\mathbf{v}^{T})^{-1}$. Denote
  \begin{equation*}
    \mathbf{g}:=\Exp(\mathbf{v}\mathbf{v}^{T})^{-1}\mathbf{v}-h\Exp(h\mathbf{v}\mathbf{v}^{T})^{-1}\mathbf{v}.
  \end{equation*}
  Since $\mathbf{g}\mathbf{g}^{T}\geq0$, we have
  \begin{equation*}
    0\leq\Exp(\mathbf{g}\mathbf{g}^{T})\leq\Exp(h\mathbf{v}\mathbf{v}^{T})^{-1}\Exp(h^{2}\mathbf{v}\mathbf{v}^{T})\Exp(h\mathbf{v}\mathbf{v}^{T})^{-1}-\Exp(\mathbf{v}\mathbf{v}^{T})^{-1}.
  \end{equation*}
  This completes the proof.
\end{proof}

  \subsection{Additional numerical experiments}\label{sec:add_expriments}
  In this section, we present additional numerical experiment results.
  \subsubsection{Numerical results using L-optimality criterion under Linear
    regression}\label{sec:Lop_linear}
  We now present the numerical results for linear model when using sampling
  probabilities obtained under L-optimality criterion,
  $\pi_i^{\mvc}$. In this scenario, the subsampling probabilities in
  Algorithm \ref{alg:algolinear} are
  \begin{equation*}
    \pi_{i}^{\mvc}=\frac{\left\| X_{i}\right\|}{\sum_{k=1}^{n}\left\| X_{k}\right\|}.
  \end{equation*}
  We use the same settings as those  in Section~\ref{sec:simulation} for linear models to
  investigate the performance of $\pi_i^{\mvc}$. The results are
  presented in Figure \ref{fig:unconditionMSE_linear_L_optimal}. For the T1 case, we
  use a trimmed mean with $\alpha=0.05$ to calculate the empirical MSEs. From
  Figure~\ref{fig:unconditionMSE_linear_L_optimal}, the
  results are similar to those in Section~\ref{sec:simulation} for
  A-optimality criterion. The unweighted estimator outperforms the weighted
  estimator for all the cases as well when we use $\pi_i^{\mvc}$.
  \begin{figure}[H]
    \centering
    \begin{subfigure}{0.48\textwidth}
      \includegraphics[width=\textwidth]{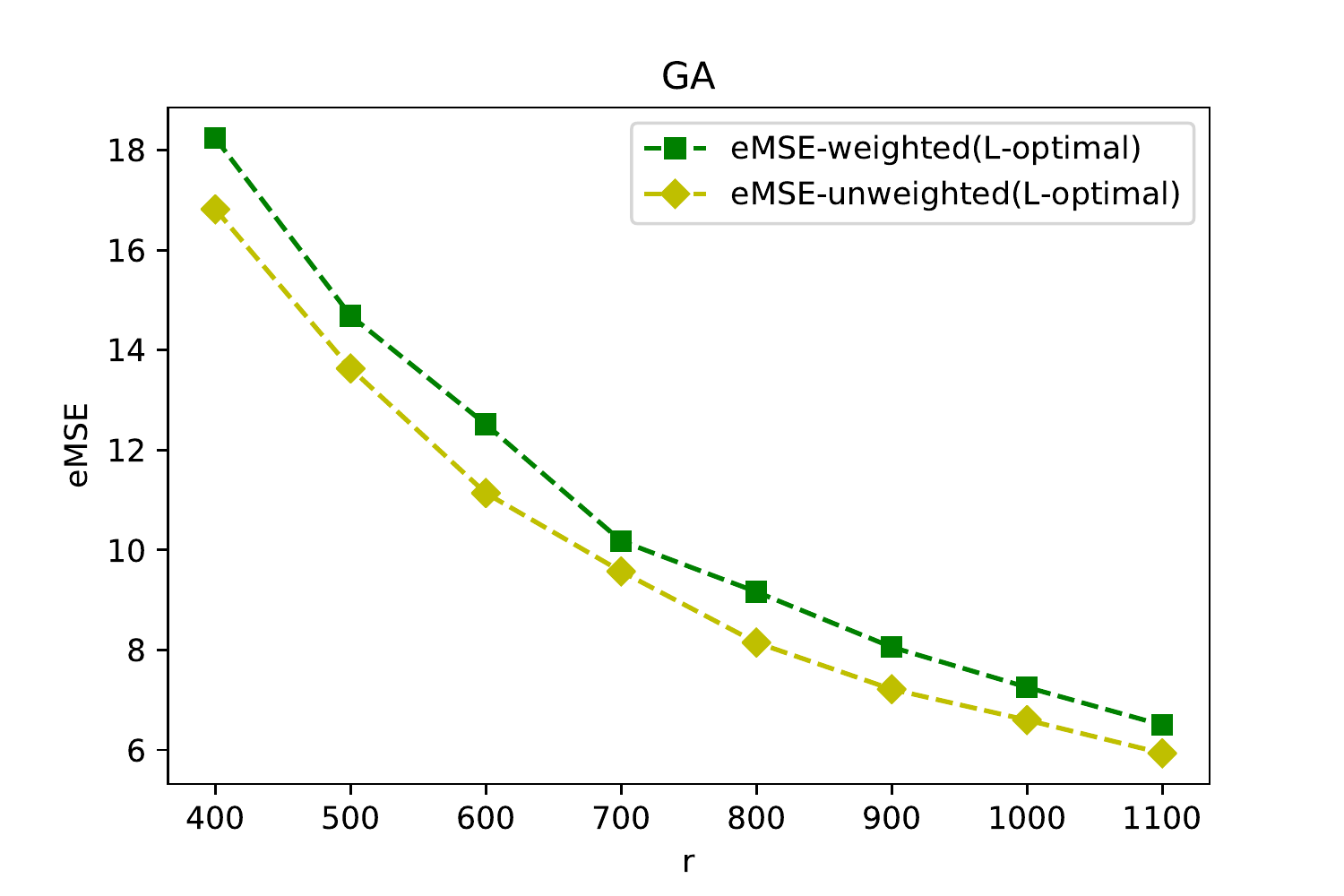}
      \caption{GA}
    \end{subfigure}
    \begin{subfigure}{0.48\textwidth}
      \includegraphics[width=\textwidth]{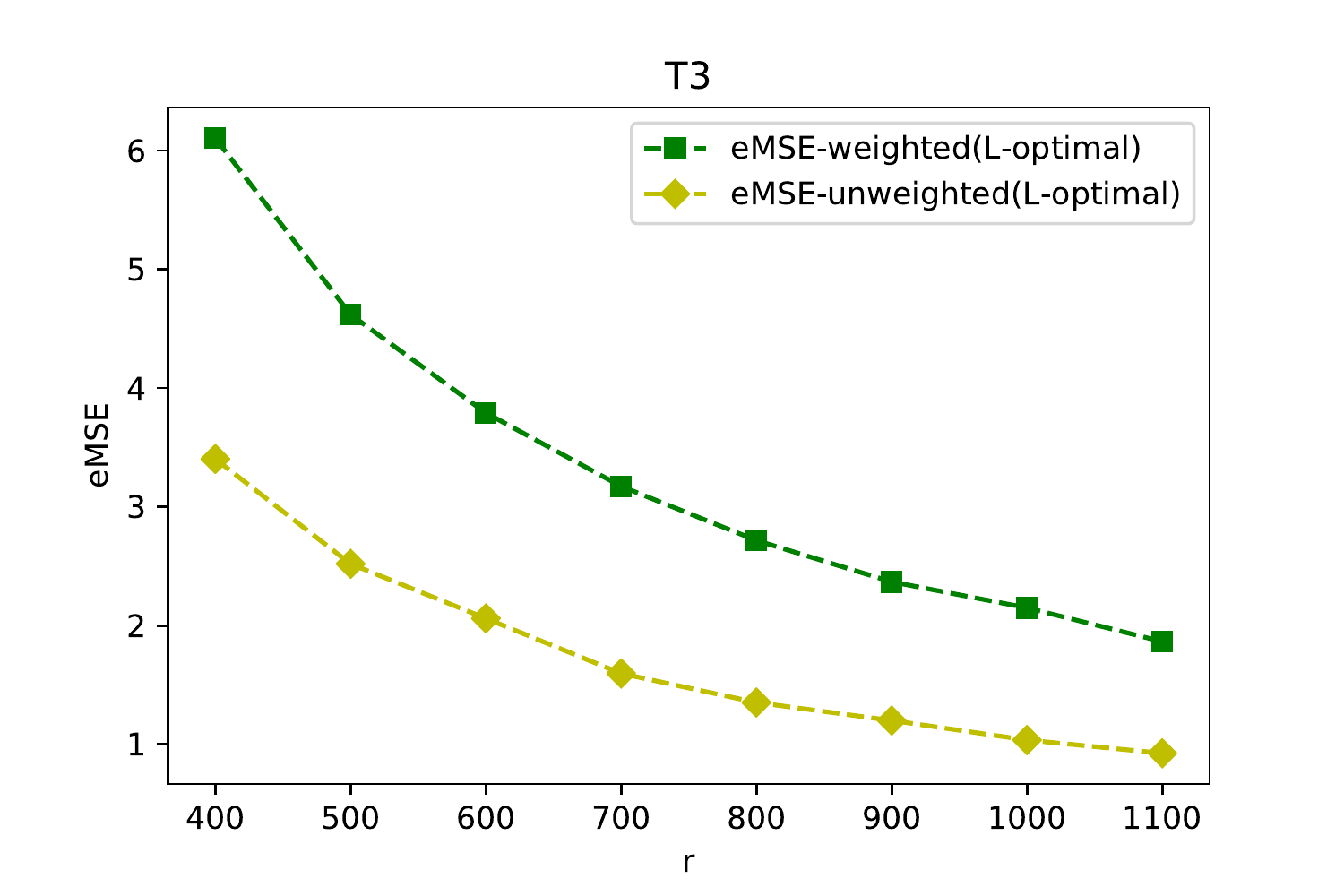}
      \caption{T3}
    \end{subfigure}
    \begin{subfigure}{0.48\textwidth}
      \includegraphics[width=\textwidth]{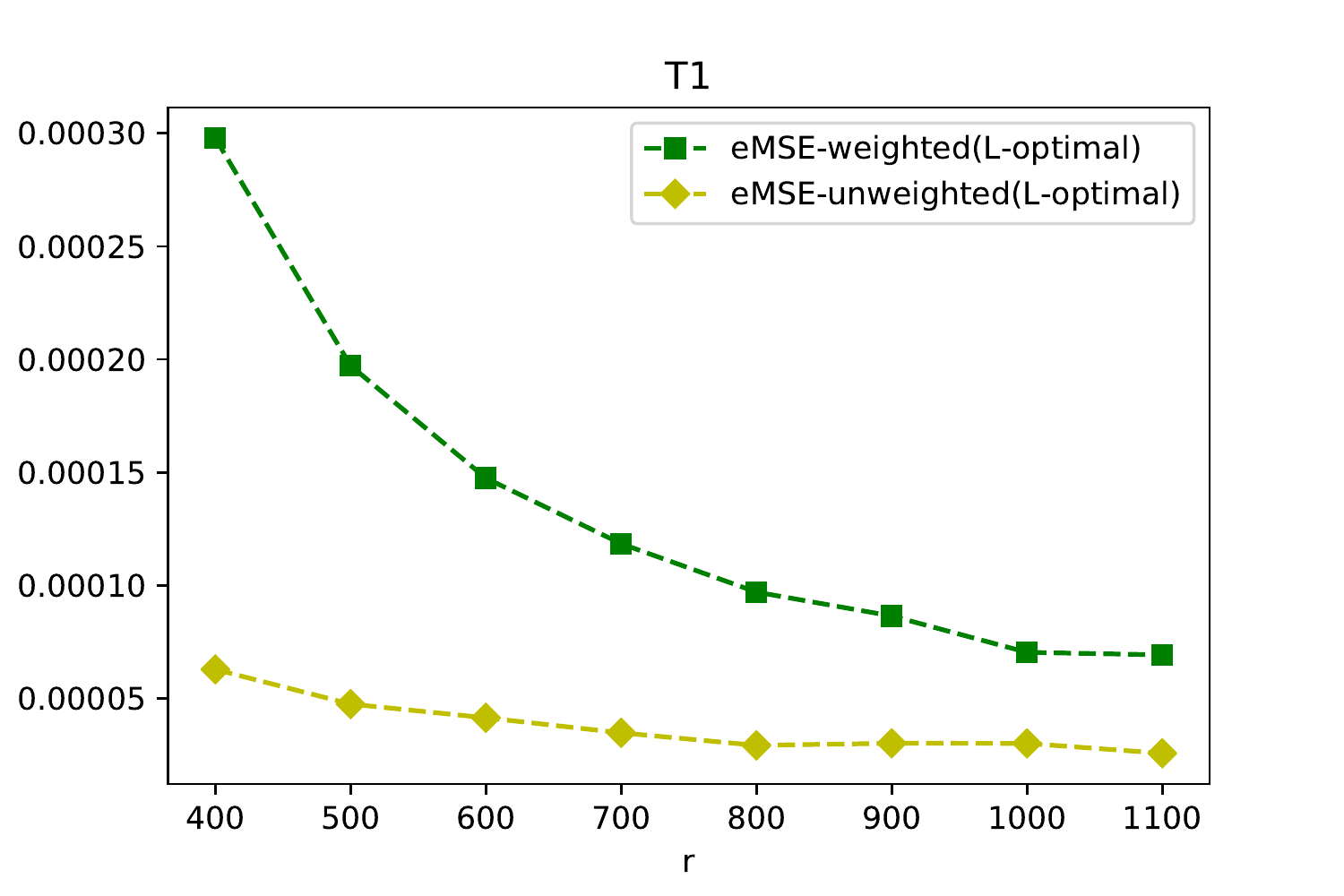}
      \caption{T1}
    \end{subfigure}
    \begin{subfigure}{0.48\textwidth}
      \includegraphics[width=\textwidth]{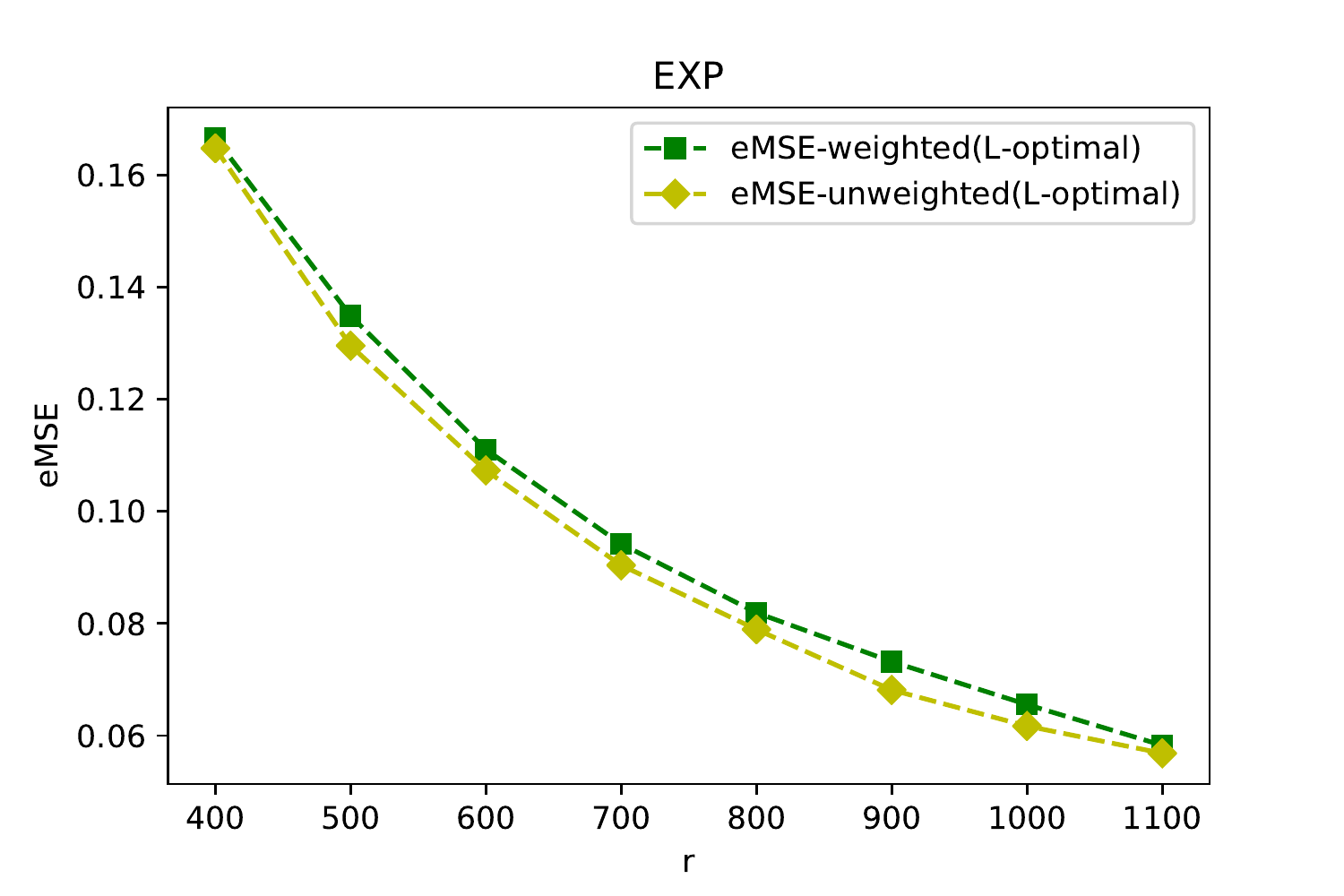}
      \caption{EXP}
    \end{subfigure}
    \caption{eMSE for different subsampe sizes $r$ for linear regression under
      different settings.}
    \label{fig:unconditionMSE_linear_L_optimal}
  \end{figure}

\subsubsection{Influence of the pilot estimation}\label{sec:pilot}
In this section, we use numerical experiments to evaluate the effect of the
pilot estimation method. We use the same settings as in Section
\ref{sec:simulation} for logistic regression and take $N=100000$. We set the
pilot subsample size to be $r_{\rp}=500$. To compare the performances of using
different methods to obtain the pilot estimates, we consider the simple random
sampling and the case-control sampling \citep{fithian2014local,
  wang2019more}. The case-control sampling method use the following probabilities:
\begin{equation*}
\pi_{0i}=\frac{c_0(1-y_i)+c_1y_i}{n}
\end{equation*}
to take data points, where $c_0$ and $c_1$ are constants to balance the
responses. We choose $c_0=\{ 2(1-p_m) \}^{-1}$ and $c_1=(2p_m)^{-1}$, where $p_m$
denotes the prior maginal probability $\Pr(y=1)$.
We present the results in Figure \ref{fig:unconditionMSE_logistic_pilot}.
It is seen that the eMSEs of the final subsample estimators obtained by using
simple random sampling and case-control sampling as the pilot estimation method are
similar for both A-optimal and L-optimal probabilities. This indicates that the
influence of the methods used to obtain the pilot estimates is not significant
for the setting considered.
\begin{figure}[H]
  \centering
  \begin{subfigure}{0.48\textwidth}
    \includegraphics[width=\textwidth]{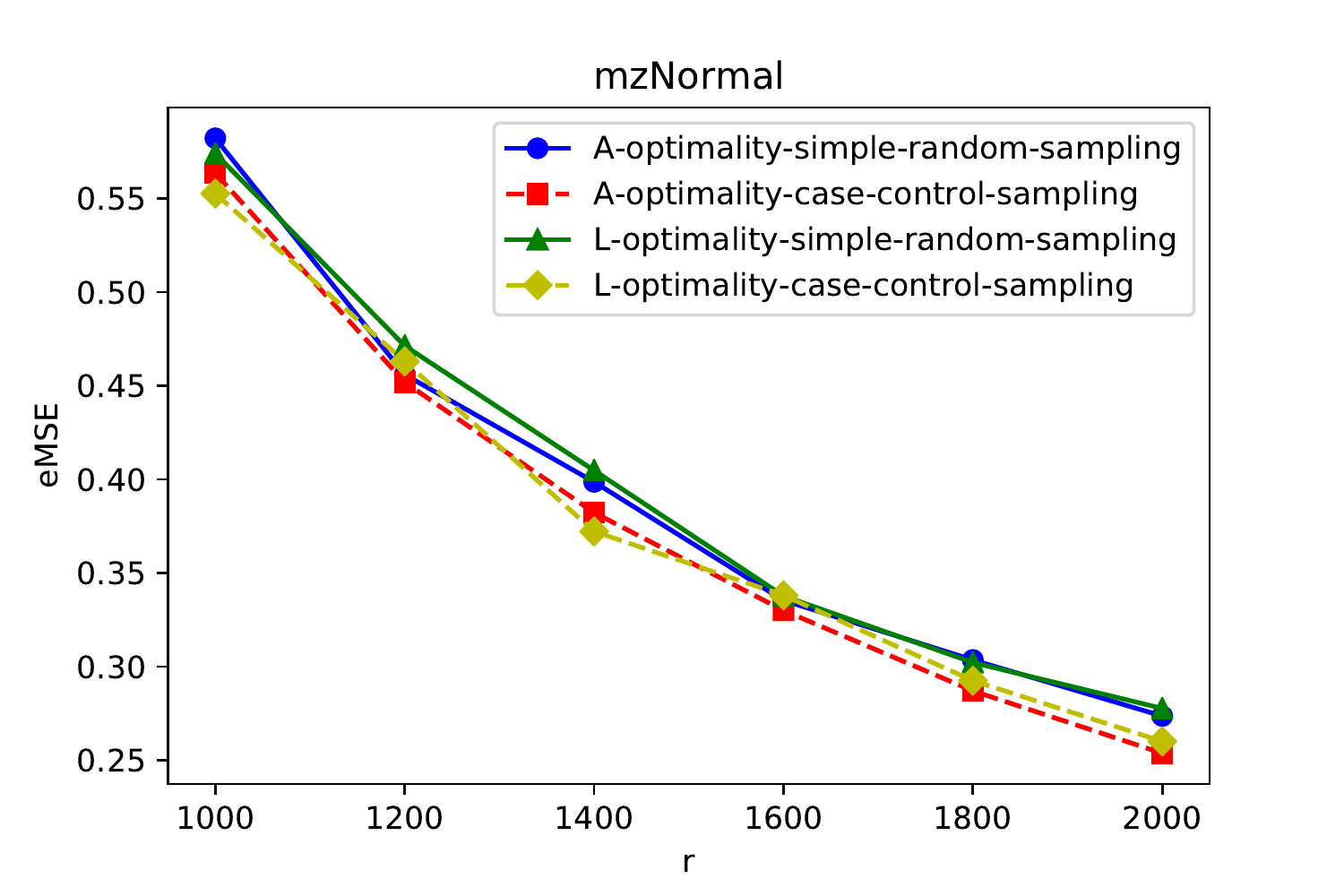}
    \caption{mzNormal}
  \end{subfigure}
  \begin{subfigure}{0.48\textwidth}
    \includegraphics[width=\textwidth]{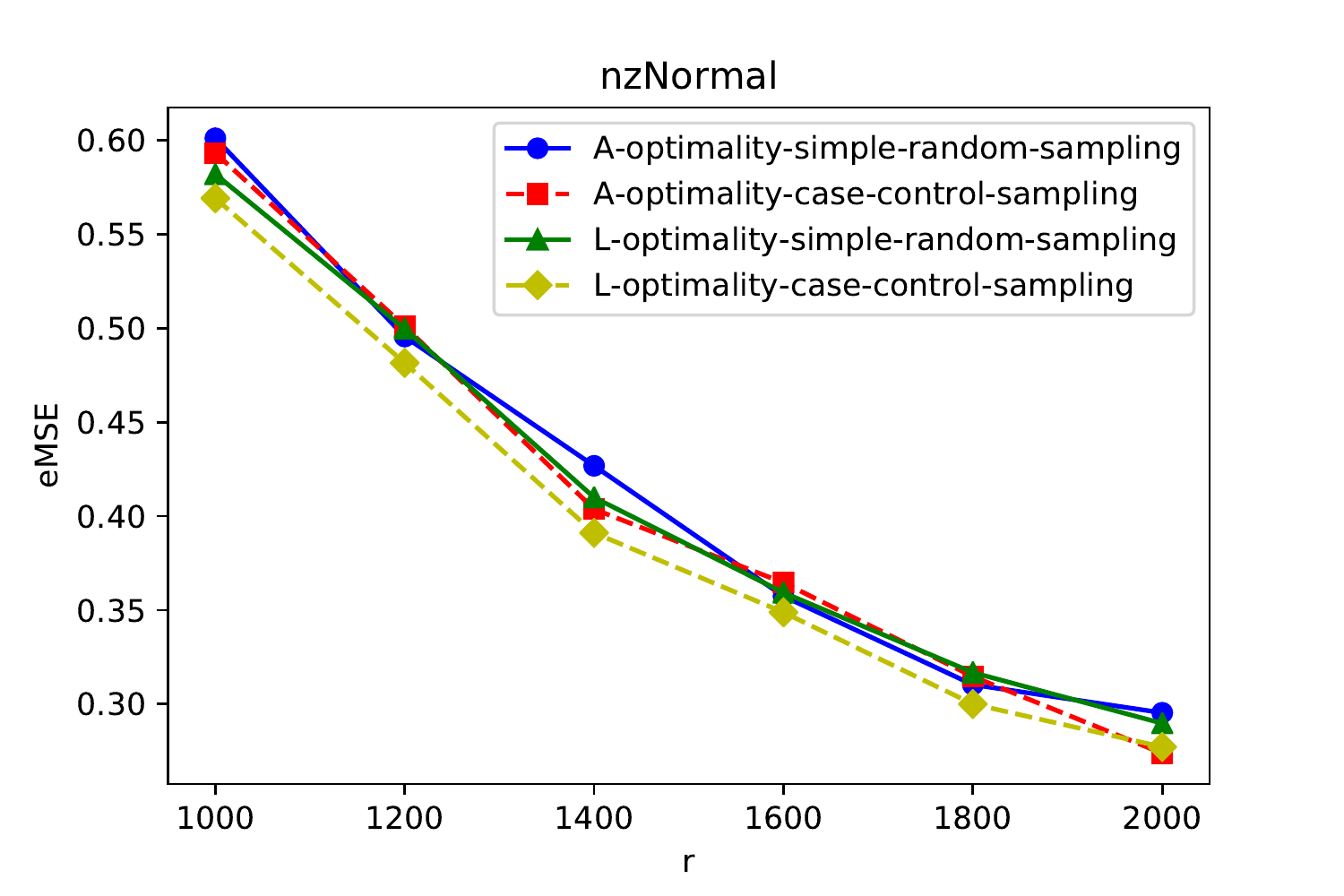}
    \caption{nzNormal}
  \end{subfigure}
  \begin{subfigure}{0.48\textwidth}
    \includegraphics[width=\textwidth]{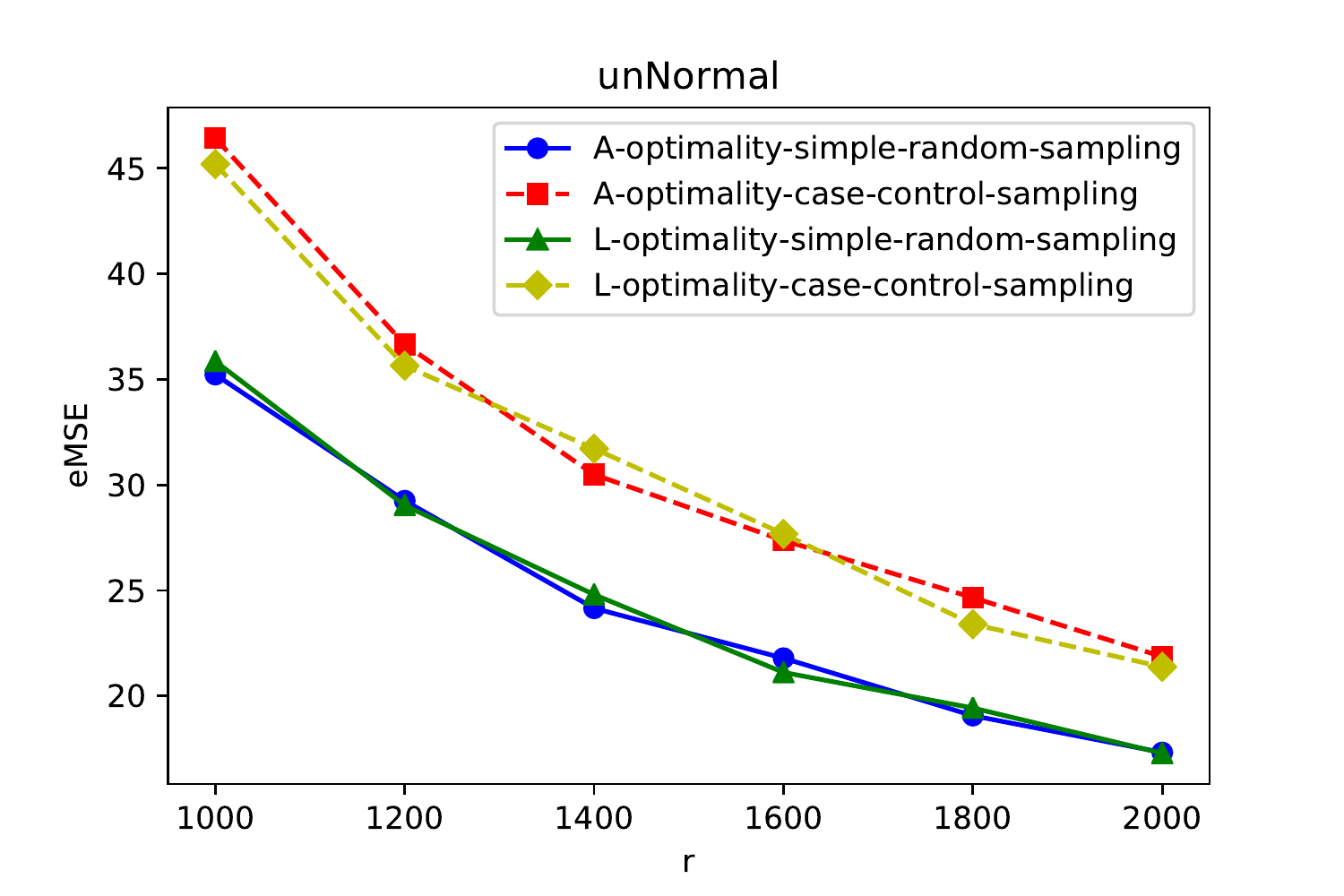}
    \caption{unNormal}
  \end{subfigure}
  \begin{subfigure}{0.48\textwidth}
    \includegraphics[width=\textwidth]{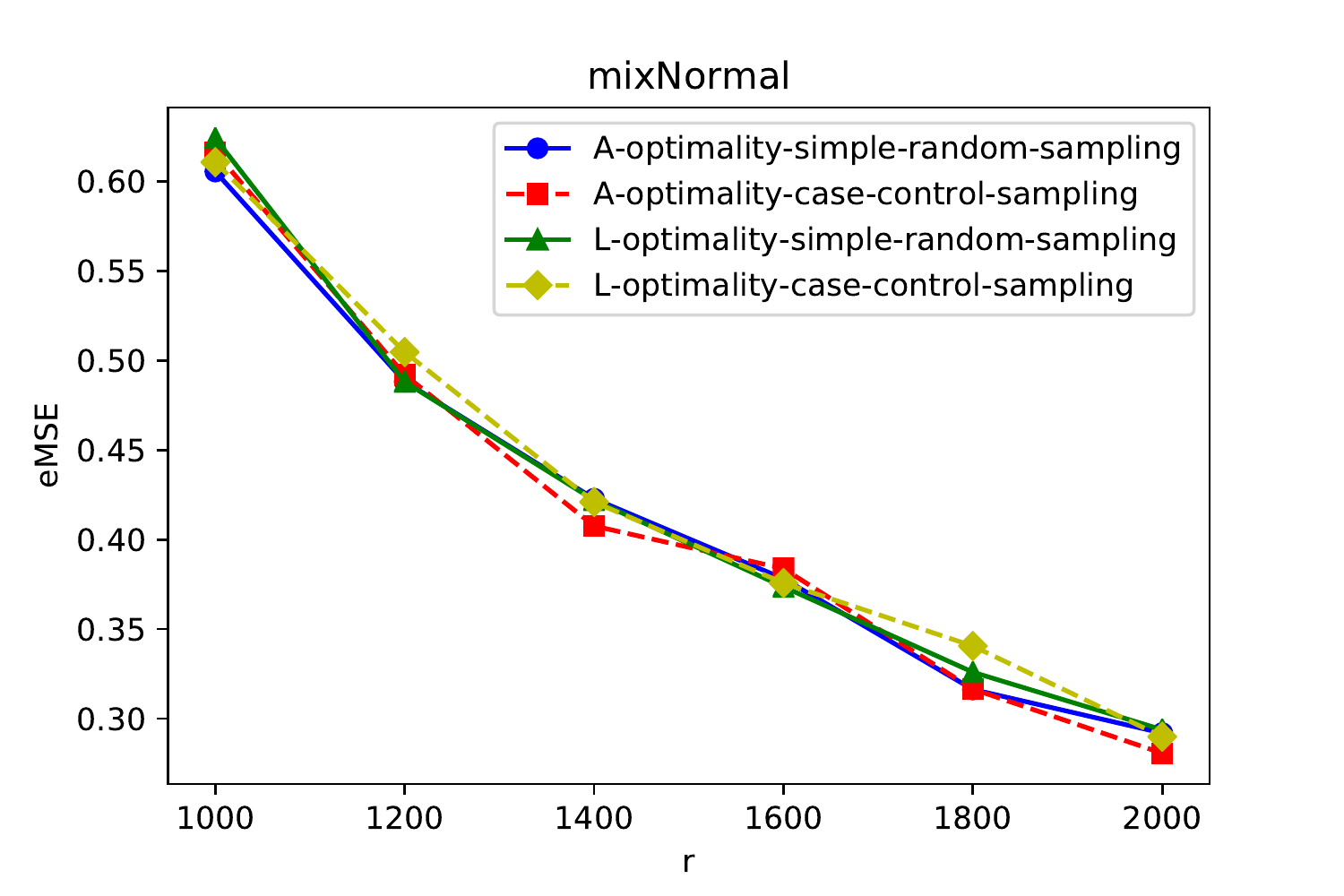}
    \caption{mixNormal}
  \end{subfigure}
  \caption{eMSE for different subsample sizes $r$ with a pilot sample size
    $r_{\rp}=500$ for logistic regression under different settings and
    different sampling methods for pilot estimation.}
  \label{fig:unconditionMSE_logistic_pilot}
\end{figure}

\bibliographystyle{myapa}
\bibliography{references}
\end{document}